\newcommand{\im}{\mathrm{i}}
\newcommand{\R}{\mathbb{R}}
\newcommand{\C}{\mathbb{C}}
\newcommand{\defeq}{\coloneqq}
\newcommand{\tens}{\otimes}
\DeclareMathOperator{\id}{\mathbb{I}}
\newcommand{\xd}{\mathrm{d}}
\newcommand{\xD}{\mathcal{D}}
\newcommand{\cH}{\mathcal{H}}
\newcommand{\coh}{\mathsf{K}}
\newcommand{\ncoh}{\mathsf{k}}
\newcommand{\one}{\mathbf{1}}
\newcommand{\cM}{\mathcal{M}}
\newcommand{\cB}{\mathcal{B}}
\newcommand{\cT}{\mathcal{T}}
\newcommand{\comp}{\diamond}
\newcommand{\cA}{\mathcal{A}}
\newcommand{\tord}{\mathbf{T}}
\newcommand{\Lint}{\mathrm{int}}
\newcommand{\Lext}{\mathrm{ext}}
\newcommand{\wuu}{w_{\uparrow\uparrow}}
\newcommand{\wdd}{w_{\downarrow\downarrow}}
\newcommand{\wud}{w_{\uparrow\downarrow}}
\newcommand{\wdu}{w_{\downarrow\uparrow}}
\newcommand{\wtuu}{\mathtt{w}_{\uparrow\uparrow}}
\newcommand{\wtud}{\mathtt{w}_{\uparrow\downarrow}}
\newcommand{\wtdu}{\mathtt{w}_{\downarrow\uparrow}}
\newcommand{\cP}{\mathcal{P}}
\newcommand{\discard}{\vcenter{\hbox{\includegraphics[width=1em]{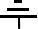}}}}
\newcommand{\vac}{\mathbf{0}}
\newcommand{\np}{\boxslash}
\newcommand{\lf}{\uparrow}
\newcommand{\lp}{\downarrow}
\newcommand{\cN}{\mathcal{N}}
\newcommand{\cI}{\mathcal{I}}
\newcommand{\sdd}{\Sigma_{\downarrow\downarrow}}
\newcommand{\sdu}{\Sigma_{\downarrow\uparrow}}
\newcommand{\sud}{\Sigma_{\uparrow\downarrow}}
\newcommand{\suu}{\Sigma_{\uparrow\uparrow}}
\newcommand{\wh}{r}
\newcommand{\wth}{\mathtt{r}}
\newcommand{\wta}{\mathtt{w}_{\mathrm{adv}}}
\newcommand{\wtr}{\mathtt{w}_{\mathrm{ret}}}
\renewcommand{\wr}{w_{\mathrm{ret}}}
\newcommand{\wa}{w_{\mathrm{adv}}}
\theoremstyle{definition}
\newtheorem{dfn}{Definition}[section]
\theoremstyle{plain}
\newtheorem{lem}[dfn]{Lemma}
\newtheorem{prop}[dfn]{Proposition}
\newtheorem{thm}[dfn]{Theorem}
\begin{document}


\begin{titlepage}
\title{\textbf{Causal measurement in quantum field theory: spacetime}}
\author{Robert Oeckl\footnote{email: robert@matmor.unam.mx}}


\affil{Centro de Ciencias Matemáticas, \\
Universidad Nacional Autónoma de México, \\
C.P.~58190, Morelia, Michoacán, Mexico}

\date{UNAM-CCM-2025-2\\ 9 November 2025\\ 2 April 2026 (v2)}

\maketitle

\vspace{\stretch{1}}

\begin{abstract}

We provide a framework and explicit construction for the regularized measurement of a large class of spacetime-localized observables in bosonic quantum field theory. The measurements fully satisfy relativistic causality and causal transparency, i.e., avoid unphysical superluminal signaling. We show explicitly how the measurement of time-extended observables back-reacts on itself and induces correlations between other measurements in its causal future. Our framework is fully compositional in spacetime and extends previous results on the measurement of instantaneous observables.
\end{abstract}

\vspace{\stretch{1}}
\end{titlepage}



\section{Introduction}
\label{sec:intro}

A notion of measurement has been an integral part of the standard formulation of quantum theory from the very beginning. This notion is non-relativistic in the sense that measurements are instantaneous and a priori not localized in space. It is only with quantum field theory (QFT) that space and special relativity become a manifest part of the formalism of quantum theory. However, the development of a notion of quantum measurement fully adapted to this context has faced important obstacles.

A first issue is the construction of measurements that are localized in space as well as in time. Superficially, the answer seems obvious, as point-localized field operators of the form $\hat{\phi}(t,x)$ have been part of QFT from the beginning. An operator is indeed useful for describing a single measurement, or even a joint measurement, where all points are relatively spacelike localized \cite{ArHa:collisionlocalobs}. However, we need more information if we want to consider a general sequence of localized measurements. In particular, we would need to know the eigenvalues and eigenspaces of the operator, i.e., its spectral decomposition. This is problematic due to the highly singular nature of point-localized field operators. On the one hand, a smearing in space (or spacetime) is necessary to obtain a well-defined operator on Hilbert space. On the other hand, even then, due to the unbounded nature of this operator, the spectral decomposition is described not by ordinary eigenspaces, but by a \emph{spectral measure}.
Surprisingly, explicit expressions for the spectral measure of field operators have been worked out only recently \cite{Oe:spectral}.

Rather than measuring a field operator itself one may attempt to use it to construct a well-defined quantum operation in a more indirect way. A particularly relevant case is that of a Gaussian quantum operation of the form
\begin{equation}
   \sigma\mapsto
   \exp\left(-\alpha\left(\hat{\phi}(t,x)-q\right)^2\right)
   \sigma
   \exp\left(-\alpha\left(\hat{\phi}(t,x)-q\right)^2\right) .
   \label{eq:gop}
\end{equation}
This can be interpreted as encoding an approximate measurement of whether the field $\phi$ at $(t,x)$ has the value $q$, with $\alpha$ controlling the approximation. With a non-relativistic position operator instead of a field operator such Gaussian operations have been considered as approximate measurements of particle positions since the 1980s \cite{BaLaPr:marcocontobsqm,GhRiWe:micromacro}. The use of a field operator in this context has been suggested in \cite{BaLaPr:opvalstochastic}.

A second issue is the requirement that measurements should respect the causal structure of spacetime in not permitting superluminal signaling. We refer to this property as \emph{causal transparency}. Since the measurement process in QFT is inherited from non-relativistic quantum mechanics, it does not inherently conform to special relativity. In particular, Sorkin has shown in a seminal paper in 1993 that (a large class of) projective measurements violate causal transparency and are thus unphysical \cite{Sor:impossible}. Since projective measurements are ubiquitous and include those arising from the  spectral decomposition of observables, this has been seen as a serious impediment to a satisfactory theory of measurement in QFT. For a more refined analysis, see \cite{BoJuKe:impossible}. Recently, Jubb has shown that non-selective quantum operations corresponding to those of the type \eqref{eq:gop} (via suitable integration over $q$) satisfy causal transparency \cite{Jub:causalupdates}. Even more recently, explicit expressions for the spectral measure of field operators have been obtained \cite{Oe:spectral}. A regularization of the spectral measure leads to quantum operations generalizing \eqref{eq:gop}, allowing to measure field operators without inducing superluminal signaling. With this, a multi-observable quantum-operation-valued functional calculus is constructed, permitting the regularized measurement a large class of observables generated by field observables.

A third issue are the limitations of the operator formalism for the description of joint measurements in spacetime. A priori this allows only instantaneous measurements arranged on consecutive spacelike hypersurfaces. On the one hand we need to be able to describe measurements that are not only extended in space, but also in time. Progress has been made in the particular case of continuous measurements. We merely note here that the path integral has been identified as a suitable framework to describe these \cite{BaLaPr:marcocontobsqm}. On the other hand we want to consider joint measurements that are freely arranged in space and time. This requires a conceptually more flexible framework for quantum theory than the standard formulation. We shall adopt the \emph{positive formalism} to this end \cite{Oe:dmf,Oe:posfound}. The central object there is the notion of \emph{probe} that generalizes the notion of \emph{quantum operation} from the operator setting. Together with path integral quantization and the \emph{Schwinger-Keldysh formalism} \cite{Kel:diagramne} this provides a suitable toolbox for a fully spacetime compositional description of quantum measurement in QFT \cite{OeZa:lcmeasure}.

The aim of this paper is to provide a fully compositional construction of probes that encode the regularized joint measurement of a large class of spacetime extended observables. Technically this is achieved by promoting the operators and quantum operations arising from the regularized spectral decomposition of field operators from \cite{Oe:spectral} to probes in the spacetime setting of \cite{OeZa:lcmeasure}. The main result is that these probes are not only spacetime localized, but satisfy causal transparency (Theorem~\ref{thm:ctst}), generalizing the results of \cite{Oe:spectral}.

We emphasize that the scope of this paper is strictly limited to measurement at the fundamental level of the quantum formalism. In particular, we do not consider scenarios where 
the measurement apparatus is explicitly modeled as a separate ancilla system, which interacts unitarily with the system to be measured \cite{vne:mathgrundquant}. The outcome is then read out at a later time by a proper measurement on the ancilla system. There is a considerable literature on this approach to describing measurement in QFT. It can be roughly divided into two directions depending on whether the ancilla system is itself a field theory \cite{HeKr:opmeasureii,FeVe:qftlocalmeasure} or is a non-relativistic system, often modeled as an Unruh-DeWitt detector \cite{BiDa:qftcurved,PGGaMM:detectormeasurementqft}. A good recent review focused mostly on these approaches is \cite{PaFr:eliminatingimpossible}. Relating the present work to these approaches would be important and timely, but is out of scope here.

This paper is about bosonic quantum field theory. In some paragraphs, when this is clearly indicated, the more restricted setting of a real scalar field is considered. This restriction is useful due to its simplicity, while still capturing most of the relevant issues of locality, causality and compositionality. However, all results obtained and the formalisms they are expressed in apply to general bosonic fields.

An outline of this paper is as follows: We recall in Section~\ref{sec:loccaus} the problem of causality in non-relativistic and relativistic quantum measurement, with a particular emphasis on Sorkin's findings \cite{Sor:impossible}. In Section~\ref{sec:nrobs} we recall the standard description of the quantum measurement process based on quantum operations and observables, first in non-relativistic quantum mechanics and then in QFT, including in the latter case the latest results on causal transparency in the regularized measurement of observables \cite{Oe:spectral}. The main section of the paper is Section~\ref{sec:spacetime}. After assembling the relevant ingredients including amplitudes, correlation functions, propagators, spacetime and slice observables, quantum operations and probes in the framework of \cite{OeZa:lcmeasure}, the probes encoding regularized measurements of spacetime observables are constructed. Crucially, it is demonstrated that these satisfy causal transparency. Composite observables are addressed in Section~\ref{sec:multiobs} with explicit expressions for their measurement. A preliminary analysis of causal correlations in quantum measurement in QFT, based on the preceding results is provided in Section~\ref{sec:causalcor}. A discussion of results and an outlook is presented in Section~\ref{sec:outlook}. Appendices~\ref{sec:mathspec}, \ref{sec:mathst}, and \ref{sec:pmultiobs} contain some of the explicit calculations and proofs of statements in Sections~\ref{sec:instqft}, \ref{sec:spacetime}, and \ref{sec:multiobs} respectively.


\section{Locality and causality of measurement}
\label{sec:loccaus}

We recall elementary notions of locality and causality for measurement in non-relativistic quantum mechanics and in QFT. For the moment we make the usual assumption that measurements are instantaneous.

\subsection{Causality in non-relativistic quantum mechanics}
\label{sec:nrcaus}

In non-relativistic quantum mechanics we require measurements to satisfy \emph{causality}, because this is a feature of nature we observe in all actual measurements. It means that the future choice of performing one measurement over another cannot have any bearing on the probabilities of outcomes of a measurement performed at present. In the standard formulation of quantum theory this is enforced by the \emph{causality axiom}: A \emph{non-selective quantum operation} must preserve the trace of the state. Indeed, this axiom is of such transcendence that the property of being trace-preserving is commonly used as a \emph{definition} of non-selectiveness of a quantum operation.

Denoting the quantum operation encoding the present measurement by $M$ and that encoding the future one by $N$ we require the equality
\begin{equation}
  \tr(N\circ M(\sigma))=\tr(M(\sigma))
  \label{eq:nrctwo}
\end{equation}
for any state $\sigma$.\footnote{We use the symbol $\circ$ to denote the composition of operators. This corresponds here to the temporal order from right (past) to left (future).} Here, $N$ is non-selective while $M$ may be selective. As this condition must hold for any $M$ and $\sigma$ it is simply equivalent to the condition on $N$, for any $\sigma$,
\begin{equation}
  \tr(N(\sigma))=\tr(\sigma).
  \label{eq:nrcaus}
\end{equation}

\subsection{Relativistic causality in QFT}
\label{sec:relcaus}

\begin{figure}
  \centering
  \includegraphics[width=0.6\textwidth]{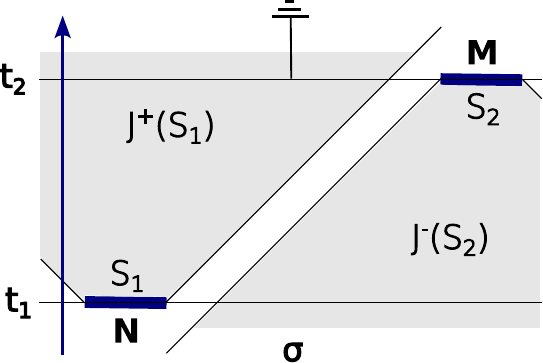}
\caption{Setup with two measurements at different times that are localized in space. The non-selective measurement $N$ is localized in the spatial subset $S_1$ at $t_1$ and the selective measurement $M$ is localized in the spatial subset $S_2$ at $t_2$. $S_1$ does not intersect the causal past of $S_2$. The initial state is $\sigma$ at $t_1$ and the system is discarded after time $t_2$.}
\label{fig:stlocality}
\end{figure}

In QFT we want the same causality axiom to hold of course. However, to reflect the finite speed of propagation of information we require a stronger version. Namely, the probabilities of outcomes of our measurement $M$ in the present should be independent even of a past measurement choice $N$ as long as that past measurement is outside our \emph{causal} past. That is, we require in this situation even $\tr(M\circ N(\sigma))=\tr(M(\sigma))$ for any state $\sigma$. Again, $N$ is non-selective, while $M$ may be selective. We refer to this property as \emph{relativistic causality}. It is intimately related to \emph{locality} as it has to do with the localization of the measurements in spacetime as illustrated in Figure~\ref{fig:stlocality}. We allow for the measurements to take place on subsets $S_1, S_2$ of spacelike hypersurfaces (here the equal-time hypersurfaces at $t_1$ and $t_2$). Crucially, the causal past $J^-(S_2)$ of subset $S_2$ where measurement $M$ takes place and the causal future $J^+(S_1)$ of subset $S_1$ where measurement $N$ takes place, do not intersect.

\subsection{Causal transparency in QFT}

\begin{figure}
  \centering
  \includegraphics[width=0.6\textwidth]{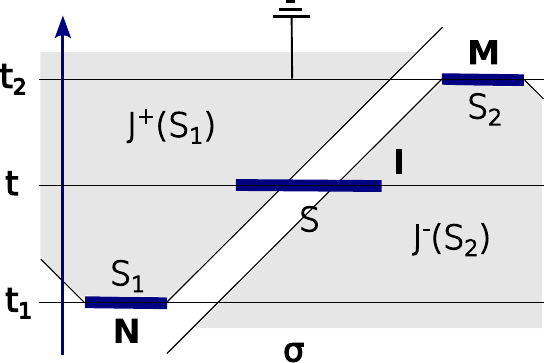}
\caption{An additional non-selective measurement $I$ at the intermediate time $t$ is inserted into the setting of Figure~\ref{fig:stlocality}.}
\label{fig:stcausalt}
\end{figure}

There is another way in which a quantum measurement may, even if it is local in the sense just discussed, disrespect the causal structure of spacetime. Namely, a measurement $I$ localized in a spatial subset $S$ at the intermediate time $t$ may violate causality by facilitating superluminal signaling within $S$. To detect this, Sorkin proposed a setup involving three measurements \cite{Sor:impossible}, see Figure~\ref{fig:stcausalt}. As in the previous setting, there is an initial non-selective measurement $N$. Later, the measurement $I$ takes place, which is also non-selective. Finally, the possibly selective measurement $M$ takes place. As before, the causal future $J^+(S_1)$ of $S_1$, where $N$ takes place and the causal past $J^-(S_2)$ of $S_2$, where $M$ takes place do not intersect. On the other hand, the subset $S$ where $I$ takes place may intersect both $J^+(S_1)$ and $J^-(S_2)$. As before, the measurement $M$ should not be able to detect whether the measurement $N$ takes place. This should be true, even if measurement $I$ takes place as well. Otherwise, we may infer that superluminal signaling has taken place in $S$ due to the measurement $I$. We call the absence of such superluminal signaling for $I$ in all situations of this kind \emph{causal transparency} of $I$, because it means that $I$ is transparent to the underlying causal structure of spacetime \cite{Oe:spectral}. Mathematically this is expressible in the identity $\tr(M\circ I\circ N(\sigma))=\tr(M\circ I(\sigma))$ for any state $\sigma$.

\subsection{Sorkin's result}
\label{sec:sorkincaus}

In a seminal paper, Sorkin has shown that a large class of projective measurements, i.e., measurements encoded by non-selective quantum operations of the form
\begin{equation}
  \sigma\mapsto P\sigma P + (1-P)\sigma (1-P),
\end{equation}
where $P$ is a projector, violate causal transparency, i.e., allow for superluminal signaling in the sense discussed. However, quantum operations based on projectors provide the simplest and most straightforward implementation of measurements with a finite number of alternative outcomes. What is more, the measurement of an arbitrary observable given in the form of a self-adjoint operator is encoded through quantum operations constructed from the projectors of the spectral decomposition of the operator as we recall in Section~\ref{sec:nrquant}. For these reasons Sorkin's result has been viewed as a kind of no-go theorem in quantum measurement theory.
However, recently it has been shown that (regularized) measurements for large classes of observables can indeed be implemented in QFT in a causally transparent fashion at a fundamental level \cite{Oe:spectral}. Key ingredients were the construction of the measurements as limits of non-projective quantum operations and the measurement of a continuous real spectrum of outcomes instead of a discretization into bins, see Section~\ref{sec:instqft}.


\section{Observables, quantization, measurement}
\label{sec:nrobs}

\subsection{Non-relativistic quantum mechanics}
\label{sec:nrquant}

\emph{Quantization} usually refers to a procedure to associate to a \emph{classical observable} $A$, i.e., a real valued function on the (instantaneous) phase space $L$, a self-adjoint operator $\hat{A}$ on the Hilbert space $\cH$ of pure states of the theory.\footnote{This operator is often also called a (quantum) observable in the literature. However, for clarity, we reserve the word \emph{observable} for classical observables.} We may use this to describe the \emph{measurement} of different quantities associated with the observable. In the simplest case we are only interested in the \emph{expectation value} of the observable and \emph{discard} the system after the measurement, i.e., choose to ignore its future evolution. Denote by $\cB$ the partially ordered real vector space of self-adjoint operators on $\cH$. Given an initial normalized state $\sigma\in\cB$ (and assuming $\hat{A}$ to be bounded), this expectation value is given by
\begin{equation}
  \langle A\rangle_\sigma=\tr(\hat{A}\sigma) .
  \label{eq:evsingled}
\end{equation}
If we are interested in measuring a different quantity associated to the observable or wish to involve information about the future of the system after measurement, we need to construct the corresponding \emph{quantum operation} from the operator $\hat{A}$. This involves first obtaining the \emph{spectral decomposition} of $\hat{A}$. Recall that the \emph{spectrum} $S_A\subseteq\R$ is the set of values $a$ such that $\hat{A}-a\id$ is not invertible. Here, $\id$ denotes the identity operator. In the finite-dimensional case the spectrum is finite and consists of \emph{eigenvalues}. Then, we obtain a decomposition of $\hat{A}$ into a linear combination of orthogonal projectors $P_a$ with the eigenvalues as coefficients,
\begin{equation}
  \hat{A}=\sum_{a\in S_A} a P_a, \qquad P_a P_b=0\quad\text{for}\quad a\neq b,
  \qquad \sum_{a\in S_A} P_a=\id.
  \label{eq:finspecdec}
\end{equation}
The elements $a$ of the spectrum are the possible \emph{outcome values} of the measurement. The quantum operation that allows to extract the probability of outcome $a$ is the \emph{selective} operation
\begin{equation}
  \cA_a(\sigma)= P_a\sigma P_a .
  \label{eq:selqproj}
\end{equation}
The corresponding \emph{non-selective} operation is the sum over the selective ones,
\begin{equation}
  \cA_*(\sigma)= \sum_{a\in S_A} P_a \sigma P_a ,
\end{equation}
satisfying the ordinary causality axiom \eqref{eq:nrcaus}, i.e., $\tr(\cA_*(\sigma))=\tr(\sigma)$. Crucially, apart from recovering the outcome probability $\Pi_a=\tr(\cA_a(\sigma))$ or the expectation value $\langle A\rangle_{\sigma}=\tr(\sum_{a\in S} a \cA_a(\sigma))$ for a single measurement with discard, the quantum operations serve to extract probabilities and correlations of joint measurements. What is more, the description of joint measurements is compositional in time. That is, a joint measurement is encoded in the composition (as superoperators) of the quantum operations corresponding to the individual measurements, in their temporal order. For example, measuring the observable $B$ after the observable $A$ with initial state $\sigma$ and then discarding the system, the joint probability for outcome $a$ for $A$ and $b$ for $B$ is $\tr(\mathcal{B}_b(\cA_a(\sigma)))$.

Given a self-adjoint operator $\hat{A}$, there are further measurements than those indicated so far, that we may associate with it. In particular, we may \emph{coarse-grain} the measurement. Let $X\subseteq S_A$ be a subset of the spectrum and define the associated projector $P_X\defeq \sum_{a\in X} P_a$. Then, the quantum operation
\begin{equation}
  \cA_X(\sigma)\defeq P_X \sigma P_X
  \label{eq:selqsubp}
\end{equation}
encodes a measurement to determine whether or not the value of the observable $A$ lies in the subset $X$. With the previously considered quantum operations \eqref{eq:selqproj} we may pose the same question via the quantum operation
\begin{equation}
  \cA_X'(\sigma)\defeq \sum_{a\in X} \cA_a(\sigma)= \sum_{a\in X} P_a\sigma P_a .
\end{equation}
While the question is the same, the measurements encoded by $\cA_X$ and $\cA_X'$ are not, and the probabilities they yield are different in general. With $\cA_X'$ the full information as to whether any one of the particular values in $X$ is obtained as the outcome is extracted from the quantum system, even though only part of that information is used. In contrast, with $\cA_X$ the information extracted from the quantum system only concerns the yes/no question whether the outcome pertains to $X$ or not. To define a complete coarse grained measurement we have to specify the alternatives. That is, we need a partition $\check{X}=\{X_k\}_{k\in I}$ of the spectrum $S$ in terms of disjoint non-empty subsets $X_k$. Then, the non-selective quantum operation for this coarse-grained measurement is
\begin{equation}
  \cA_{\check{X}}(\sigma)= \sum_{k\in I} \cA_{X_k}(\sigma) = \sum_{k\in I} P_{X_k} \sigma P_{X_k} .
  \label{eq:nsqpart}
\end{equation}
Again, this satisfies the causality axiom \eqref{eq:nrcaus}.
Crucially, this quantum operation depends on the partition. Note that the original and maximally fine-grained notion of measurement is recovered as the special case where the sets $X_k$ consist of one element each.

In the infinite-dimensional case, the spectrum of a self-adjoint operator may be infinite and even continuous. In the general case, where the operator might even be unbounded, the sum in \eqref{eq:finspecdec} is replaced by an integral with respect to the \emph{spectral measure} $\mu_A$. We may write this as,
\begin{equation}
  \hat{A}=\int_S \lambda\, \xd\mu_A(\lambda) .
\end{equation}
This equation has to be interpreted with care (and in the weak operator topology). A more tangible characterization is the following. For any Borel subset $X\subseteq S$, there is a projection operator $P_X\defeq \mu_A(X)$. Moreover, this assignment is countably additive and $P_S=\mu_A(S)=\id$. We can construct associated quantum operations exactly as in the case of coarse-grained measurements. That is, we start with a partition $\check{X}=\{X_k\}_{k\in I}$ of the spectrum $S$. In contrast to the finite-dimensional case the partition may be infinite, but needs to be countable. Moreover, the sets $X_k$ need to be Borel measurable. We then have selective quantum operations $\cA_X$ given by expression \eqref{eq:selqsubp} and a corresponding non-selective quantum operation $\cA_{\check{X}}$ given by expression \eqref{eq:nsqpart}, with the exact same interpretations. A key difference to the finite-dimensional part arises if the spectrum is (possibly partially) \emph{continuous}.
This case is important, as for example the position and momentum operators of a free particle have continuous spectrum $S=\R$. In this case a \emph{maximally fine-grained} measurement as a special case of the coarse-grained ones \emph{does not exist}, as a partition of the spectrum consisting of one-element sets would be uncountable. Consequently, any measurement constructed in this way has to be genuinely coarse-grained. In particular, for any choice of partition, there will be uncountably many distinct outcome values that cannot be distinguished by the measurement.

From an operational point of view, the central object of the quantum theory that corresponds to the classical observable $A$ is not the operator $\hat{A}$, but the quantum operations $\cA_a$, $\cA_*$ etc.\ used to encode its measurement in the general case. We take this as justification to refer to the procedure of constructing these quantum operations from the classical observable $A$ as \emph{operational quantization}. Of course the present scheme "passes through" the operator $\hat{A}$, obtained by quantization as understood traditionally, but that need not be the case for other schemes as we will see.

It is important to emphasize that a priori, the measurement of a classical observable $A$ on the instantaneous phase space $L$ is an \emph{instantaneous measurement}. This is consequently also true for the quantum measurement described through a quantization of the observable $A$. In the classical theory we may have the liberty to reinterpret the measurement of $A$ as a measurement taking place at a different time or as being extended in time by making use of an identification of $L$ with a space of global solutions of the equations of motion. We do not in general have this liberty in the quantum theory. On the one hand the evolution of the quantum system cannot in general be brought into correspondence to the classical one. On the other hand this is because the measurement alters the quantum system at the time it is performed. (We return to this point later.) This does not mean that we cannot describe time-extended measurements in the quantum theory. Indeed, composing instantaneous measurements with time-evolutions leads to time-extended measurements. One such time-evolution may describe the interaction of a subsystem of interest with an ancilla subsystem representing a measurement apparatus. One may then legitimately ascribe the duration of the measurement not only to the instantaneous read-out of the ancilla, but to the whole composite quantum operation that includes the interaction.

\subsection{QFT}
\label{sec:instqft}

\begin{figure}
  \centering
  \includegraphics[width=0.6\textwidth]{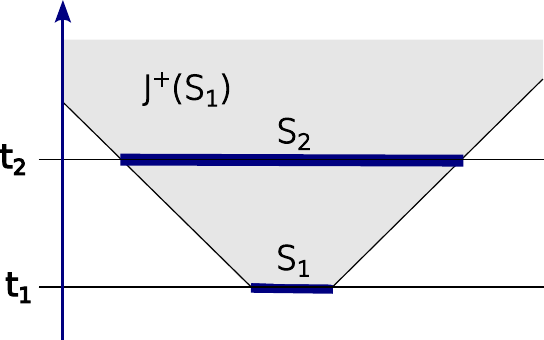}
\caption{If an operator at $t_1$ is localizable in the spatial subset $S_1$, then it is localizable at $t_2$ in $S_2$.}
\label{fig:causaldep}
\end{figure}

The analysis of the non-relativistic case applies also to QFT. What changes is that localization not only in time, but also in space becomes an intrinsic ingredient of the formalism. The instantaneous phase space $L$ is now the space of \emph{initial data} of the equations of motion with explicit spatial localization of field data. With this comes the corresponding \emph{spatial localizability} of classical observables. We say that an observable $A$ is \emph{localizable} in a subset $S$ of space if its support is contained in initial data on $S$. This localizability is inherited by the self-adjoint operators obtained trough (traditional) Weyl \emph{quantization} as follows. If two observables have disjoint support in space on initial data, the corresponding operators commute. What is more, the localization properties extend from space to spacetime. If an operator at time $t_1$ is localizable in a subset $S_1$ of space, then the time-evolved version of this operator at time $t_2$ is localizable in the subset $S_2$ of space which is obtained as the causal future of $(t_1,S_1)$ at time $t_2$, see Figure~\ref{fig:causaldep}.

By the spectral theorem the commutation properties of an operator are inherited by any other operator obtained as an integral over its spectral measure. This applies in particular to the projection operators used in the construction of quantum operations, to implement the measurement of an observable, compare equation \eqref{eq:selqsubp}. As a consequence, the quantum operations also inherit the localization properties of the observables from which they are constructed. In particular, quantum operations constructed from self-adjoint operators which commute due to their localization properties, also commute. On the other hand, we already know that non-selective quantum operations constructed from observables as described satisfy the (non-relativistic) causality axiom (Section~\ref{sec:nrcaus}). Combining both types of properties it is straightforward to show that these operations satisfy \emph{relativistic causality} (Section~\ref{sec:relcaus}).

In contrast, \emph{causal transparency} is a more delicate property to satisfy as we have recalled in Section~\ref{sec:sorkincaus}. In particular, all quantum operations we have considered so far are based on projection operators as in expression \eqref{eq:selqsubp}. Sorkin's result suggests that these will lead to superluminal signaling, violating causal transparency, although his argument is based on a more specific class of projection operators. To get a better handle on observables in QFT, their quantization, and the question on causal transparency we start with \emph{linear observables} $A:L\to\R$. The self-adjoint operators $\hat{A}$ obtained from them through (traditional) Weyl \emph{quantization} are the \emph{field operators}. (We assume $L$ to be a linear space.) The latter can alternatively be characterized as those operators that can be written as the sum of a creation and corresponding annihilation operator. 

In the example of a real scalar field, the quintessential point-localized observable is $(\phi,\dot{\phi})_t\mapsto\phi(x)$ with the associated self-adjoint operator denoted by $\hat{\phi}(t,x)$. Here $t$ is the time at which the initial data $L$ is taken and $x$ is the spatial position where the field data is evaluated. Since initial data consists of field values and temporal derivatives, general linear observables can be parametrized by two real (and suitably square-integrable) functions $g,g':\R^3\to\R$ on space via
\begin{equation}
  (\phi,\dot{\phi})_t\mapsto \int_{\R^3}\xd^3 x\, \left(g(x)\phi(x) + g'(x)\dot{\phi}(x)\right) .
  \label{eq:classlinobs}
\end{equation}
We denote the associated self-adjoint operator by $\hat{\phi}(t,g,g')$. The observable and operator are then \emph{localizable} on the hypersurface at time $t$ in the subset $S$ of space that is the union of the supports of $g$ and $g'$. In spacetime, their time-evolved versions are localizable in the region given by the union of the causal past and future of $(t,S)$, intersected with the corresponding spacelike hypersurface to which they are evolved. When no confusion can arise, we simply say that they are localizable in the corresponding spacetime region, without mentioning the hypersurface. In the same vein, it is customary to parametrize observables and corresponding operators by functions $f:\R\times\R^3\to\R$ on spacetime. The corresponding operator is given by
\begin{equation}
  \hat{\phi}(f)=\int_{\R^4}\xd t\, \xd^3 x\, f(t,x) \hat{\phi}(t,x).
\label{eq:stop}
\end{equation}
It is then easy to see that in our previously established manner of speaking, this operator is localizable in the spacetime region that is the union of the causal past and future of the support of the function $f$. By identifying the spaces of initial data with the space of global solutions this operator can be thought of as arising from the quantization of the observable
\begin{equation}
   \phi\mapsto \int_{\R^4}\xd t\, \xd^3 x\, f(t,x) \phi(t,x)
\label{eq:stobs}
\end{equation}
on the \emph{space of solutions}.

We proceed to consider the measurement of linear observables and the corresponding field operators. As a first remark, it is important to recall that even if we parametrize observables and operators in an apparently temporally extended way, compare expression \eqref{eq:stop}, we have not left the standard quantization framework where measurements are \emph{instantaneous}. In particular, if the measurement of an observable determined by a spacetime function $f$ is applied at time $t$, there are unique functions $g,g'$ on the instantaneous phase space at $t$ that encode this observable, compare expression \eqref{eq:classlinobs}. This determines the a priori physical interpretation of the measurement.

A linear observable $A$ and by consequence the self-adjoint operator $\hat{A}$ obtained from it by quantization have the real numbers as their spectrum, $S=\R$. The construction of quantum operations encoding a measurement of the observable may proceed as outlined in Section~\ref{sec:nrquant}. Thus, using projection operators $P_X$ obtained from the spectral measure on a subset $X\subseteq S=\R$, we may perform coarse-grained measurements via quantum operations $\cA_X$, compare expression \eqref{eq:selqsubp}. As previously mentioned, Sorkin's result, although based on a different class of projection operators, suggests that this measurement would not satisfy causal transparency. This violation of causal transparency was confirmed by Albertini and Jubb precisely for the projectors obtained from linear observables for the real scalar field \cite{AlJu:measurecausal}.

Clearly, a different approach is needed to satisfy causal transparency. The key seems to be that in spite of the spectrum being uncountable, we should insist on a maximally fine-grained measurement. The apparent obstacle is that the spectral measure does not assign projectors to points, which would enable us to define a quantum operation as in expression \eqref{eq:selqproj}. However, it \emph{almost} does. More precisely, we can construct a one-parameter-family of \emph{positive-operator-valued measures (POVMs)} that converge to the spectral measure in a suitable sense and which has the desired properties. Let $A:L\to\R$ be a linear observable. For $\epsilon>0$ and $q\in\R$ we define the induced observable
\begin{equation}
  H_A^\epsilon(q)(\phi)\defeq \frac{1}{\sqrt{\pi}\epsilon}\exp\left(-\frac{1}{\epsilon^2} (A(\phi)-q)^2\right) .
  \label{eq:aeobs}
\end{equation}
We write $\Pi_A^\epsilon(q)\defeq \hat{H}_A^\epsilon(q)$ for the self-adjoint operator obtained from $H_A^\epsilon(q)$ by Weyl quantization. For any $\epsilon>0$ and $q\in\R$, $\Pi_A^\epsilon(q)$ is bounded and positive and has the same localization properties as $A$ and $\hat{A}$. In particular, if $\hat{A}$ commutes with $\hat{B}$, then so does $\Pi_A^\epsilon(q)$.

For any $A$ and $\epsilon$, the family $\Pi_A^\epsilon(q)$ gives rise to a POVM as follows. Let $f:\R\to\R$ be Lebesgue measurable and essentially bounded. Then,
\begin{equation}
  \Pi_A^\epsilon[f]\defeq \int_{-\infty}^\infty \xd q\, f(q) \Pi_A^\epsilon(q)
  \label{eq:rfcalc}
\end{equation}
is a bounded operator, and it is positive if $f$ is positive, i.e., if $f\ge 0$. Also, $\Pi_A^\epsilon[\one]=\id$, where $\one(q)=1$ denotes the constant function with value $1$ and $\id$ is the identity operator. Moreover, $\Pi_A^\epsilon[f]$ also inherits the localization properties of $A$ and $\hat{A}$.
The analogs of the previously considered projectors for subsets $X\subseteq S=\R$ of the spectrum are the operators $\Pi_A^\epsilon[\chi_X]$, where $\chi_X$ denotes the characteristic function of the subset $X$. Crucially, for $\epsilon\to 0$ the operator $\Pi_A^\epsilon[f]$ converges (strongly) to $\Pi_A[f]\defeq\Pi_A^0[f]$, which coincides with the integral over $f$ with the spectral measure associated to $\hat{A}$. (Weak convergence was shown in \cite{Oe:spectral} and a proof of strong convergence was subsequently given in \cite{MaNa:qftfv}.) Indeed, another useful way to understand $\Pi_A^\epsilon[f]$ is as arising from the spectral measure integrated with a Gaussian convolution $f^{\epsilon}$ of $f$, i.e., $\Pi_A^\epsilon[f]=\Pi_A[f^{\epsilon}]$, see Appendix~\ref{sec:mathspec} for details. 
$\Pi_A[\chi_X]$ is precisely the previously considered projection operator $P_X$. However, while $\Pi_A^\epsilon(q)$ is continuous in $\epsilon$ for $\epsilon>0$, a limit $\epsilon\to 0$ does not exist.

For $\epsilon>0$ define the quantum operation
\begin{equation}
  \cA^\epsilon(q)(\sigma)\defeq \sqrt{2\pi}\epsilon\, \Pi_A^\epsilon(q) \sigma \Pi_A^\epsilon(q) .
  \label{eq:rsmop}
\end{equation}
This can be considered as encoding a measurement of $A$ in terms of the probability (density) that the outcome coincides with the value $q$, in analogy to expression \eqref{eq:selqproj}. More generally, for $f$ measurable and essentially bounded we define 
\begin{equation}
  \cA^\epsilon[f](\sigma)\defeq \sqrt{2\pi}\epsilon\int_{-\infty}^{\infty}\xd q\, f(q) \Pi_A^\epsilon(q) \sigma \Pi_A^\epsilon(q) .
  \label{eq:fcalcqop}
\end{equation}
This is completely positive if $f$ is positive and thus a quantum operation. If we take a subset $X\subseteq S=\R$ of the spectrum, the quantum operation $\cA^\epsilon[\chi_X]$ encodes the measurement yielding the probability whether the outcome of $A$ lies in $X$ or not. What is more, the corresponding \emph{non-selective} quantum operation, $\cA^\epsilon[\one]$, satisfies the \emph{causality axiom} \eqref{eq:nrcaus}, i.e., $\tr(\cA^\epsilon[\one](\sigma))=\tr(\sigma)$. Together with the fact that these quantum operations inherit the localization properties of the observable $A$ through the operators $\Pi_A^\epsilon(q)$, this implies that \emph{relativistic causality} is also satisfied, as discussed previously.
What is more, the non-selective quantum operations $\cA^\epsilon[\one]$ satisfy \emph{causal transparency}, compare Figure~\ref{fig:stcausalt} and the following theorem.

\begin{thm}[{\cite[Theorem~5.2]{Oe:spectral}}]
  \label{thm:ctrans}
  Let $S_1,S,S_2$ be subsets of the equal-time hypersurfaces at distinct times $t_1,t, t_2$ such that $S_2$ does not intersect the causal future of $S_1$. Let $N$ be a non-selective quantum operation localizable at $S_1$, $M$ a selective quantum operation localizable at $S_2$ and $A:L_t\to\R$ a linear observable with support in $S$. Let $\epsilon>0$ and $I=\cA^{\epsilon}[\one]$ and $\sigma\in\cT$. Then,
  \begin{equation}
     \tr\left((M\comp I\comp N)(\sigma)\right)=\tr\left((M\comp I)(\sigma)\right) .
     \label{eq:ctid}
  \end{equation}
  Here, the symbol $\comp$ means that operations are ordered according to their temporal order and then composed as usual.
\end{thm}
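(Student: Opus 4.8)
The plan is to move everything onto the side of effects via Heisenberg-dual maps and reduce the claim to a single operator identity. For a quantum operation $\Phi$ let $\Phi^{\ast}$ denote its dual, characterized by $\tr(\Phi(\rho)\,O)=\tr(\rho\,\Phi^{\ast}(O))$, and set $E\defeq M^{\ast}(\id)$, the effect of $M$. Pushing $M$ and then $I$ across the trace, the left-hand side of \eqref{eq:ctid} will become $\tr\!\left(\sigma\,N^{\ast}(I^{\ast}(E))\right)$, while the right-hand side becomes $\tr\!\left(\sigma\,I^{\ast}(E)\right)$. Since these must agree for every $\sigma\in\cT$, it suffices to prove the operator identity $N^{\ast}(I^{\ast}(E))=I^{\ast}(E)$. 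As $N$ is non-selective, $N^{\ast}(\id)=\id$; hence if $I^{\ast}(E)$ commutes with every Kraus operator of $N$, then $N^{\ast}(I^{\ast}(E))=N^{\ast}(\id)\,I^{\ast}(E)=I^{\ast}(E)$ and the proof is complete. So I would reduce the whole statement to showing that $I^{\ast}(E)$ is localizable at $(t_2,S_2)$, which under the hypothesis $S_2\cap J^+(S_1)=\emptyset$ is causally disjoint from $S_1$ and therefore commutes with $N$.

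The key step is a dephasing reformulation of $I=\cA^{\epsilon}[\one]$. Writing the matrix elements of \eqref{eq:fcalcqop} in the (generalized) spectral basis of $\hat{A}$, a Gaussian integral in $q$ shows that $I$ multiplies $\langle a|\rho|a'\rangle$ by $\exp(-(a-a')^2/2\epsilon^2)$. Recognizing this Gaussian as the Fourier transform of a Gaussian probability density $p$, I would recast $I$ as a random-unitary (mixture-of-shifts) channel
\begin{equation}
  I(\rho)=\int_{-\infty}^{\infty}\xd s\,p(s)\,e^{\im s\hat{A}}\,\rho\,e^{-\im s\hat{A}},\qquad p(s)=\frac{\epsilon}{\sqrt{2\pi}}\,e^{-\epsilon^2 s^2/2},
\end{equation}
so that $I^{\ast}(E)=\int\xd s\,p(s)\,e^{-\im s\hat{A}}\,E\,e^{\im s\hat{A}}$. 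The decisive advantage over the original form is that the $e^{\im s\hat{A}}$ are unitaries, rather than the non-invertible dampings $\Pi_A^\epsilon(q)$, which makes the localization of $I^{\ast}(E)$ tractable.

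It then remains to show that each $e^{-\im s\hat{A}}\,E\,e^{\im s\hat{A}}$ is localizable at $(t_2,S_2)$. This is where linearity of $A$ is essential: $\hat{A}$ is a field operator, so $[\hat{A},\hat{\phi}(t_2,y)]$ is a c-number (the Pauli--Jordan commutator function), whence conjugation by the unitary $e^{\im s\hat{A}}$ sends each generating field of the local algebra at $(t_2,S_2)$ to itself plus a c-number, $\hat{\phi}(t_2,y)\mapsto\hat{\phi}(t_2,y)-\im s[\hat{A},\hat{\phi}(t_2,y)]$, and hence preserves that algebra. Consequently $e^{-\im s\hat{A}}\,E\,e^{\im s\hat{A}}$, and after integrating against $p$ also $I^{\ast}(E)$, is localizable at $(t_2,S_2)$; being causally disjoint from $S_1$, it commutes with $N$, closing the argument. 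Note that it is exactly because the net effect is a c-number field shift (and not an enlargement of support into $S$) that the intermediate region $S$ is allowed to intersect $J^+(S_1)$ without spoiling transparency.

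The main obstacles I expect are analytic rather than structural. First, justifying the dephasing/random-unitary rewriting at the operator level: the convergence of the $q$- and $s$-integrals and the operator Fourier identity need care, since $\hat{A}$ has continuous spectrum and the $\Pi_A^\epsilon(q)$ are unbounded-domain objects. Second, making precise the claim that conjugation by $e^{\im s\hat{A}}$ preserves the local algebra at $(t_2,S_2)$, i.e.\ that shifting the generating fields by the c-number Pauli--Jordan function maps the local von Neumann algebra to itself. Once these two technical points are secured, the remainder is the formal manipulation of dual maps together with the non-selectiveness of $N$ laid out above.
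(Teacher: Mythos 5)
Your argument is correct, but it takes a genuinely different route from the one the paper relies on. The paper imports Theorem~\ref{thm:ctrans} from \cite{Oe:spectral} and, for its generalization Theorem~\ref{thm:ctst}, works entirely in the Schr\"odinger/path-integral picture: the key ingredients there are the Gaussian convolution identity of Lemma~\ref{lem:addid}, which splits the induced observable $H_A^{\epsilon}(q)$ into a past and a future part, and the dephasing identity of Lemma~\ref{lem:discardint}, which collapses the future part under a subsequent discard to the factor $\exp(-(s-t)^2/2\epsilon'^2)$; non-selectiveness of $N$ is then invoked in the middle of the resulting chain of probe compositions. You instead dualize to the Heisenberg picture and observe that $\cA^{\epsilon}[\one]$ is a mixture of Weyl unitaries $e^{\im s\hat{A}}$ weighted by the Gaussian $p(s)$ --- this rewriting is correct, and is in fact the Fourier-dual of the paper's identity \eqref{eq:intid} combined with Lemma~\ref{lem:discardint} (the dephasing factor $e^{-s^2/2\epsilon^2}$ is precisely the Fourier transform of $p$). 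Your decisive step, that conjugation by $e^{\im s\hat{A}}$ maps the local algebra at $(t_2,S_2)$ to itself because $[\hat{A},\hat{\phi}(t_2,g,g')]$ is a c-number, is exactly where linearity of $A$ enters for you; it plays the role that additive decomposability of $A$ plays in the paper's proof. What your route buys: it is shorter and more conceptual for the instantaneous case, it reduces causal transparency to relativistic causality, and it proves the stronger statement that $I^{\ast}$ preserves \emph{every} local algebra (which also makes vivid why Sorkin's projective operations fail: their duals do not). What the paper's route buys: it does not presuppose the algebraic machinery (local von Neumann algebras, Kraus decompositions of localizable operations, microcausality as an input), and, more importantly, it survives the passage to genuinely time-extended observables in Theorem~\ref{thm:ctst}, where the probe is no longer conjugation by functions of a single self-adjoint operator (the propagator $\wuu(A,A)$ acquires an imaginary part) and your unitary-conjugation argument has no direct analog. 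The analytic caveats you flag (operator-valued Gaussian/Fourier interchanges, stability of the local algebra under the affine shift of its Weyl generators) are real but routine; the paper settles the analogous points by evaluating everything on coherent states.
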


The price we have paid for causal transparency is that we have approximated the spectral measure by a one-parameter-family of POVMs. As already mentioned, the limit $\epsilon\to 0$ does exist for the operators $\Pi_A^\epsilon[f]$, but not for the operators $\Pi_A^\epsilon(q)$. Neither does it exist for the quantum operations $\cA^\epsilon(q)$ or $\cA^\epsilon[f]$. However, the latter limit does exist when composing with the trace. Moreover, the existence of limits of more general composites is suggested by complete positivity together with uniform (in $\epsilon$) boundedness in the trace norm of $\cA^\epsilon[f]$ \cite{Oe:spectral}. Thus, we should think of the quantum operations $\cA^\epsilon[f]$ as encoding \emph{regularized} measurements of the observable $A$. After composing all component operations (including initial states, a final discard etc.) we obtain the desired outcome probability as a function of the \emph{regulator} $\epsilon$. At this point the regulator is to be removed by taking the limit $\epsilon\to 0$ if possible. This yields the actual outcome probability.

Let $X\subseteq S=\R$ be a subset of the spectrum. While responding to the same question as to whether the observable $A$ takes a value in $X$ or not, the measurements in terms of the quantum operation $\cA_X$ given by \eqref{eq:selqsubp} and the quantum operation $\cA^\epsilon[\chi_X]$ are fundamentally different (apart from the fact that the latter is regularized). In the former case only the information whether the outcome lies in $X$ is extracted from the quantum system. In the latter case, the full information about the value of the observable $A$ is extracted and used to decide if it does or not lie in $X$. In particular, the latter measurement is \emph{maximally fine-grained}, even if the question asked is not. This is also reflected in the fact that the corresponding non-selective operation is unique and does not depend on a choice of coarse-graining as in the former case. This seems to be a key property for achieving casual transparency.
On the other hand, even though the introduction of the regularized fine-grained measurement of observables with continuous spectrum was motivated by the causal transparency property in QFT, such measurements are useful also in non-relativistic quantum mechanics as becomes clear already in \cite{BaLaPr:marcocontobsqm}.

A first question about the present construction would be whether it can be generalized to non-linear observables. Indeed, there is no fundamental obstacle to allowing the observable $A$ in expression \eqref{eq:aeobs} to be non-linear, and most further steps go through with appropriate changes. However, the demonstration of causal transparency. i.e., the proof of Theorem~\ref{thm:ctrans} relies in an essential way on the linearity of $A$. Meanwhile, expression \eqref{eq:fcalcqop} suggests a different and possibly better way to deal with non-linear observables. This expression for $\cA^\epsilon[f]$ suggests an \emph{operational functional calculus} for the observable $A$ \emph{valued in quantum operations} with the interpretation of measuring the \emph{expectation value} of the (possibly non-linear) observable $f(A)$. This interpretation is precisely confirmed in the context where we already know from traditional quantization prescriptions how to do this measurement, namely if the system is immediately discarded afterward. Combining Lemma~4.8 of \cite{Oe:spectral} with equation \eqref{eq:evsingled}, we find,
\begin{equation}
  \lim_{\epsilon\to 0} \tr(\cA^\epsilon[f](\sigma))
  =\tr(\Pi_A[f]\sigma)=\tr(\widehat{f(A)}\sigma)=\langle f(A)\rangle_{\sigma} .
\end{equation}
(Note that $\Pi_A[f]$ is just a different notation for $f(\hat{A})$ which in turn is identified in Weyl quantization to $\widehat{f(A)}$.) At the same time this shows how the quantum operation valued functional calculus reduces in this case to the ordinary functional calculus related to the spectral measure of $\hat{A}$. We also recall again that the corresponding non-selective quantum operation is always the same, irrespective of the choice of function $f$, namely $\cA^\epsilon[\one]$, implying that all these measurements satisfy causal transparency. We refer to the procedure of constructing the family of quantum operations $\cA^\epsilon[f]$ from the classical observable $A$ also as \emph{regularized operational quantization}.

Regularized measurements can be composed as through the composition of their quantum operations as maps in temporal order. As usual this yields joint probabilities or expectation values of products. However, the quantum operation valued calculus suggests a more sweeping generalization. Consider linear observables $A_1,\ldots,A_n$, in temporal order, a function $f:\R^n\to\R$, and a regulator $\epsilon>0$. Define the quantum operation
\begin{equation}
  \cA^\epsilon[f]\defeq \int_{R^n} \xd q_1\cdots\xd q_n\, f(q_1,\ldots,q_n)\, \cA_n^\epsilon(q_n)\circ\cdots\circ\cA_1^\epsilon(q_1) .
\end{equation}
This encodes the measurement of the observables in terms of the joint expectation value of $f(A_1,\ldots,A_n)$. Only if the function $f$ factorizes as $f(q_1,\ldots,q_n)=f_1(q_1)\cdots f_n(q_n)$ is this expressible as an ordinary composition of measurements. Nevertheless, the corresponding non-selective measurement is given by the ordinary composite of non-selective quantum operations $\cA_n^\epsilon[\one]\circ\cdots\circ\cA_1^\epsilon[\one]$, ensuring causal transparency. We shall return to this setting in greater generality in Section~\ref{sec:multiobs}.


\section{Spacetime}
\label{sec:spacetime}

In this section we consider observables that are extended not only in space, but also in time, and their measurement. To this end we need a quantization prescription that is capable of dealing with temporally extended observables. The \emph{path integral} is a convenient choice, as it is well established in QFT. An observable in a \emph{spacetime region} $M$ is then given by a map $K_M\to\R$, where $K_M$ denotes the space of field configurations in $M$. We shall denote these observables as \emph{spacetime observables} and the observables defined in Section~\ref{sec:nrobs} as \emph{phase space observables} whenever there is a need to distinguish them explicitly.
We also need a framework to deal with the operational and probabilistic aspects of measurement in a manifestly relativistic setting, generalizing the time-sequential (super-)operator based standard formulation of quantum theory. The \emph{local positive formalism} (PF) is our choice here \cite{Oe:dmf,Oe:posfound}. It plays nicely with QFT by construction, since it originated from \emph{general boundary quantum field theory}, an axiomatic approach to QFT \cite{Oe:holomorphic,Oe:feynobs}.
How the path integral can be used within the PF to describe measurements in QFT was developed in detail recently \cite{OeZa:lcmeasure}, using the \emph{Schwinger-Keldysh formalism}. While we recall the essential setup and some results from that work in the following, we refer the reader to the paper for details. Proofs for statements in this section are collected in Appendix~\ref{sec:mathst}.

\subsection{Amplitudes, observables and correlation functions}
\label{sec:amocor}

We denote the amplitude in the spacetime region $M$ with boundary state $\psi\in\cH_{\partial M}$ by $\rho_M(\psi)$ \cite{Oe:GBQFT}. We can informally define this in terms of the path integral,
\begin{equation}
    \rho_M(\psi)=\int_{K_M}\xD\phi\, \psi(\phi_{\partial M}) e^{\im S(\phi)} .
    \label{eq:ampl}
\end{equation}
Here, $\phi$ is integrated over the space $K_M$ of field configurations in $M$, $\psi(\phi_{\partial M})$ denotes the Schrödinger wave function \cite{Jac:schroedinger,Hat:qft} of $\psi$ evaluated on $\phi_{\partial M}$, the boundary field configuration.
Similarly, we denote the corresponding \emph{correlation function} of the observable $F:K_M\to\R$ by $\rho_M[F](\psi)$. Informally, this is given by the path integral,
\begin{equation}
    \rho_M[F](\psi)=\int_{K_M}\xD\phi\, \psi(\phi_{\partial M}) F(\phi) e^{\im S(\phi)} .
    \label{eq:corfn}
\end{equation}
A translation-invariant measure on the space of field configurations does not really exist. However, the quantities $\rho_M(\psi)$ and $\rho_M[F](\psi)$ can be rigorously defined in the free theory when a suitable complex structure on the boundary phase space $L_{\partial M}$ exists \cite{Oe:holomorphic,Oe:feynobs}.

In order to exhibit explicit formulas, we work with coherent states using the definitions and conventions of \cite{Oe:holomorphic}.
Thus, we denote the complex inner product on the phase space $L$ by $\{\cdot,\cdot\}$. For any element $\phi$ in the phase space there is a \emph{coherent state} $\coh_{\phi}\in\cH$. (Note $\coh_0=\vac$ is the vacuum state.) These have inner product
\begin{equation}
  \langle\coh_{\phi'},\coh_{\phi}\rangle
  =\exp\left(\frac12\{\phi,\phi'\}\right) .
  \label{eq:cohip}
\end{equation}
The coherent states satisfy a \emph{completeness relation} with respect to a Gaussian measure $\nu$,\footnote{The measure $\nu$ really exists on an extension $\hat{L}$ of the phase space $L$ \cite{Oe:holomorphic}, but this detail is unimportant for our present purposes.}
\begin{equation}
  \langle \eta,\psi\rangle = \int_{\hat{L}}\xd\nu(\phi)\,
   \langle \eta,\coh_{\phi}\rangle \langle\coh_{\phi},\psi\rangle .
   \label{eq:cohcompl}
\end{equation}
In the following it is often convenient to use the \emph{normalized coherent states} defined as
\begin{equation}
  \ncoh_{\phi}\defeq\exp\left(-\frac14\{\phi,\phi\}\right)\coh_{\phi} .
\end{equation}
Let $A:K_M\to\R$ be a \emph{linear observable} and $W=\exp(\im A)$ the associated \emph{Weyl observable}. The correlation function of $W$ for a coherent state satisfies the \emph{factorization identity} \cite{Oe:feynobs,CoOe:locgenvac}, 
\begin{equation}
  \rho_M[W](\ncoh_\phi)=\rho_M(\ncoh_\phi) W(\phi^{\Lint}) \rho_M[W](\vac) .
  \label{eq:corfact}
\end{equation}
Here, we are using a direct sum decomposition of the complexified phase space $L_{\partial M}^\C=L_M^\C\oplus L_{X}^{\C}$, written as $\phi=\phi^\Lint+\phi^\Lext$. $L_M^\C\subseteq L_{\partial M}^{\C}$ is the complexified subspace of solutions that continue into the interior of $M$. The expression $\phi^{\Lint}$ shall then also be used to denote this continuation. This makes the second factor in expression \eqref{eq:corfact} well-defined. $L_{X}^\C\subseteq L_{\partial M}^{\C}$ is the subspace determining the \emph{vacuum} in the region $X$ exterior to $M$ \cite{CoOe:vaclag,CoOe:locgenvac}.
The first factor in expression \eqref{eq:corfact} is the amplitude,
\begin{equation}
  \rho_M(\ncoh_\phi)=\exp\left(\im\,\omega_{\partial M}(\phi,\phi^{\Lint})\right) .
\end{equation}
Here, $\omega_{\partial M}$ is the symplectic form on the boundary phase space that is also the imaginary part of the inner product, $\omega_{\partial M}(\phi_1,\phi_2)=\frac12\Im\{\phi_1,\phi_2\}_{\partial M}$.
A remarkable aspect of the second factor in \eqref{eq:corfact} is that it only depends on the value of the observable $W$ on the space of solutions $L_M\subseteq K_M$.  
The third factor is the \emph{vacuum correlation function} of the Weyl observable,
\begin{equation}
  \label{eq:vev}
  \rho_M[W](\vac)=\exp\left(-\frac12 \wuu(A,A)\right),\quad\text{with}\quad
  \wuu(A,A)=\rho[A^2](\vac) . 
\end{equation}
Here, $\wuu$ is the time-ordered 2-observable correlation function or \emph{observable propagator} in analogy to the 2-point function \cite{OeZa:lcmeasure}. In general, the Feynman (or time-ordered) correlation function for two linear observables $A,B$ is defined as
\begin{equation}
  \wuu(A,B)=\rho[A B](\vac),
  \label{eq:feynoprop}
\end{equation}
and it is sensitive to the values the observables take off-shell, i.e., beyond the subspace of solutions.
The utility of the factorization identity \eqref{eq:corfact} lies above all in its use as a generating function in two ways. On the one hand, arbitrary states can be expanded in terms of coherent states and on the other hand Weyl observables generate other types of observables, especially polynomial ones. For a linear observable $A$ we get,
\begin{equation}
  \rho_{M}[A](\ncoh_{\phi})
  = \rho_M(\ncoh_\phi) A(\phi^{\Lint}) .
\end{equation}

The existence of the complex structure on a spacelike hypersurface is guaranteed in particular if canonical quantization succeeds on the hypersurface, e.g., in the sense of Birrel and Davis \cite{BiDa:qftcurved}. We shall further assume global hyperbolicity and that the in- and out-vacua are the same, i.e., there is a global choice of complex structure and corresponding decomposition of solutions into positive- and negative-energy solutions.
Thus, we focus mostly on spacetime regions $M=[\Sigma_1,\Sigma_2]$ delimited by an initial and final spacelike hypersurface, $\Sigma_1$ and $\Sigma_2$ respectively. We also write $\Sigma_1\le\Sigma_2$ to indicate that $\Sigma_1$ is earlier than $\Sigma_2$. The boundary state space decomposes as $\cH_{\partial M}=\cH_{\Sigma_1}\tens\cH_{\overline{\Sigma}_2}$.\footnote{We take $\Sigma_1$ and $\Sigma_2$ to have the same orientation. Since their orientation has to be opposite as boundary components of $\partial M$, we take $\partial M=\Sigma_1 \cup \overline{\Sigma}_2$, where the overline represents inversion of orientation.} The amplitude encodes then the evolution operator $U_{[\Sigma_1, \Sigma_2]}:\cH_{\Sigma_1}\to \cH_{\Sigma_2}$ between initial and final Hilbert space,\footnote{The orientation change for states is encoded in the conjugate-linear map $\iota$.}
\begin{equation}
    \langle \psi_2, U_{[\Sigma_1, \Sigma_2]}\psi_1\rangle_{\Sigma_2}
    =\rho_{[\Sigma_1,\Sigma_2]}(\psi_1\tens\iota(\psi_2)) .
\end{equation}
We shall occasionally employ a similar notation for correlations functions, when convenient,
\begin{equation}
    \langle \psi_2, \widetilde{F}\psi_1\rangle_{\Sigma_2}
    \defeq\rho_{[\Sigma_1,\Sigma_2]}[F](\psi_1\tens\iota(\psi_2)) .
\end{equation}
Here $\widetilde{F}$ is a map $\cH_{\Sigma_1}\to\cH_{\Sigma_2}$ from initial to final Hilbert space. We may also identify the Hilbert spaces via the time-evolution map $U_{[\Sigma_1, \Sigma_2]}$ and in this way consider $\widetilde{F}$ as an operator.

In this situation we can express the decomposition of the phase space in terms of positive- and negative-energy solutions. Per spacelike hypersurface $\Sigma$ we write this as $L_{\Sigma}^\C=L_{\Sigma}^+ \oplus L_{\Sigma}^-$ and $\phi=\phi^+ + \phi^-$. Moreover, by assumption this decomposition is conserved by time-evolution between spacelike hypersurfaces. We shall also implicitly identify the phase space elements with the global space of classical solutions.
Making the unitary time-evolution also implicit, the amplitude can be written as the inner product between coherent states.
With this we may rewrite the factorization identity \eqref{eq:corfact} as,
\begin{equation}
  \rho_{[\Sigma_1,\Sigma_2]}[W](\ncoh_{\phi_1}\tens\ncoh_{\phi_2})
  =\langle \ncoh_{\phi_2},\ncoh_{\phi_1}\rangle W(\phi_1^- + \phi_2^+) \rho_M[W](\vac) .
  \label{eq:ticorfact}
\end{equation}
We note that $\iota(\ncoh_\phi)=\ncoh_\phi$ and $\vac\tens\vac=\vac$. The third factor is the same vacuum correlation function \eqref{eq:vev} that we defined previously. It only depends on the observable and the vacuum, not on the region.

In the case of a scalar theory we can use a spacetime smearing function $f:\R\times\R^3\to\R$ as in expression \eqref{eq:stobs} likewise on the \emph{space of field configurations} instead of the \emph{space of solutions}, defining now a linear spacetime observable $A_f$. Let $W_f=\exp(\im A_f)$ be the corresponding Weyl observable. As already mentioned, the second factor in the factorization identity \eqref{eq:corfact} and \eqref{eq:ticorfact} depends only on $A_f$ on-shell, i.e., on solutions. The vacuum correlation function on the other hand does depend on $A_f$ off-shell, i.e., beyond the space of solutions. It is the exponential of the observable propagator, here,
\begin{equation}
  \wuu(A_f,A_f)=\int_{\R^4\times\R^4} \xd t\,\xd^3 x\,\xd t'\xd^3 x' f(t,x) f(t',x') \wtuu(t,x,t',x'),
  \label{eq:obspsrc}
\end{equation}
where $\wtuu$ is the time-ordered 2-point function or \emph{Feynman propagator}, which we write as,
\begin{equation}
  \wtuu(t,x,t',x')=\langle\vac,\tord\phi(t,x)\phi(t',x')\vac\rangle .
\end{equation}
We use a similar notation for the \emph{Wightman propagator},
\begin{equation}
  \wtdu(t,x,t',x')=\wtud(t',x',t,x)=\langle\vac,\phi(t,x)\phi(t',x')\vac\rangle .
\end{equation}
We recall the different roles that real and imaginary part in these propagators play and how they are related. First, the real parts of both propagators coincide, and we refer to this real and symmetric propagator as the \emph{Hadamard propagator},
\begin{equation}
  \wth(t,x,t',x')\defeq\Re(\wtuu(t,x,t',x'))=\Re(\wtdu(t,x,t',x'))=\Re(\wtud(t,x,t',x')).
\end{equation}
In contrast to the real part, the imaginary parts are \emph{causal} propagators, that is they vanish if the two points are spacelike separated. What is more, in both cases they are linear combinations of the \emph{advanced} and \emph{retarded} propagators, denoted $\wta$ and $\wtr$ respectively,
\begin{align}
  2 \Im(\wtuu(t,x,t',x')) & =\wtr(t,x,t',x')+\wta(t,x,t',x') , \\
  2 \Im(\wtdu(t,x,t',x')) & =\wtr(t,x,t',x')-\wta(t,x,t',x') .
\end{align}
The retarded propagator vanishes if $(t,x)$ is not in the causal future of $(t',x')$ and for the advanced propagator it is the other way round. What is more, they are related as,
\begin{equation}
   \wtr(t,x,t',x')=\wta(t',x',t,x) .
\end{equation}

Crucially, all the propagators have their observable generalizations, and this situation is not limited to scalar field theory. We have seen this for the Feynman (or time-ordered) propagator, compare expression \eqref{eq:feynoprop}. The Wightman observable propagator $\wdu$ can be constructed through a Schwinger-Keldysh double path integral \cite{OeZa:lcmeasure}, as we will recall in Section~\ref{sec:loccomp}, expression \eqref{eq:woprop}. With these two we can reconstruct the other observable propagators. Thus,
\begin{equation}
  \wh(A,B)\defeq\Re(\wuu(A,B))=\Re(\wdu(A,B))=\Re(\wud(A,B)) .
  \label{eq:roprop}
\end{equation}
We also recall that the Hadamard observable propagator is positive in the sense $\wh(A,A)\ge 0$ \cite{OeZa:lcmeasure}. The causal observable propagators are reconstructed from the imaginary parts,
\begin{align}
  \wr(A,B) & =\Im(\wuu(A,B))+\Im(\wdu(A,B)), \\
  \wa(A,B) & =\Im(\wuu(A,B))-\Im(\wdu(A,B)) .
\end{align}
We omit the word "observable" in "observable propagator" in the following when this omission causes no ambiguity.

\subsection{Slice observables}
\label{sec:sobs}

The notion of observable on the phase space $L$ on a (here spacelike) hypersurface $\Sigma$ can be recovered as a degenerate case of the notion of spacetime observable. This is called a \emph{slice observable} \cite{Oe:feynobs,CoOe:locgenvac}.
Roughly, we can imagine a slice observable as a distributional observable concentrated on a hypersurface $\Sigma$. More concretely, adding a linear observable $A$ to the classical action modifies the equations of motion from homogeneous ones to inhomogeneous ones. If the observable is distributional as described, the solutions are homogeneous on each side of the hypersurface $\Sigma$, but jump discontinuously from one side of the hypersurface to the other. There is a one-to-one correspondence between these jumps and elements of the phase space $L_{\Sigma}$ on $\Sigma$ and by duality, with linear observables on the phase space. In this way a linear phase space observable $A'$ determines a linear distributional spacetime observable $A$. In terms of the path integral the modification of the action corresponds to inserting the Weyl observable $\exp(\im A)$. Thereby, we can recover the Weyl quantization of phase space observables through path integral quantization, including their commutation relations.

Conversely, a spacetime observable gives rise to a phase space observable as follows. Assume a spacetime region delimited by two spacelike hypersurfaces $\Sigma_1,\Sigma_2$. The space of solutions $L_{[\Sigma_1,\Sigma_2]}$ is a subspace of the space of field configurations $K_{[\Sigma_1,\Sigma_2]}$, and we can identify the former with the phase space on another spacelike hypersurface $\Sigma$, between $\Sigma_1$ and $\Sigma_2$. In this way a spacetime observable $A$ on $K_{[\Sigma_1,\Sigma_2]}$ gives rise to a phase space observable on $L_{\Sigma}$ and in turn to a slice observable $A_{\Sigma}$ again on $K_{[\Sigma_1,\Sigma_2]}$. It is important to mention that this map from spacetime observables to slice observables is surjective, but not injective.
It is instructive to compare the correlation functions for the two observables. To this end assume $A$ and thus $A_\Sigma$ to be linear observables and $W=\exp(\im A)$ and $W_{\Sigma}=\exp(\im A_{\Sigma})$ to be the corresponding Weyl observables. Both correlation functions, $\rho[W](\ncoh_{\phi_1}\tens\ncoh_{\phi_2})$ and $\rho[W_{\Sigma}](\ncoh_{\phi_1}\tens\ncoh_{\phi_2})$ can be factorized according to expression \eqref{eq:ticorfact}. The first factor coincides trivially since it is the amplitude, which does not depend on any observable. The second factor also coincides by construction, $W(\phi_1^- + \phi_2^+)=W_{\Sigma}(\phi_1^- + \phi_2^+)$, since the observables $W$ and $W_{\Sigma}$ coincide on solutions. The third factor differs due to a difference in the observable propagators $\wuu(A,A)$ and $\wuu(A_{\Sigma},A_{\Sigma})$. However, $\wuu(A_{\Sigma},A_{\Sigma})$ turns out to be not only independent of the choice of spacelike hypersurface, but also related to $\wuu(A,A)$ in a simple way. Namely, the former is the real part of the latter \cite{OeZa:lcmeasure}. With definition \eqref{eq:roprop} we can write,
\begin{equation}
  \wuu(A_{\Sigma},A_{\Sigma})=\wh(A_{\Sigma},A_{\Sigma})=\wh(A,A) .
\end{equation}
Indeed, the imaginary part of the propagator is related to the temporal extension of the observable. Recall that the imaginary part of the 2-point function is the causal part which is non-vanishing only between points that are timelike separated. Then, a slice observable is concentrated on a spacelike hypersurface, where any two points are spacelike separated, hence the imaginary part vanishes. We obtain a relative phase factor,
\begin{equation}
  \rho[W](\ncoh_{\phi_1}\tens\ncoh_{\phi_2})=\rho[W_{\Sigma}](\ncoh_{\phi_1}\tens\ncoh_{\phi_2}) \exp\left(-\frac{\im}{2}\Im(\wuu(A,A))\right) .
  \label{eq:weylphase}
\end{equation}
What is more, for linear observables the quantization is identical,
\begin{equation}
  \rho[A](\ncoh_{\phi_1}\tens\ncoh_{\phi_2})=\rho[A_{\Sigma}](\ncoh_{\phi_1}\tens\ncoh_{\phi_2}) .
\end{equation}
In the operator notation this is, $\widetilde{A}=\widetilde{A}_{\Sigma}$.

We recall the use of a spacetime smearing function $f:\R\times\R^3$ for encoding both phase space observables and spacetime observables in scalar field theory, via the assignment \eqref{eq:stobs}. The phase space observable induced from the spacetime observable $A_f$ on a spacelike hypersurface $\Sigma$ corresponds precisely to the slice observable $A_{f,\Sigma}$ as introduced above. Moreover, the quantization of the latter yields the self-adjoint operator $\hat{\phi}(f)$ as given by \eqref{eq:stop} once we identify the Hilbert spaces $\cH_{\Sigma_1}$ and $\cH_{\Sigma_2}$ of initial and final hypersurface. As noted above, for the linear observables the quantization is identical, in operator notation, $\widetilde{A}_f=\widetilde{A}_{f,\Sigma}=\hat{\phi}(f)$.
For the Weyl observables, however, there is a phase difference. From \eqref{eq:weylphase} and \eqref{eq:obspsrc} we get in operator notation,
\begin{equation}
  \widetilde{W}_f=\exp\left(\im\,\hat{\phi}(f)\right)
  \exp\left(-\frac{\im}{2}\int_{\R^4\times\R^4} \xd t\,\xd^3 x\,\xd t'\xd^3 x' f(t,x) f(t',x') \Im{\wtuu(t,x,t',x')}\right) .
\end{equation}

\subsection{Generalized spectral decomposition}
\label{sec:gspec}

We proceed to transport the notion of (regularized) \emph{spectral decomposition} from the operator setting (Section~\ref{sec:nrobs}) to the spacetime setting. Given a linear observable $A$ on phase space, we recall that the observable $H_A^\epsilon(q)$, compare expression \eqref{eq:aeobs}, defines via Weyl quantization the $\epsilon$-regularized \emph{spectral measure} of the observable $A$. Replacing $A$ with a spacetime observable and Weyl quantization with path integral quantization, the role of the regularized spectral measure is played by the correlation functions of $H_A^\epsilon(q)$ as a spacetime observable. It is straightforward to evaluate it on coherent states.
\begin{prop}
\label{prop:regobs}
\begin{equation}
   \rho_M[H_A^{\epsilon}(q)](\ncoh_{\phi})
   =\rho_M(\ncoh_{\phi})\frac{1}{\sqrt{\pi (\epsilon^2 + 2 \wuu(A,A))}}
      \exp\left(-\frac{(A(\phi^{\Lint})-q)^2}{\epsilon^2+2\wuu(A,A)}\right) .
      \label{eq:spcor}
\end{equation}
\end{prop}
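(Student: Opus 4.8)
The plan is to exploit the factorization identity \eqref{eq:corfact} for Weyl observables, which we already have in closed form, together with the fact that the Gaussian $H_A^\epsilon(q)$ can be written as a Gaussian-weighted superposition of the Weyl observables $W_{kA}=\exp(\im k A)$ associated to the rescaled linear observables $kA$, $k\in\R$. Concretely, Fourier-transforming a Gaussian yields the representation
\begin{equation}
  H_A^\epsilon(q)=\frac{1}{2\pi}\int_{-\infty}^\infty\xd k\,
    \exp\left(-\frac{\epsilon^2 k^2}{4}\right)\exp(-\im k q)\, W_{kA},
\end{equation}
which is an identity of observables on $K_M$: evaluating the right-hand side at a fixed field configuration $\phi$ and carrying out the elementary Gaussian integral reproduces expression \eqref{eq:aeobs}. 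I would begin by verifying this representation.

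First I would insert it into the correlation function $\rho_M[\,\cdot\,](\ncoh_{\phi})$ and use linearity in the observable to pull the $k$-integral outside, reducing the computation to $\rho_M[W_{kA}](\ncoh_{\phi})$. Applying the factorization identity \eqref{eq:corfact} to each $W_{kA}$, and using the bilinearity of the observable propagator together with \eqref{eq:vev} to get $\wuu(kA,kA)=k^2\wuu(A,A)$, gives
\begin{equation}
  \rho_M[W_{kA}](\ncoh_{\phi})
  =\rho_M(\ncoh_{\phi})\,\exp\!\left(\im k A(\phi^{\Lint})\right)
    \exp\!\left(-\frac{k^2}{2}\wuu(A,A)\right).
\end{equation}
Substituting this back and collecting the quadratic and linear terms in $k$ leaves a single Gaussian integral in $k$ with complex quadratic coefficient $\tfrac14(\epsilon^2+2\wuu(A,A))$ and linear coefficient $\im(A(\phi^{\Lint})-q)$. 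Performing that integral and simplifying the resulting prefactor reproduces the claimed right-hand side of \eqref{eq:spcor}; these are routine but bookkeeping-heavy steps.

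The point requiring genuine care is the interchange of the $k$-integration with the correlation function and the convergence of the final Gaussian integral. Convergence holds because the real part of the quadratic coefficient is $\tfrac14(\epsilon^2+2\wh(A,A))$, which is strictly positive for $\epsilon>0$ since the Hadamard propagator is positive, $\wh(A,A)\ge 0$, by \eqref{eq:roprop} and the positivity recalled thereafter; this same positivity fixes the branch of the square root appearing in the prefactor unambiguously. For the interchange, I would invoke the Gaussian damping $\exp(-\epsilon^2 k^2/4)$ together with the uniform boundedness of $\lvert\rho_M[W_{kA}](\ncoh_{\phi})\rvert$ in $k$ (which follows from the same real-part bound on the exponent), so that dominated convergence and Fubini apply; alternatively one may phrase the identity at the level of the generating-function formalism, in which $\rho_M[\,\cdot\,]$ acts continuously on the relevant space of observables. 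I expect this justification of the interchange to be the only nontrivial obstacle, the remainder being direct Gaussian integration.
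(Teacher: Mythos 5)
Your proposal is correct and follows essentially the same route as the paper: the paper proves this as the $n=1$ special case of Proposition~\ref{prop:multicor}, whose proof likewise writes $H_A^{\epsilon}(q)$ as a Gaussian-weighted superposition of Weyl observables (the identity \eqref{eq:intid}, which is your Fourier representation after the substitution $k=2t$), applies the factorization identity \eqref{eq:corfact} with \eqref{eq:vev}, and performs the resulting Gaussian integral. Your added remarks on convergence via $\wh(A,A)\ge 0$ and on the interchange of integrals are consistent with the paper's observation that positivity of the Hadamard propagator makes the expression well-defined.
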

\begin{proof}This is a special case of Proposition~\ref{prop:multicor}, see Section~\ref{sec:multiobs}.
\end{proof}
Recall that the real part of the observable propagator $\wuu(A,A)$ is non-negative, making this well-defined. This correlation function is the analog of the operator $\Pi_A^\epsilon(q)$. Defining the observable
\begin{equation}
  H_A^\epsilon[f](\phi)\defeq \int_{-\infty}^{\infty} f(q) H_A^{\epsilon}(q)(\phi),
\end{equation}
yields upon quantization the correlation function $\rho_M[H_A^\epsilon[f]]$, analog of the operator $\Pi_A^\epsilon[f]$, compare expression \eqref{eq:rfcalc}. Note that here we can take the limit $\epsilon\to 0$ at the level of the observable,
\begin{equation}
f(A(\phi))=H_A[f](\phi)\defeq \lim_{\epsilon\to 0} H_{A}^{\epsilon}[f](\phi) ,
\end{equation}
if $f$ is continuous and increases less than the exponential of a square.

This generalized notion of the regularized spectral measure has many properties of the original spectral measure. In particular, we recover the plain amplitude and the quantized linear observable itself, for any value of $\epsilon$,
\begin{equation}
  \rho_M[H_A^{\epsilon}[\one]]=\rho_M,\quad \rho_M[H_A^{\epsilon}[\id]]=\rho_M[A] .
\end{equation}
We also recover an analog of the spectral functional calculus when taking the limit $\epsilon\to 0$. To see this it is sufficient to recover the Weyl observable $W=\exp(\im D)$ from $w(q)\defeq\exp(\im q)$,
\begin{equation}
  \lim_{\epsilon\to 0} \rho_M[H_A^{\epsilon}[w]](\ncoh_{\phi})=\rho_M[W](\ncoh_{\phi}) .
\end{equation}

The (regularized) spectral decomposition of the operator setting is recovered exactly in the case of slice observables. For time-extended observables, however, the analogs of the (regularized) spectral projectors $\Pi_A^\epsilon[\chi_X]$, compare expression \eqref{eq:rfcalc}, are no longer representable as positive or even self-adjoint operators. This can be read off from the correlation function \eqref{eq:spcor}, which we may write in this case in the operator notation as follows,
\begin{equation}
  \langle\ncoh_{\phi_2}, \widetilde{H_A^\epsilon(q)}\ncoh_{\phi_1}\rangle
  =    \langle\ncoh_{\phi_2}, \ncoh_{\phi_1}\rangle
  \frac{1}{\sqrt{\pi (\epsilon^2 + 2 \wuu(A,A))}}
      \exp\left(-\frac{(A(\phi_1^- +\phi_2^+)-q)^2}{\epsilon^2+2\wuu(A,A)}\right) .
      \label{eq:gspecmatrix}
\end{equation}
Even if the initial and final coherent states are the same, $\phi_1=\phi_2$, this is not necessarily real since the observable propagator $\wuu(A,A)$ is not real in general. If $A$ is a slice observable the imaginary part of $\wuu(A,A)$ does vanish, as previously mentioned.

\subsection{From operations to probes}

The spacetime generalizations of quantum operations are \emph{probes} \cite{Oe:posfound,OeZa:lcmeasure}. The simplest probe is the \emph{null probe}, generalizing unitary evolution of the system in a time-interval. Using the completeness relation \eqref{eq:cohcompl}, we can obtain the null probe $\np_M$ in a region $M$ for a state $\sigma\in\cB_{\partial M}$ in the mixed boundary state space from the amplitude \eqref{eq:ampl},
\begin{equation}
   \np_M(\sigma)=\int_{\hat{L}_{\partial M}}\xd\nu(\phi)
    \rho_M(\sigma \coh_{\phi}) \overline{\rho_M(\coh_{\phi})} .
    \label{eq:defnp}
\end{equation}
The property of being a \emph{completely positive} map $\cB_1\to\cB_2$ of quantum operations finds its generalization for probes in the property of being \emph{positive} maps $\cB_{\partial M}\to\C$. Such probes are called \emph{primitive probes}. Linear combinations of probes are probes and linear combinations of primitive probes with non-negative coefficients are primitive probes. The null probe in particular is a primitive probe. For a region between spacelike hypersurfaces $\Sigma_1,\Sigma_2$ the null probe takes the more familiar form
\begin{equation}
  \np_{[\Sigma_1,\Sigma_2]}(\sigma_1\tens\sigma_2)=\tr(\sigma_2 U_{[\Sigma_1,\Sigma_2]} \sigma_1 U_{[\Sigma_1,\Sigma_2]}^{\dagger}) .
  \label{eq:npunitary}
\end{equation}

As we take the amplitude to be defined by the path integral \eqref{eq:ampl}, we can analogously construct more general probes out of correlation functions, compare expression \eqref{eq:corfn}, through the path integral. Given two observables $F,G:K_M\to\R$ define the probe $\cP_M[F|G]:\cB_{\partial M}\to\C$ for pure states by,
\begin{equation}
  \cP_M[F|G](|\psi\rangle\langle\psi|)=\rho_M[F](\psi) \overline{\rho_M[G](\psi)} .
  \label{eq:probecorpure}
\end{equation}
For genuinely mixed states the completeness relation \eqref{eq:cohcompl} then implies the following formula,
\begin{equation}
  \cP_M[F|G](\sigma)=\int_{\hat{L}_{\partial M}}\xd\nu(\phi)\,
  \rho_M[F](\sigma \coh_{\phi}) \overline{\rho_M[G](\coh_{\phi})} .
  \label{eq:probecor}
\end{equation}
This leads to a version of the Schwinger-Keldysh formalism. We have a double path integral with insertions (here $F$) in the forward-time component and insertions (here $G$) in the backward-time component. Crucially, if $F=G$, then the obtained probe is \emph{primitive}. This was the starting point for the modulus-square construction in \cite{OeZa:lcmeasure}, with the probe taken as encoding a measurement of the expectation value of the observable $|F|^2$.

In case the region is delimited by an initial and final hypersurface, we can represent the probe in a notation resembling a generalized quantum operation,
\begin{equation}
  \cP_{[\Sigma_1,\Sigma_2]}[F|G](\sigma_1\tens\sigma_2)=\tr(\sigma_2 \widetilde{F}\sigma_1\widetilde{G}^{\dagger}) .
\end{equation}
If $F=G$ then $\widetilde{F}$ plays the role of the Kraus operator of a pure quantum operation.

The \emph{discard} as a quantum operation is given by taking the trace. In the spacetime setting we can understand this as a primitive probe associated with the spacetime region to the future of a spacelike hypersurface $\Sigma$. We shall also denote this region by $\Sigma^{\lf}$. Similarly, we denote the region to the past of $\Sigma$ by $\Sigma^{\lp}$. Thus, for $\sigma\in\cB_{\Sigma}$ we write $\discard_{\Sigma^{\lf}}(\sigma)=\tr(\sigma)$.
%

\subsection{Locality and composition}
\label{sec:loccomp}

A key aspect of our approach is \emph{compositionality}, both in the pure state setting of amplitudes and correlation functions, as well as in the mixed state setting of probes. Here and in the following, we use the symbol $\comp$ to denote this composition. In the first case this notion is formalized in \emph{general boundary quantum field theory} through the path integral, in the second it is formalized in the \emph{local positive formalism}, and inherited from the path integral through the Schwinger-Keldysh formalism. We recall this very briefly here, and refer the reader to \cite{OeZa:lcmeasure} for details.

Consider spacetime regions $M_1$ and $M_2$ with disjoint interior, their union given by $M=M_1\cup M_2$. Then, the amplitude for $M$ is the composition of the amplitudes for $M_1$ and $M_2$, symbolically,
\begin{equation}
  \rho_M=\rho_{M_1}\comp\rho_{M_2} .
  \label{eq:amplcomp}
\end{equation}
Let $A_1:K_{M_1}\to\R$ be a spacetime observable in $M_1$. It extends trivially to a spacetime observable in $M$, since we have the forgetful map $K_M\to K_{M_1}$, which we denote with the same name here, yielding the identity,
\begin{equation}
   \rho_M[A_1]=\rho_{M_1}[A_1]\comp \rho_{M_2} .
   \label{eq:amplext}
\end{equation}
Conversely, we say that a spacetime observable $A:K_M\to\R$ is \emph{localizable} in the spacetime region $M_1\subseteq M$ if it can be obtained from a spacetime observable $A_1:M_1\to\R$ by extension in this way.
Similarly, taking an observable $A_2$ in $M_2$ we can make sense of the identity
\begin{equation}
    \rho_{M_1}[A_1]\comp \rho_{M_2}[A_2]=\rho_M[A_1\cdot A_2]
    \label{eq:corrcomp}
\end{equation}
by extending both $A_1$ and $A_2$ to $M$.
For linear observables, we also formalize locality in terms of \emph{additive decomposability}. That is, given a linear observable $K_M\to \R$ with $M=M_1\cup M_2$ as above, we require the existence of linear observables $A_1:K_{M_1}\to\R$ and $A_2:K_{M_2}\to\R$ such that $A=A_1+A_2$ in the obvious way.

The analog of the amplitude in the local positive formalism is the \emph{null probe}, compare expression \eqref{eq:defnp}. With spacetime regions $M=M_1\cup M_2$ as before we have the identity
\begin{equation}
  \np_M=\np_{M_1}\comp\np_{M_2} ,
\end{equation}
inherited from the corresponding relation for amplitudes \eqref{eq:amplcomp}. Similarly, for probes \eqref{eq:probecor} constructed from correlation functions, with observables $F_1,G_1$ localized in $M_1$ we have the identity,
\begin{equation}
   \cP_M[F_1|G_1]=\cP_{M_1}[F_1|G_1]\comp\np_{M_2},
   \label{eq:cprobeext}
\end{equation}
inherited from the identity \eqref{eq:amplext}.
Furthermore, with additional observables $F_2,G_2$ localized in $M_2$ we have, in analogy to \eqref{eq:corrcomp},
\begin{equation}
\cP_{M_1}[F_1|G_1]\comp\cP_{M_2}[F_2|G_2]=\cP_M[F_1\cdot F_2|G_1\cdot G_2] .
\end{equation}
If observables $F$ and $G$ are localizable in regions $K$ and $N$, then they are jointly localizable in the union $K\cup N$ and for a region $M$ with $K\subseteq M$ and $N\subseteq M$ we can define the probe $\cP_{M}[F|G]$. Conversely, we can use an identity modeled on \eqref{eq:cprobeext} to give a general characterization of \emph{localizability} of a probe, beyond the specific construction of probes via correlation functions used here. That is, given a probe $\cM_M$ associated to a spacetime region $M$, and suppose $M=M_1\cup M_2$ is a decomposition with disjoint interiors, we say that $\cM_M$ is localizable in the subregion $M_1\subseteq M$ if there exists a probe $\cM_{M_1}$ (we use intentionally the same symbol $\cM$) associated to $M_1$, such that,
\begin{equation}
   \cM_M=\cM_{M_1}\comp\np_{M_2}.
   \label{eq:probeext}
\end{equation}
As a consequence, we can reassociate probes originally associated to a given spacetime region to another one as long as the new region contains a subregion where the probe is localizable. We will freely use this possibility in the following.

Of course, the considerations concerning composition discussed in this section apply in particular to spacetime regions $[\Sigma_1,\Sigma_2]$ that are delimited by an initial and a final spacelike hypersurface. Indeed, as already mentioned, we will limit ourselves for the most part in this paper to spacetime regions of this type, where we can ensure that objects such as amplitudes, correlation functions and probes, exist and are well-defined. In this context, we note that the universal property of the discard probe to be associated to any spacetime region that is to the future of any spacelike hypersurface can be expressed in a manner analogous to the identity \eqref{eq:probeext}. Say $\Sigma_1\le\Sigma_2$ are spacelike hypersurfaces. Then,
\begin{equation}
   \discard_{\Sigma_1^\lf}=\discard_{\Sigma_2^\lf}\comp\np_{[\Sigma_1,\Sigma_2]}= \discard_{\Sigma_2^\lf}\circ\np_{[\Sigma_1,\Sigma_2]} .
   \label{eq:discext}
\end{equation}
Note that the second equality involves a different symbol for composition:
Here and in the following we denote a spacetime decomposition also by the symbol $\circ$ in the special case that it is along a spacelike hypersurface and in the order of past to future from right to left. This is to emphasize the analogy (and sometimes strict equivalence) in this case to the non-relativistic operator setting (Section~\ref{sec:nrobs}).

Using this notation, we can also provide now the proper definition of the Wightman observable propagator, mentioned already in Section~\ref{sec:amocor}. For linear observables $A,B$ defined in a spacetime region delimited by initial and final spacelike hypersurfaces $\Sigma_1,\Sigma_2$ we have,
\begin{equation}
  \wud(A,B)=\wdu(B,A)=\discard_{\Sigma_2^{\lf}}\circ\cP[A|B]_{[\Sigma_1,\Sigma_2]}(\vac) .
  \label{eq:woprop}
\end{equation}

\subsection{Relativistic causality from non-relativistic causality}
\label{sec:relcausst}

\begin{figure}
  \centering
  \includegraphics[width=0.6\textwidth]{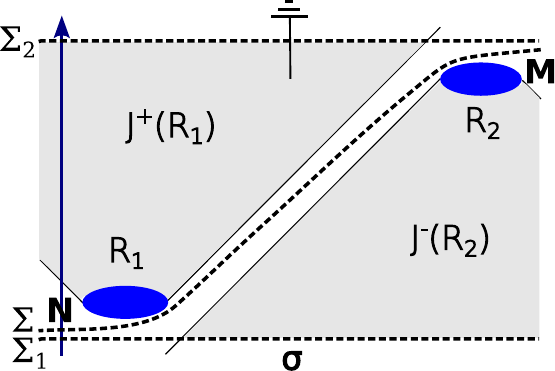}
\caption{Spacetime extended non-selective measurement $N$ in region $R_1$ and selective measurement $M$ in region $R_2$, between initial and final spacelike hypersurfaces $\Sigma_1$ and $\Sigma_2$. A spacelike hypersurface $\Sigma$ separates the two. The initial state at $\Sigma_1$ is $\sigma$ and the system is discarded at $\Sigma_2$.}
\label{fig:relcausality}
\end{figure}

Consider a non-selective probe $\cN_{[\Sigma_1,\Sigma_2]}$ associated to an interval region. We may translate the non-relativistic causality axiom \eqref{eq:nrcaus} to the present setting as,
\begin{equation}
  \discard_{\Sigma_2^\lf}\circ\cN_{[\Sigma_1,\Sigma_2]}=\discard_{\Sigma_1^\lf} .
  \label{eq:nrcst}
\end{equation}
Crucially, if this relation holds, the corresponding relation then holds for the same probe associated to any other interval region. That is, supposing that $\cN_{[\Sigma_1,\Sigma_2]}$ is localizable in a region $R$ we have $\Sigma_1\le R\le\Sigma_2$. Take another interval region $[\Sigma_1,\Sigma_2']$ such that $\Sigma_1'\le R\le \Sigma_2'$. We recall that $\cN_{[\Sigma_1',\Sigma_2']}$ is well-defined, encoding the "same" probe. Using the invariance property \eqref{eq:discext} of the discard probe we then find that the same identity \eqref{eq:nrcst} holds, but with $\Sigma_1,\Sigma_2$ replaced by $\Sigma_1',\Sigma_2'$.

What is more, it is easy to see that relativistic causality then follows from non-relativistic causality. Consider the setting of Figure~\ref{fig:relcausality}, generalizing the instantaneous measurements of Figure~\ref{fig:stlocality} from Section~\ref{sec:relcaus}. We have an initial state $\sigma$ on an initial spacelike hypersurface $\Sigma_1$ and a discard after a final spacelike hypersurface $\Sigma_2$. In between, two measurements take place. A non-selective measurement $N$ is localized in a spacetime region $R_1$ and a selective measurement $M$ is localized in a spacetime region $R_2$. Crucially, $R_2$ is outside of the causal future of $R_1$. This implies that we can choose a spacelike hypersurface $\Sigma$, such that $\Sigma_1\le R_2\le \Sigma\le R_1\le \Sigma_2$. Using the freedom to choose the hypersurfaces delimiting the regions to which the probes encoding the measurements are associated, the probability for a positive outcome is,
\begin{equation}
  \discard_{\Sigma_2^\lf}\circ \cN_{[\Sigma,\Sigma_2]}\circ\cM_{[\Sigma_1,\Sigma]}(\sigma) .
\end{equation}
Here $\cN_{[\Sigma,\Sigma_2]}$ is the probe encoding measurement $N$ and $\cM_{[\Sigma_1,\Sigma]}$ is the probe encoding measurement $M$. However, with the corresponding version of the identity \eqref{eq:nrcst}, this is equal to,
\begin{equation}
  \discard_{\Sigma^\lf}\circ\cM_{[\Sigma_1,\Sigma]}(\sigma)
   = \discard_{\Sigma_2^\lf}\circ\cM_{[\Sigma_1,\Sigma_2]}(\sigma) .
\end{equation}
In particular, the result is independent of measurement $N$, i.e., relativistic causality is satisfied.

\subsection{Measuring spacetime observables}
\label{sec:measurestobs}

\begin{figure}
  \centering
  \includegraphics[width=0.6\textwidth]{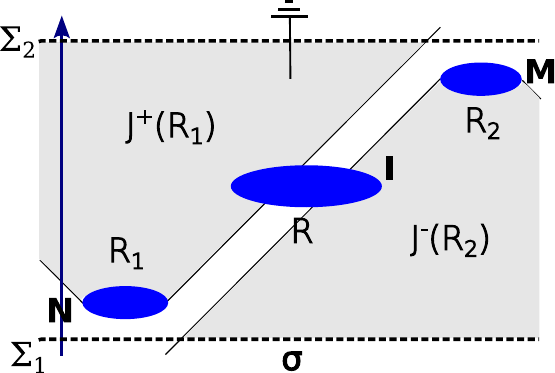}
\caption{Same setup as in Figure~\ref{fig:relcausality}, but with an additional non-selective intermediate measurement $I$ in the spacetime region $R$ to test causal transparency.}
\label{fig:caustrans}
\end{figure}

Even through we have already seen that it does not guarantee causal transparency, we generalize first the \emph{coarse-grained} measurement scheme with discrete outcomes (Section~\ref{sec:nrobs}) to the spacetime setting. We recall that the quantum operation \eqref{eq:selqsubp} for encoding the probability of the outcome being in a subset $X\subseteq S$ of the spectrum $S$ of a phase space observable $A$ can be written as,
\begin{equation}
  \cA_X(\sigma)=\widehat{\chi_X(A)}\,\sigma\,\widehat{\chi_X(A)} .
\end{equation}
Here, $\widehat{\chi_X(A)}$ is the self-adjoint operator associated by quantization to the phase space observable $\chi_X(A)$ obtained as the composition of the characteristic function $\chi_X$ of $X$ with $A$. The corresponding object in the spacetime setting is the correlation function $\rho[\chi_X(A)]$, where $A$ is now taken to be a spacetime observable. This can be obtained in terms of the generalized regularized spectral measure \eqref{eq:spcor} via,
\begin{equation}
  \rho[\chi_X(A)](\ncoh_\phi)=\lim_{\epsilon\to 0} \rho[H_A^\epsilon[\chi_X]](\ncoh_\phi) .
\end{equation}
The corresponding object to encode the measurement in the spacetime setting is the \emph{selective} primitive probe $\cP[\chi_X(A)|\chi_X(A)](\sigma)$. Given a countable partition $\{X_k\}_{k\in I}$ of $\R$ in terms of disjoint non-empty subsets, the corresponding \emph{non-selective probe} should be,
\begin{equation}
  \sum_{k\in I} \cP[\chi_{X_k}(A)|\chi_{X_k}(A)](\sigma) .
\end{equation}
Unfortunately, if $A$ is not a slice observable, it is not clear that this probe satisfies even the \emph{non-relativistic causality axiom} \eqref{eq:nrcaus}. On the other hand we already know that even if $A$ is a slice observable this probe does not in general satisfy \emph{causal transparency}. We will therefore not pursue this measurement scheme further.

We proceed to consider the spacetime generalization of the \emph{fine-grained} measurement scheme with \emph{continuous outcomes}. The spacetime analog of the quantum operation \eqref{eq:rsmop} encoding whether the outcome of measuring the linear spacetime observable $A$ is $q$, is thus the primitive probe,
\begin{equation}
  \cA^\epsilon_M(q)(\sigma)\defeq\sqrt{2\pi}\epsilon\,\cP_M[H_A^\epsilon(q)|H_A^\epsilon(q)](\sigma) .
  \label{eq:probeq}
\end{equation}
This is precisely an example of the \emph{modulus-square construction} \cite{OeZa:lcmeasure}. We also define the corresponding probe-valued functional calculus in analogy to expression \eqref{eq:fcalcqop},
\begin{equation}
  \cA^\epsilon_M[f](\sigma)
  \defeq\int_{-\infty}^{\infty}\xd q\, f(q) \cA^\epsilon_M(q)(\sigma) .
  \label{eq:stopcalc}
\end{equation}
The non-selective probe for this measurement is $\cA^\epsilon_M[\one]$. Crucially, this satisfies the non-relativistic (and thus, by Section~\ref{sec:relcausst} also the relativistic) causality axiom, even if $A$ is not a slice observable, see Appendix~\ref{sec:mathst} for the proof.

\begin{prop}
\label{prop:nsprobeobs}
Let $[\Sigma_1,\Sigma_2]$ be a spacetime region delimited by an initial and a final spacelike hypersurface. Let $A:K_{[\Sigma_1,\Sigma_2]}\to\R$ be a linear observable, $\epsilon>0$ and $\sigma\in\cT_{\Sigma_1}$. Then,
\begin{equation}
  \discard_{\Sigma_2^\lf}\circ\cA_{[\Sigma_1,\Sigma_2]}^\epsilon[\one](\sigma)
  =\discard_{\Sigma_1^\lf}(\sigma) .
\end{equation}
\end{prop}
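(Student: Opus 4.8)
The plan is to reduce the claim to a single operator identity and then establish that identity by decomposing $H_A^\epsilon(q)$ into Weyl observables. Writing the probe in the operator notation for an interval region, where $\cP_{[\Sigma_1,\Sigma_2]}[F|G]$ acts as $\sigma\mapsto\widetilde{F}\sigma\widetilde{G}^{\dagger}$, and using definitions \eqref{eq:probeq} and \eqref{eq:stopcalc}, the fact that $\discard_{\Sigma_2^\lf}$ is the trace, and cyclicity of the trace, I would first obtain
\begin{equation}
  \discard_{\Sigma_2^\lf}\circ\cA_{[\Sigma_1,\Sigma_2]}^\epsilon[\one](\sigma)
  =\sqrt{2\pi}\,\epsilon\int_{-\infty}^{\infty}\xd q\,\tr\!\left(\widetilde{H_A^\epsilon(q)}^{\dagger}\,\widetilde{H_A^\epsilon(q)}\,\sigma\right).
\end{equation}
It therefore suffices to prove the normalization
\begin{equation}
  \sqrt{2\pi}\,\epsilon\int_{-\infty}^{\infty}\xd q\,\widetilde{H_A^\epsilon(q)}^{\dagger}\,\widetilde{H_A^\epsilon(q)}=\id,
\end{equation}
which is the spacetime analogue of the POVM normalization used in the instantaneous case, the essential new feature being that $\widetilde{H_A^\epsilon(q)}$ is no longer self-adjoint when $A$ is genuinely time-extended.

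The key tool is to represent $H_A^\epsilon(q)$ as a Gaussian superposition of Weyl observables. Using $\frac{1}{\sqrt{\pi}\epsilon}e^{-y^2/\epsilon^2}=\frac{1}{2\pi}\int\xd k\,e^{-\epsilon^2 k^2/4}e^{\im k y}$ with $y=A-q$, one has, at the level of observables on $K_{[\Sigma_1,\Sigma_2]}$,
\begin{equation}
  H_A^\epsilon(q)=\frac{1}{2\pi}\int_{-\infty}^{\infty}\xd k\,e^{-\epsilon^2 k^2/4}\,e^{-\im k q}\,W_{kA},\qquad W_{kA}\defeq\exp(\im k A).
\end{equation}
Quantizing term by term (by linearity of the correlation function in the observable) and inserting the slice-observable phase relation \eqref{eq:weylphase}, which gives $\widetilde{W}_{kA}=e^{-\frac{\im}{2}k^2 s}\,e^{\im k\hat{A}}$ with $\hat{A}\defeq\widetilde{A}=\widetilde{A}_\Sigma$ self-adjoint and $s\defeq\Im\wuu(A,A)$, yields
\begin{equation}
  \widetilde{H_A^\epsilon(q)}=\frac{1}{2\pi}\int_{-\infty}^{\infty}\xd k\,e^{-\epsilon^2 k^2/4}\,e^{-\im k q}\,e^{-\frac{\im}{2}k^2 s}\,e^{\im k\hat{A}}.
\end{equation}
A direct check against the explicit matrix element \eqref{eq:gspecmatrix}, the $k$-integral being Gaussian with complex width $c=\epsilon^2+2\wuu(A,A)$, confirms this representation.

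With this in hand, the product $\widetilde{H_A^\epsilon(q)}^{\dagger}\widetilde{H_A^\epsilon(q)}$ becomes a double integral over variables $k,k'$. I would perform the $q$-integration first, producing $2\pi\,\delta(k-k')$ and collapsing the expression to its diagonal. On the diagonal two cancellations occur simultaneously: the temporal-extension phase $e^{\frac{\im}{2}(k'^2-k^2)s}$ becomes trivial, and the unitaries $e^{-\im k'\hat{A}}e^{\im k\hat{A}}$ reduce to $\id$. What remains is $\frac{1}{2\pi}\int\xd k\,e^{-\epsilon^2 k^2/2}\,\id=\frac{1}{\sqrt{2\pi}\epsilon}\id$, giving exactly the desired normalization after multiplication by $\sqrt{2\pi}\epsilon$. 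The passage from coherent (pure) states to a general $\sigma\in\cT_{\Sigma_1}$ is then routine via the completeness relation \eqref{eq:cohcompl}.

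The main obstacle, and the conceptual heart of the statement, is the fate of the phase $e^{-\frac{\im}{2}k^2 s}$ carried by the imaginary part $s$ of the observable propagator: this is precisely the contribution that distinguishes a genuinely time-extended observable from a slice observable, and a priori it is what could spoil even non-relativistic causality (as it does for the coarse-grained scheme discarded earlier). The point is that the $q$-integration forces $k=k'$, at which value this phase cancels against its conjugate and drops out entirely. The accompanying technical care concerns justifying the interchange of the Fourier $k$-integral with quantization and the ordering of the $q$- and $k$-integrations, together with convergence of the Gaussian integrals of complex width; here the regulator $\epsilon>0$ combined with $\wh(A,A)=\Re\wuu(A,A)\ge 0$ guarantees $\Re c>0$, keeping every integral absolutely convergent.
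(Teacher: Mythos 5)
Your proof is correct, and at its core it runs on the same engine as the paper's: the Fourier/Gaussian representation of $H_A^\epsilon(q)$ as a superposition of Weyl observables (the paper's identity \eqref{eq:intid}), the $q$-integration producing a delta function that collapses the double integral to its diagonal, and the observation that on the diagonal the causal phase coming from $\Im\,\wuu(A,A)$ cancels between the two Schwinger--Keldysh branches. Where you differ is in the packaging. The paper does not work at the operator level at all: it proves a strictly stronger statement, Lemma~\ref{lem:discardint}, namely
\[
\sqrt{2\pi}\epsilon\int \xd q\;\discard\circ\cP[H_A^{\epsilon}(q+s)|H_A^{\epsilon}(q)](\sigma)=e^{-s^2/2\epsilon^2}\discard(\sigma),
\]
evaluated on generalized coherent states $\Xi_{\phi|\phi'}$ using the explicit formula for $\discard\circ\cP[\exp(\im A)|\exp(\im A')]$, and obtains the proposition as the case $s=0$; in that computation your "unitarity on the diagonal" appears instead as the cancellation $4tt'\,\wh(A,A)-2t^2\wuu(A,A)-2t'^2\wdd(A,A)\to 0$ at $t=t'$, using $\wdd=\overline{\wuu}$ and $\wh=\Re\wuu$. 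Your reduction to the POVM-normalization identity $\sqrt{2\pi}\epsilon\int\xd q\,\widetilde{H_A^\epsilon(q)}^{\dagger}\widetilde{H_A^\epsilon(q)}=\id$ is cleaner and makes the physical content (a resolution of the identity by non-self-adjoint "Kraus operators") more transparent, and your use of the phase relation \eqref{eq:weylphase} to write $\widetilde{W}_{kA}=e^{-\frac{\im}{2}k^2 s}e^{\im k\hat A}$ is a legitimate shortcut consistent with \eqref{eq:gspecmatrix}. What the paper's route buys is the shifted version with general $s$, which is not an idle generalization: it is exactly what is needed in the proof of Theorem~\ref{thm:ctst}, where the observable is split as $A=A_L+A_R$ via Lemma~\ref{lem:addid} and the discard is applied to the left factor at a displaced argument. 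If you intended your argument to feed into the causal-transparency proof as well, you would need to redo it with the relative shift, at which point the Gaussian factor $e^{-s^2/2\epsilon^2}$ reappears and the two proofs become essentially indistinguishable. Your closing remarks on interchanging the $k$- and $q$-integrations and on $\Re(\epsilon^2+2\wuu(A,A))>0$ are at the same level of rigor as the paper's own treatment.
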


It remains to verify that the fine-grained measurement scheme of spacetime observables satisfies \emph{causal transparency}. This is illustrated for the present spacetime setting with temporally extended observables in Figure~\ref{fig:caustrans}. For the special case of slice observables this follows from Theorem~\ref{thm:ctrans}. The answer is affirmative also for temporally extended observables as implied by the following theorem, see Appendix~\ref{sec:mathst} for the proof.

\begin{thm}
  \label{thm:ctst}
  Let $R_1,R,R_2$ be disjoint regions such that $R_2$ does not intersect the causal future of $R_1$. Let $\cN$ be a non-selective probe localizable at $R_1$, $\cM$ a selective probe localizable at $R_2$ and $A:K_{R}\to\R$ a linear observable. Let $\Sigma_1,\Sigma_2$ be spacelike hypersurfaces $\Sigma_1\le\Sigma_2$, such that $R_1,R,R_2$ are all in the future of $\Sigma_1$ and in the past of $\Sigma_2$. Let $\epsilon>0$ and $\cI=\cA^{\epsilon}[\one]$ and $\sigma\in\cT_{\Sigma_1}$. Then,
  \begin{equation}
     \discard_{\Sigma_2^\lf}\circ\left(\cM\comp \cI\comp \cN\right)(\sigma)
     = \discard_{\Sigma_2^\lf}\circ\left(\cM\comp \cI\right)(\sigma).
  \end{equation}
  The probes $\cM,\cN,\cI$ are associated to regions that are intervals and together comprise $[\Sigma_1,\Sigma_2]$.
\end{thm}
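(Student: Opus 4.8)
The plan is to lift the operator-setting proof of Theorem~\ref{thm:ctrans} to the path-integral/probe framework, reducing causal transparency to a single localizability statement that is then settled by the relativistic-causality argument of Section~\ref{sec:relcausst}. Let $\Sigma_1\le\Sigma_a\le\Sigma_b\le\Sigma_2$ be the intermediate hypersurfaces, so that $\cN$ is associated to $[\Sigma_1,\Sigma_a]\supseteq R_1$, $\cI$ to $[\Sigma_a,\Sigma_b]\supseteq R$, and $\cM$ to $[\Sigma_b,\Sigma_2]\supseteq R_2$. I would first collapse the later two probes into the single functional $\mathcal{E}\defeq\discard_{\Sigma_2^\lf}\circ\cM\circ\cI$ defined on states at $\Sigma_a$, so that the claimed identity becomes $\mathcal{E}\circ\cN=\mathcal{E}\circ\np_{[\Sigma_1,\Sigma_a]}$ for all $\sigma$. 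The theorem then follows once I show that $\mathcal{E}$ is localizable in $R_2$: since $R_2\not\subseteq J^+(R_1)$ and $\cN$ is non-selective, the argument of Section~\ref{sec:relcausst}---equivalently, the spacelike commutativity of the corresponding localized operators together with trace preservation---removes the $\cN$-dependence exactly as $[J,N_k]=0$ with $\sum_k N_k^\dagger N_k=\id$ does in the operator setting.

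The core computation is to evaluate $\mathcal{E}$ explicitly and exhibit its localization. I would write $\cI=\cA^\epsilon[\one]=\sqrt{2\pi}\epsilon\int\xd q\,\cP[H_A^\epsilon(q)|H_A^\epsilon(q)]$ and Fourier-decompose the regularized spectral insertion into Weyl observables, $H_A^\epsilon(q)=\frac{1}{2\pi}\int\xd k\,e^{-\epsilon^2 k^2/4}e^{-\im k q}\exp(\im k A)$. Carrying out the $q$-integration collapses the forward and backward insertions against one another (the $\xd q$-integral produces a delta-function identifying the two Fourier variables), leaving $\mathcal{E}$ as a Gaussian average over a single variable $k$ with weight $\exp(-\epsilon^2 k^2/2)$ of $\discard_{\Sigma_2^\lf}\circ\cM$ with its forward and backward branches conjugated by $\exp(\im k A)$ and $\exp(-\im k A)$. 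This is the path-integral counterpart of the operator reduction of $\sqrt{2\pi}\epsilon\int\xd q\,\Pi_A^\epsilon(q)(\cdot)\Pi_A^\epsilon(q)$ to the Gaussian average of $\operatorname{Ad}_{\exp(\im k\hat A)}$ underlying Theorem~\ref{thm:ctrans}.

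Next I would use the factorization identity \eqref{eq:corfact}--\eqref{eq:ticorfact} to evaluate this Weyl conjugation inside the correlation functions defining $\cM$. Because $A$ is a \emph{linear} observable supported in $R$, conjugating those correlation functions by $\exp(\pm\im k A)$ shifts the field configurations entering them by a \emph{c-number} fixed by the observable propagator between $R$ and $R_2$; no new operator content supported outside $R_2$ is generated. After the Gaussian average over $k$ this shows that $\mathcal{E}$ agrees with $\discard_{\Sigma_2^\lf}\circ\cM$ up to a $k$-smeared c-number redefinition of the insertions in $R_2$, hence is localizable in $R_2$. Linearity is essential precisely here: for nonlinear $A$ the conjugation would produce operator-valued shifts, enlarging the support beyond $R_2$ and breaking the argument, in agreement with the remark following Theorem~\ref{thm:ctrans}.

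I expect the main obstacle to be this localization step made rigorous in the correlation-function language rather than by operator manipulation. Concretely, one must track the c-number shifts induced by $\exp(\pm\im k A)$ through the factorization identity and verify that, after the $q$- and $k$-integrations, the support of the resulting insertions genuinely stays inside $R_2$ and never reaches into $J^+(R_1)$; the bookkeeping is delicate because $R$ may intersect both $J^+(R_1)$ and $J^-(R_2)$, so the propagator connecting $R$ to $R_2$ is generically nonzero. A subsidiary technical point is that the $q$-integration must be shown to yield a real, Hadamard-propagator Gaussian width $\epsilon^2+2\wth(A,A)$ rather than the complex Feynman width $\epsilon^2+2\wuu(A,A)$ of a single branch; this is what guarantees that the surviving $k$-integral is an average over a genuine (unitary) Weyl flow and hence support-preserving. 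It mirrors the reconstruction \eqref{eq:roprop} of the real propagator from the forward and backward branches and is the spacetime manifestation of the cancellation that renders the non-selective fine-grained measurement causally transparent.
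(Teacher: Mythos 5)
Your reduction of $\cI=\cA^\epsilon[\one]$ to a Gaussian mixture of Weyl conjugations is correct (the $q$-integral does identify the two Fourier variables, exactly as in the paper's Lemma~\ref{lem:discardint}), and it correctly isolates where linearity of $A$ enters. However, the core step of your argument has a genuine gap: you need that $\discard_{\Sigma_2^\lf}\circ\cM\comp\cP[e^{\im k A}|e^{\im k A}]$ equals $\discard_{\Sigma_2^\lf}\circ\cM'_k$ with $\cM'_k$ again localizable in $R_2$, and you justify this by conjugating ``the insertions'' of $\cM$ and observing that Weyl conjugation shifts field arguments by c-numbers. That reasoning presupposes that $\cM$ is given explicitly by correlation functions of observables supported in $R_2$ (or that its Kraus operators lie in the local field algebra there). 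The theorem, and the paper's notion of localizability \eqref{eq:probeext}, are purely compositional: all that is known about $\cM$ is that it factors as $\cM_{M_1}\comp\np_{M_2}$ for a region $M_1\supseteq R_2$. Nothing in the framework lets you ``open up'' an abstract probe and conjugate its interior by a unitary supported in $R$ --- a region that intersects both $J^+(R_1)$ and $J^-(R_2)$ and therefore cannot be commuted past $\cM$ or $\cN$ by reassociation of regions. Supplying the missing lemma (Weyl conjugation preserves abstract probe localizability) is essentially a Haag-duality-type statement that the paper deliberately avoids. A secondary slip: the $q$-integration does not by itself produce the real width $\epsilon^2+2\wh(A,A)$; it produces $\delta(t-t')$ and a $k$-Gaussian of width set by $\epsilon$ alone, and the Hadamard combination only appears after composing with the discard.

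The paper's proof is architected precisely so that $\cM$ and $\cN$ are never opened up. It chooses spacelike hypersurfaces $\sdd,\sdu,\sud,\suu$ that coincide inside $R$, cutting it into a future part $R_L$ and a past part $R_R$, while passing entirely above (respectively below) both $R_1$ and $R_2$. Using additive decomposability it writes $A=A_L+A_R$ with $A_L,A_R$ supported in $R_L,R_R$, and the convolution identity of Lemma~\ref{lem:addid} splits the insertion $H_A^\epsilon(q)$ into $\int\xd s\,H_{A_L}^{\epsilon'}(q-s)H_{A_R}^{\epsilon'}(s)$ with $\epsilon'=\epsilon/\sqrt2$. The $A_L$-insertion is then reassociated to an interval in the far future and the $A_R$-insertion to the far past of both $\cM$ and $\cN$; Lemma~\ref{lem:discardint} collapses the $q$-integrated $A_L$-piece against the discard into the factor $\exp(-(s-t)^2/2\epsilon'^2)$, after which the causality axiom of $\cN$ (Proposition~\ref{prop:nsprobeobs} is the same mechanism) removes $\cN$, and the steps are reversed. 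Only the abstract localizability of $\cM$ and $\cN$ and the explicit form of the observable insertions are ever used. If you want to salvage your route, you would either have to restrict the theorem to probes $\cM$ of the explicit form $\cP[F|G]$ with $F,G$ localizable in $R_2$, or prove the conjugation-preserves-localizability lemma separately; as written, the argument does not close.
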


As for everything else, the fine-grained measurement scheme of the instantaneous operator setting (Section~\ref{sec:instqft}) is recovered by using slice observables. As there, we may ask for a generalization to non-linear observables. The answer is also similar. In particular, the proof of the causal transparency property relies on the linearity of the observable. But the second part of the answer given in that setting applies also in the spacetime generalization: Namely, we can model the measurement of non-linear observables that arise as functions of linear observables within the (regularized) \emph{operational functional calculus} given by expression \eqref{eq:stopcalc}. It is instructive however, to get some understanding of the difference to a naive insertion of a non-linear observable $A$ into expression \eqref{eq:aeobs}. Suppose that $A$ is a linear observable, but we want to measure the expectation value of $g(A)$, where $g:\R\to\R$ is a continuous function that increases less than exponentially. With the insertion method we obtain the (regularized) probe
\begin{equation}
  \sqrt{2\pi}\epsilon \int_{-\infty}^{\infty}\xd q\, q\, \cP[H_{g(A)}^\epsilon|H_{g(A)}^\epsilon] .
  \label{eq:stnlinins}
\end{equation}
In contrast, the operational calculus \eqref{eq:stopcalc} yields,
\begin{equation}
  \sqrt{2\pi}\epsilon \int_{-\infty}^{\infty}\xd q\, g(q)\, \cP[H_{A}^\epsilon|H_{A}^\epsilon] .
  \label{eq:stnlincalc}
\end{equation}
Both of these are calculated in terms of the Schwinger-Keldysh double path integral with observables inserted. By thinking of the latter as observables $K\times K\to\R$ on the double field configuration space we can make a comparison without attempting to fully evaluate the path integrals. Furthermore, we assume $\epsilon$ to be small to evaluate the integrals over $q$. In the insertion case \eqref{eq:stnlinins} we obtain,
\begin{equation}
  (\phi,\phi')\mapsto \exp\left(-\frac{1}{2\epsilon^2}\left(g(A(\phi))-g(A(\phi'))\right)^2\right)
  \frac12 \left(g(A(\phi))+g(A(\phi'))\right) .
\end{equation}
Here we label the field configurations in the forward and backward path integrals $\phi$ and $\phi'$ respectively.
In contrast, from the operational calculus \eqref{eq:stnlincalc} we obtain,
\begin{equation}
  (\phi,\phi')\mapsto \exp\left(-\frac{1}{2\epsilon^2}\left(A(\phi)-A(\phi')\right)^2\right)
   g\left(\frac12 \left(A(\phi)+A(\phi')\right)\right) .
\end{equation}
In both cases, the first factor acts like an approximate delta-function while the second factor under that constraint yields approximately $g(A(\phi))$, as required. There is an interesting difference, however. In the first case, the delta function would enforce $g(A(\phi))=g(A(\phi'))$, while in the second $A(\phi)=A(\phi')$. While the second equality implies the first, this is not the case the other way round. More precisely, it is the case the other way round if and only if $g$ is invertible. This is in line with earlier considerations about the fine-grained measurement scheme \cite{Oe:spectral}, see also Section~\ref{sec:instqft}. For causal transparency (satisfied by the non-selective version of the probe \eqref{eq:stnlincalc} but not of the probe \eqref{eq:stnlinins}) it seems essential that we extract from the quantum system the information of the complete value of the observable (here $A$) and not only a reduced version of that information (here $g(A)$ if $g$ is not invertible).


\section{Composite Observables}
\label{sec:multiobs}

As in the operator setting a straightforward generalization to a multi-observable calculus can be performed in the spacetime setting. Consider linear observables $A_1,\ldots,A_n$ localizable in disjoint regions, and a regulator $\epsilon>0$. We may then define the composite primitive probe,
\begin{equation}
  \cA^\epsilon[\va{q}]\defeq\cA_1^\epsilon(q_1)\comp\cdots\comp\cA_n^\epsilon(q_n)= (\sqrt{2\pi} \epsilon)^n \cP[H_{A_1}^\epsilon(q_1)\cdots H_{A_n}^\epsilon(q_n)|H_{A_1}^\epsilon(q_1)\cdots H_{A_n}^\epsilon(q_n)],
   \label{eq:stcomprprobe}
\end{equation}
where we abbreviate $\va{q}\defeq (q_1,\ldots,q_n)$. Given a function $f:\R^n\to\R$ we further define,
\begin{align}
  \cA^\epsilon[f] & \defeq
   \int_{\R^n}\xd \va{q}\, f(\va{q})\,\cA^\epsilon[\va{q}] .
   \label{eq:probef}
\end{align}
The corresponding non-selective probe is $\cA^\epsilon[\one]$, where $\one$ is again the constant function with unit value. Relativistic causality and causal transparency follow by composition from those properties for the individual probes $\cA_k^\epsilon[\one]$.\footnote{Note a subtlety here: By using relativistic causality \emph{and} causal transparency of the individual component probes, we can derive relativistic causality of the composite probe, \emph{without} requiring that the component probes are localizable in regions that are globally causally orderable.} The right-hand side of expression \eqref{eq:stcomprprobe} makes the freedom of composition in spacetime that our formalism affords rather explicit, through the commutative multiplication of functions (observables) on field configuration space. What is more, this expression is in principle also well-defined if the component observables are localized in spacetime regions that overlap. However, it is unclear whether this would be physically sensible. We will not pursue this possibility further.

It turns out that quite explicit formulas can be derived for probabilities and expectation values of these composite probes. This is the focus of the present section. Proofs of statements in this section can be found in Appendix~\ref{sec:pmultiobs}. We introduce the following notation, $W_{i j}\defeq \wuu(A_i,A_j)$ and use $W$ to denote the matrix with entries $W_{i j}$. We also write $W=R+\im I$, where $R$ is the real and $I$ is the imaginary part of the matrix $W$, understood entry-wise. Since the matrix entries are the Feynman propagators this means that $R$ consists of Hadamard propagators while $I$ consists of causal propagators. More precisely, $2I=W_{\mathrm{ret}}+W_{\mathrm{adv}}$, where $W_{\mathrm{ret}}$ is the matrix consisting of retarded propagators, while $W_{\mathrm{adv}}$ is the matrix consisting of advanced propagators, compare Section~\ref{sec:amocor}. Recall that the Hadamard propagator is positive. It is easy to see that this implies that the matrix $R$ viewed as an operator on a complex Hilbert space of dimension $n$ is positive, i.e., $R\ge 0$.

As a first step we note that Proposition~\ref{prop:regobs} generalizes to multiple observables. We write $\va{\phi}\defeq (A_1(\phi^\Lint),\ldots,A_n(\phi^\Lint))$. On coherent states we obtain the following value.
\begin{prop}
\label{prop:multicor}
For $\epsilon>0$ the matrix $\epsilon^2\id+2W$ is invertible, and the following is well-defined.
\begin{multline}
   \rho[H_{A_1}^{\epsilon}(q_1)\cdots H_{A_n}^{\epsilon}(q_n)](\ncoh_{\phi}) \\
   =\frac{1}{\sqrt{\pi^n \det(\epsilon^2\id + 2W)}}\rho(\ncoh_{\phi})
      \exp\left(- \left(\va{\phi}-\va{q}\right)^\mathrm{T}
       \left(\epsilon^2\id+2W\right)^{-1}
      \left(\va{\phi}-\va{q}\right) \right) .
\end{multline}
Moreover, if $R$ is non-degenerate, then $W$ is invertible and the expression is well-defined also for $\epsilon=0$.
\end{prop}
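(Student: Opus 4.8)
The plan is to reduce the product of Gaussians to a superposition of Weyl observables, whose correlation function is already fixed by the factorization identity \eqref{eq:corfact}, and then to carry out a single multivariate Gaussian integral over the superposition parameters. First I would represent each factor by its Gaussian Fourier transform,
\begin{equation*}
  H_{A_i}^\epsilon(q_i)(\phi)=\frac{1}{2\pi}\int_{-\infty}^{\infty}\xd k_i\,
  \exp\left(\im k_i\left(A_i(\phi)-q_i\right)\right)\exp\left(-\frac{\epsilon^2 k_i^2}{4}\right) .
\end{equation*}
Taking the product over $i$ and abbreviating $A_{\va{k}}\defeq\sum_i k_i A_i$, a linear observable, the $\phi$-dependent part of the integrand is the Weyl observable $W_{\va{k}}\defeq\exp(\im A_{\va{k}})$, so that the product of Gaussians is exhibited as a $\va{k}$-integral of Weyl observables with scalar weight $\exp(-\im\,\va{k}\cdot\va{q})\exp(-\tfrac{\epsilon^2}{4}\va{k}^{\mathrm{T}}\va{k})$.

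Next, using linearity of the correlation function in its observable argument, evident from \eqref{eq:corfn}, I would exchange the $\va{k}$-integral with $\rho[\,\cdot\,](\ncoh_\phi)$, reducing the task to evaluating $\rho[W_{\va{k}}](\ncoh_\phi)$. The factorization identity \eqref{eq:corfact} gives this as a product of three factors: the amplitude $\rho(\ncoh_\phi)$, which is independent of $\va{k}$ and factors out of the integral; the on-shell value $W_{\va{k}}(\phi^\Lint)=\exp(\im\,\va{k}\cdot\va{\phi})$ with $\va{\phi}=(A_1(\phi^\Lint),\ldots,A_n(\phi^\Lint))$; and the vacuum correlation function $\rho[W_{\va{k}}](\vac)=\exp(-\tfrac12\wuu(A_{\va{k}},A_{\va{k}}))$ from \eqref{eq:vev}. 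The decisive algebraic step is that bilinearity of the observable propagator turns the last exponent into a quadratic form, $\wuu(A_{\va{k}},A_{\va{k}})=\sum_{i,j}k_i k_j\,\wuu(A_i,A_j)=\va{k}^{\mathrm{T}}W\va{k}$. Collecting all $\va{k}$-dependence, I am left with the single integral
\begin{equation*}
  \rho(\ncoh_\phi)\,\frac{1}{(2\pi)^n}\int_{\R^n}\xd\va{k}\,
  \exp\left(\im\,\va{k}\cdot(\va{\phi}-\va{q})\right)
  \exp\left(-\frac14\,\va{k}^{\mathrm{T}}\left(\epsilon^2\id+2W\right)\va{k}\right),
\end{equation*}
which evaluates by the standard complex Gaussian formula, after routine bookkeeping of the determinant and of the powers of $2$ and $\pi$, to the claimed expression: the prefactor collapses to $\left(\pi^n\det(\epsilon^2\id+2W)\right)^{-1/2}$ and the exponent to $-(\va{\phi}-\va{q})^{\mathrm{T}}(\epsilon^2\id+2W)^{-1}(\va{\phi}-\va{q})$.

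The step I expect to require the most care is justifying convergence and invertibility, i.e.\ that $\epsilon^2\id+2W$ is a legitimate covariance. Since the $A_i$ are classical observables they commute, so $W$ is complex symmetric; writing $W=R+\im I$ with $R,I$ real symmetric, the positivity of the Hadamard propagator recalled just before the proposition gives $R\ge 0$. Hence the real part of $\epsilon^2\id+2W$ equals $\epsilon^2\id+2R\ge\epsilon^2\id$, which is positive definite for $\epsilon>0$; this both guarantees absolute convergence of the $\va{k}$-integral and, via the implication $\va{v}^\dagger(\epsilon^2\id+2W)\va{v}=0\Rightarrow\va{v}^\dagger(\epsilon^2\id+2R)\va{v}=0\Rightarrow\va{v}=0$ (using that $\va{v}^\dagger I\va{v}$ is real, so the imaginary part contributes separately), invertibility of the matrix. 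The very same argument with $\epsilon=0$ applies verbatim once $R$ is non-degenerate, i.e.\ $R>0$, since then the real part $2R$ of $2W$ is already positive definite; this yields the final assertion. The only remaining subtlety is the formal interchange of the Fourier integral with the correlation function, which is legitimate in the coherent-state framework of \cite{Oe:holomorphic,Oe:feynobs} precisely because positivity of the real part of the covariance renders all Gaussian integrands absolutely integrable.
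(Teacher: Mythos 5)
Your proposal is correct and follows essentially the same route as the paper's proof: the paper likewise writes each Gaussian factor as a Fourier integral (its identity \eqref{eq:intid} is your representation under $k_i=2t_i$), applies the factorization identity \eqref{eq:corfact} together with \eqref{eq:vev} to reduce the correlation function to the quadratic form $\va{t}^{\mathrm{T}}(\epsilon^2\id+2W)\va{t}$, and evaluates the resulting Gaussian integral. Your invertibility argument via the positive-definite real part $\epsilon^2\id+2R$ is also the one given in Appendix~\ref{sec:pmultiobs}.
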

We may infer the value of the composite probe at outcome $\va{q}$ from this correlation function. To this end it is convenient to evaluate on a generalized pure state $\Xi_{\beta|\gamma}\defeq |\ncoh_\beta\rangle\langle\ncoh_{\gamma}|$ since the definition \eqref{eq:probecorpure} then implies a factorization,
\begin{equation}
  \cP[F|G](\Xi_{\beta|\gamma})=
  \rho[F](\ncoh_{\beta}) \overline{\rho[G](\ncoh_{\gamma})} .
\end{equation}
The use of these generalized pure states $\Xi_{\beta|\gamma}$, which are not really states, but objects in the vector space $\cB$ spanned by states, is also convenient since they can be used to generate all states via (two copies of) the completeness relation \eqref{eq:cohcompl}. This is not possible with ordinary coherent states $\Xi_{\phi}\defeq\Xi_{\phi|\phi}=|\ncoh_\phi\rangle\langle\ncoh_{\phi}|$.
With the identity \eqref{eq:stcomprprobe} we then obtain from Proposition~\ref{prop:multicor} the following result.
\begin{prop}
   \label{prop:multiprobeq}
\begin{multline}
\cA^\epsilon[\va{q}](\Xi_{\beta|\gamma})
   =\np(\Xi_{\beta|\gamma})\sqrt{\frac{2}{\pi}}^n
   \frac{\epsilon^n}{|\det(\epsilon^2\id + 2W)|} \\
      \exp\left(- \left(\va{\beta}-\va{q}\right)^\mathrm{T}
       \left(\epsilon^2\id+2W\right)^{-1}
      \left(\va{\beta}-\va{q}\right) 
      -\left(\overline{\va{\gamma}}-\va{q}\right)^\mathrm{T}
       \left(\epsilon^2\id+2\overline{W}\right)^{-1}
      \left(\overline{\va{\gamma}}-\va{q}\right) 
      \right) .
\end{multline}
\end{prop}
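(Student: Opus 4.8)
The plan is to treat this as an essentially immediate consequence of Proposition~\ref{prop:multicor} by exploiting the factorization of modulus-square probes on the generalized pure states $\Xi_{\beta|\gamma}$. First I would combine the defining identity for $\cP[F|G]$ on $\Xi_{\beta|\gamma}$, namely $\cP[F|G](\Xi_{\beta|\gamma})=\rho[F](\ncoh_\beta)\,\overline{\rho[G](\ncoh_\gamma)}$, with the identity \eqref{eq:stcomprprobe} writing $\cA^\epsilon[\va{q}]$ as $(\sqrt{2\pi}\epsilon)^n$ times the probe $\cP[F|F]$ for $F=H_{A_1}^\epsilon(q_1)\cdots H_{A_n}^\epsilon(q_n)$. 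This splits the evaluation into a factor depending holomorphically on $\beta$ and a complex-conjugated factor depending on $\gamma$, each of which is exactly the multi-observable correlation function already computed in Proposition~\ref{prop:multicor}.

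Next I would substitute the explicit Gaussian of Proposition~\ref{prop:multicor} into both factors. The $\beta$-factor supplies $\bigl(\pi^n\det(\epsilon^2\id+2W)\bigr)^{-1/2}\rho(\ncoh_\beta)$ times the exponential of $-(\va{\beta}-\va{q})^{\mathrm{T}}(\epsilon^2\id+2W)^{-1}(\va{\beta}-\va{q})$, with $\va{\beta}=(A_1(\beta^\Lint),\ldots,A_n(\beta^\Lint))$. The $\gamma$-factor is the complex conjugate of the same expression with $\beta$ replaced by $\gamma$. The key bookkeeping step is that $\va{q}$ and $\epsilon^2$ are real, so conjugation acts only by sending $W\mapsto\overline{W}$ and $\va{\gamma}\mapsto\overline{\va{\gamma}}$; this reproduces the second quadratic form $-(\overline{\va{\gamma}}-\va{q})^{\mathrm{T}}(\epsilon^2\id+2\overline{W})^{-1}(\overline{\va{\gamma}}-\va{q})$ appearing in the claim, and turns the prefactor determinant into $\overline{\det(\epsilon^2\id+2W)}=\det(\epsilon^2\id+2\overline{W})$.

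It then remains to collect the scalar prefactors. The two amplitude factors assemble into the null probe on the generalized pure state, $\rho(\ncoh_\beta)\,\overline{\rho(\ncoh_\gamma)}=\np(\Xi_{\beta|\gamma})$. The normalization constants give $(\sqrt{2\pi}\epsilon)^n\pi^{-n}=\sqrt{2/\pi}^{\,n}\,\epsilon^n$, and the product of the two determinant square roots is $\sqrt{\det(\epsilon^2\id+2W)}\,\sqrt{\det(\epsilon^2\id+2\overline{W})}=|\det(\epsilon^2\id+2W)|$, exactly the stated prefactor.

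I expect the only point requiring genuine care to be the complex square root of the determinant: one must be sure that $\sqrt{\det(\epsilon^2\id+2W)}\,\sqrt{\det(\epsilon^2\id+2\overline{W})}$ really collapses to the modulus $|\det(\epsilon^2\id+2W)|$ rather than picking up a sign. The clean resolution is that the $\gamma$-factor is by construction the literal complex conjugate of the $\beta$-factor, so whatever branch of the square root is used in the $\beta$-factor, its conjugate occurs in the $\gamma$-factor; the product is then $|\sqrt{z}|^2=|z|$ automatically, with no branch-cut analysis needed. (That the relevant quantities are well-defined and the determinant is invertible for $\epsilon>0$ is already guaranteed by Proposition~\ref{prop:multicor} together with $R=\Re W\ge 0$.) Once this is noted, the assembly of factors is routine.
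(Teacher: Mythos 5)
Your proposal is correct and coincides with the paper's own (very brief) argument: the paper likewise obtains Proposition~\ref{prop:multiprobeq} by evaluating the right-hand side of \eqref{eq:stcomprprobe} on $\Xi_{\beta|\gamma}$ via the factorization $\cP[F|F](\Xi_{\beta|\gamma})=\rho[F](\ncoh_\beta)\,\overline{\rho[F](\ncoh_\gamma)}$ and substituting the Gaussian of Proposition~\ref{prop:multicor} into each factor. Your bookkeeping of the prefactors, including the observation that the two determinant square roots are complex conjugates of one another and hence multiply to the modulus, is exactly the right resolution of the only delicate point.
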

This is well-defined for $\epsilon>0$. In case $R$ is non-degenerate this vanishes for $\epsilon=0$. This reflects the singular nature of the limit where the regulator $\epsilon$ is removed.
We proceed to consider the integrated form of the composite regularized observable probe. To this end we take the observable function to be a generating function of the form $w_{\va{s}}(q)\defeq\exp(\im\, \va{s}\cdot\va{q})$, where $\va{s}=(s_1,\ldots,s_n)$ are real numbers. Here, $\va{s}\cdot\va{q}=\sum_{i=1}^n s_i q_i$ is the dot product of vectors. 
\begin{prop}
   \label{prop:mprobeweyl}
\begin{multline}
  \cA^\epsilon[w_{\va{s}}](\Xi_{\beta|\gamma})
   =\np(\Xi_{\beta|\gamma})
   \frac{\epsilon^n}{\sqrt{\det(\epsilon^2\id + 2R)}}
   \exp\left(-\frac12 \left(\va{\beta}-\overline{\va{\gamma}}\right)^{\mathrm{T}}
   \left(\epsilon^2\id+2R\right)^{-1}
   \left(\va{\beta}-\overline{\va{\gamma}}\right)
   \right)\\
   \exp\left(\im\, \va{s}\cdot\frac{\va{\beta}+\overline{\va{\gamma}}}{4}+\va{s}^\mathrm{T}  I \left(\epsilon^2\id+2R\right)^{-1}\frac{\va{\beta}-\overline{\va{\gamma}}}{2}\right)
   \exp\left(-\frac18 \va{s}^{\mathrm{T}}\left(\epsilon^2\id+2R
    + 4 I\left(\epsilon^2\id+2R\right)^{-1} I\right) \va{s}\right) .
\end{multline}
\end{prop}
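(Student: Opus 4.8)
The plan is to reduce the statement to a single Gaussian integral over the outcome vector $\va{q}$ and then carry out the resulting linear algebra. By the definition \eqref{eq:probef} of the functional calculus, $\cA^\epsilon[w_{\va{s}}](\Xi_{\beta|\gamma})=\int_{\R^n}\xd\va{q}\,e^{\im\,\va{s}\cdot\va{q}}\,\cA^\epsilon[\va{q}](\Xi_{\beta|\gamma})$, and Proposition~\ref{prop:multiprobeq} supplies the integrand explicitly as $\np(\Xi_{\beta|\gamma})$ times a constant prefactor times the exponential of a quadratic polynomial in $\va{q}$. Abbreviating $M_1\defeq\epsilon^2\id+2W$ and $M_2\defeq\epsilon^2\id+2\overline{W}$ (so that $M_2=\overline{M_1}$, both being complex-symmetric since $W$ is symmetric), the exponent of the integrand, including the factor $e^{\im\,\va{s}\cdot\va{q}}$, is
\begin{equation*}
  -\left(\va{\beta}-\va{q}\right)^{\mathrm{T}}M_1^{-1}\left(\va{\beta}-\va{q}\right)
  -\left(\overline{\va{\gamma}}-\va{q}\right)^{\mathrm{T}}M_2^{-1}\left(\overline{\va{\gamma}}-\va{q}\right)
  +\im\,\va{s}\cdot\va{q} .
\end{equation*}

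First I would collect this into the standard form $-\va{q}^{\mathrm{T}}P\,\va{q}+\va{b}^{\mathrm{T}}\va{q}+C$ with $P=M_1^{-1}+M_2^{-1}$, $\va{b}=2M_1^{-1}\va{\beta}+2M_2^{-1}\overline{\va{\gamma}}+\im\,\va{s}$, and $C=-\va{\beta}^{\mathrm{T}}M_1^{-1}\va{\beta}-\overline{\va{\gamma}}^{\mathrm{T}}M_2^{-1}\overline{\va{\gamma}}$, and then apply the Gaussian formula $\int_{\R^n}\xd\va{q}\,e^{-\va{q}^{\mathrm{T}}P\va{q}+\va{b}^{\mathrm{T}}\va{q}}=\sqrt{\pi^n/\det P}\,e^{\frac14\va{b}^{\mathrm{T}}P^{-1}\va{b}}$, valid by analytic continuation since $P=2\,\mathrm{Re}(M_1^{-1})$ is positive definite on real vectors for $\epsilon>0$ (this is exactly where the positivity $R\ge 0$ enters, via $S\defeq\epsilon^2\id+2R>0$). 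The entire computation then rests on the single algebraic observation $P=M_1^{-1}(M_1+M_2)M_2^{-1}$ together with $M_1+M_2=2S$. This gives $P=2M_1^{-1}S\,M_2^{-1}$, hence $P^{-1}=\tfrac12 M_2 S^{-1}M_1=\tfrac12 M_1 S^{-1}M_2=\tfrac12\left(S+4\,IS^{-1}I\right)$, where the last equality uses $M_1=S+2\im I$ and $M_2=S-2\im I$.

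For the prefactor I would combine $\sqrt{2/\pi}^{\,n}\epsilon^n/|\det M_1|$ from Proposition~\ref{prop:multiprobeq} with $\sqrt{\pi^n/\det P}$; using $\det P=2^n\det S/(\det M_1\det M_2)$, $\det M_2=\overline{\det M_1}$ so that $\det M_1\det M_2=|\det M_1|^2$, and $\det S>0$, everything collapses to $\epsilon^n/\sqrt{\det(\epsilon^2\id+2R)}$, the claimed normalization. For the exponent I would split $\va{b}=\va{b}_0+\im\va{s}$ with $\va{b}_0=2M_1^{-1}\va{\beta}+2M_2^{-1}\overline{\va{\gamma}}$ and expand $\tfrac14\va{b}^{\mathrm{T}}P^{-1}\va{b}+C$ into three pieces: the pure-$\va{b}_0$ quadratic together with $C$, the $\va{b}_0$–$\va{s}$ cross term, and the pure-$\va{s}$ quadratic. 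The identities $P^{-1}(2M_1^{-1})=M_2S^{-1}$, $P^{-1}(2M_2^{-1})=M_1S^{-1}$ and $M_{1,2}S^{-1}=\id\mp2\im IS^{-1}$ reduce the first piece to $-\tfrac12(\va{\beta}-\overline{\va{\gamma}})^{\mathrm{T}}S^{-1}(\va{\beta}-\overline{\va{\gamma}})$ — the diagonal terms merging with $C$ through $M_1^{-1}(\tfrac12 M_2 S^{-1}-\id)=-\tfrac12 S^{-1}$ — the cross term to the factor involving $\va{s}\cdot(\va{\beta}+\overline{\va{\gamma}})$ and $\va{s}^{\mathrm{T}}IS^{-1}(\va{\beta}-\overline{\va{\gamma}})$, and the last piece to $-\tfrac14\va{s}^{\mathrm{T}}P^{-1}\va{s}=-\tfrac18\va{s}^{\mathrm{T}}(S+4IS^{-1}I)\va{s}$, reproducing the three exponential factors of the statement.

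The main obstacle is the exponent simplification, not the integration: $R$ and $I$ (equivalently $S$ and $I$) need not commute, so all manipulations must be performed at the level of matrix products without diagonalizing, while carefully tracking the conjugation pairing between $M_1$ and $M_2=\overline{M_1}$. The cleanest route is to systematically rewrite every occurrence in terms of $S^{-1}$ flanked by the factors $M_{1,2}S^{-1}=\id\mp 2\im IS^{-1}$, as above; the positivity $R\ge 0$ is needed only to ensure convergence of the Gaussian integral and invertibility of $S$ for $\epsilon>0$. (In carrying this out one should double-check the numerical coefficients in the middle factor, since the cross term and the determinant normalization are the places where factor-of-two slips are easiest to make.)
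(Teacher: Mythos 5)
Your proposal follows essentially the same route as the paper's own proof: expand $\cA^\epsilon[w_{\va{s}}]$ via Proposition~\ref{prop:multiprobeq}, perform the Gaussian integral over $\va{q}$, and simplify using the identity $\left(\epsilon^2\id+2W\right)^{-1}+\left(\epsilon^2\id+2\overline{W}\right)^{-1}=2\left(\epsilon^2\id+2W\right)^{-1}\left(\epsilon^2\id+2R\right)\left(\epsilon^2\id+2\overline{W}\right)^{-1}$, which is exactly the identity the paper's proof opens with; the determinant bookkeeping and the reduction of the quadratic form proceed the same way, and your derivation confirms the $+$ sign in front of $4I\left(\epsilon^2\id+2R\right)^{-1}I$ in the last factor (the final line of the paper's appendix computation shows a $-$ there, which is a slip; the statement and the penultimate line of that computation have the correct $+$).

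Regarding the coefficients you flag for double-checking in the middle factor: your identities $P^{-1}\!\left(2M_1^{-1}\right)=M_2S^{-1}$ and $P^{-1}\!\left(2M_2^{-1}\right)=M_1S^{-1}$ (with $S\defeq\epsilon^2\id+2R$) are correct, and feeding them into the cross term of $\tfrac14\va{b}^{\mathrm{T}}P^{-1}\va{b}$ gives $\tfrac{\im}{2}\va{s}^{\mathrm{T}}\left(M_2S^{-1}\va{\beta}+M_1S^{-1}\overline{\va{\gamma}}\right)=\im\,\va{s}\cdot\tfrac{\va{\beta}+\overline{\va{\gamma}}}{2}+\va{s}^{\mathrm{T}}IS^{-1}\left(\va{\beta}-\overline{\va{\gamma}}\right)$, i.e.\ twice the exponent of the middle factor as printed. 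A direct one-dimensional sanity check (single slice observable, $I=0$, $\beta=\gamma$) gives $\int\xd q\,e^{\im sq}e^{-2(A(\beta)-q)^2/S}\propto e^{\im sA(\beta)-s^2S/8}$, confirming the coefficient $\tfrac12$ rather than $\tfrac14$ on the linear-in-$\va{s}$ term; this is also what consistency with Proposition~\ref{prop:mprobeweyldisc} and the semiclassical expectation $\langle q\rangle=A(\beta)$ requires. So the discrepancy you anticipated is real, but it sits in the printed statement (and in the corresponding step of the paper's appendix, where the factor of $2$ from the binomial cross term is dropped), not in your algebra. Your proof, carried out as described, establishes the statement with the middle factor read as $\exp\left(\im\,\va{s}\cdot\tfrac{\va{\beta}+\overline{\va{\gamma}}}{2}+\va{s}^{\mathrm{T}}I\left(\epsilon^2\id+2R\right)^{-1}\left(\va{\beta}-\overline{\va{\gamma}}\right)\right)$.
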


We proceed to consider the case where we discard the system in the future. In the following we change notation to write $\va{\phi}\defeq (A_1(\phi),\ldots,A_n(\phi))$.
\begin{prop}
   \label{prop:mprobeqdisc}
   Define
\begin{equation}
X_{\epsilon}\defeq \frac{\epsilon^2}{2}\id + 2R +\frac{2}{\epsilon^2} W_{\mathrm{ret}} W_{\mathrm{adv}} .
\label{eq:defXe}
\end{equation}
This matrix is invertible for $\epsilon>0$ since in addition to $R$, the product $W_{\mathrm{ret}} W_{\mathrm{adv}}$ is also positive. (Indeed $W_{\mathrm{ret}}^\dagger=W_{\mathrm{adv}}$.) For $\epsilon=0$ this is well-defined only if $W_{\mathrm{ret}} W_{\mathrm{adv}}=0$, which implies $W_{\mathrm{ret}}=0=W_{\mathrm{adv}}$. It is then moreover invertible if in addition $R$ is non-degenerate.
\begin{equation}
  \discard\circ\cA^\epsilon[\va{q}](\Xi_{\beta|\gamma})
   =\discard(\Xi_{\beta|\gamma})\frac{1}{\sqrt{\pi^n \det X_{\epsilon}}}
   \exp\left(-\left(\va{\beta}^- +\va{\gamma}^+ - \va{q}\right)^{\mathrm{T}}
   X_{\epsilon}^{-1}
   \left(\va{\beta}^- +\va{\gamma}^+ - \va{q}\right)
   \right) .
\end{equation}
\end{prop}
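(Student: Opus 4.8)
The plan is to realize the discard as the gluing of the two Schwinger–Keldysh branches at the final hypersurface and to reduce the whole expression to a Gaussian integral over auxiliary Weyl parameters. First I would rewrite the left-hand side in operator form. By \eqref{eq:stcomprprobe} we have $\cA^\epsilon[\va{q}]=(\sqrt{2\pi}\epsilon)^n\,\cP[H|H]$ with $H\defeq H_{A_1}^\epsilon(q_1)\cdots H_{A_n}^\epsilon(q_n)$, and the probe-to-operator dictionary $\cP[F|G](\sigma_1)=\widetilde{F}\,\sigma_1\,\widetilde{G}^{\dagger}$ together with $\discard(\,\cdot\,)=\tr(\,\cdot\,)$ gives, for $\sigma_1=\Xi_{\beta|\gamma}=|\ncoh_\beta\rangle\langle\ncoh_\gamma|$,
\[
\discard\circ\cA^\epsilon[\va{q}](\Xi_{\beta|\gamma})=(\sqrt{2\pi}\epsilon)^n\,\langle\ncoh_\gamma,\widetilde{H}^{\dagger}\widetilde{H}\,\ncoh_\beta\rangle .
\]
Here $\widetilde{H}$ is the time-ordered correlation operator of the product $H$, so $\widetilde{H}^{\dagger}\widetilde{H}$ is an anti-time-ordered product composed with a time-ordered one, i.e.\ precisely the closed-contour (discard) double path integral underlying \eqref{eq:woprop}. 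This recasts the claim as the evaluation of a single coherent-state matrix element on the Keldysh contour.

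Next I would linearize the Gaussians. Each factor admits the representation
\[
H_{A_i}^\epsilon(q_i)(\phi)=\frac{1}{2\pi}\int\xd s_i\,\exp\left(\im s_i(A_i(\phi)-q_i)-\tfrac{\epsilon^2}{4}s_i^2\right) ,
\]
so that inserting $H$ amounts to superposing the Weyl observables $W_{\va{s}}=\exp(\im\,\va{s}\cdot\va{A})$ with Gaussian weight $\exp(-\tfrac{\epsilon^2}{4}|\va{s}|^2)$ and phase $e^{-\im\,\va{s}\cdot\va{q}}$. Applying this on the forward branch (parameters $\va{s}$) and the backward branch (parameters $\va{s}'$) reduces the problem to the single building block $\langle\ncoh_\gamma,\widetilde{W}_{\va{s}'}^{\,\dagger}\,\widetilde{W}_{\va{s}}\,\ncoh_\beta\rangle$, the closed-contour matrix element of an anti-time-ordered Weyl operator times a time-ordered one, which is the Weyl analog underlying Proposition~\ref{prop:mprobeweyl}.

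I would then evaluate this building block by the factorization identity \eqref{eq:ticorfact} and the propagator structure of Section~\ref{sec:amocor}. It factorizes into: the coherent overlap $\langle\ncoh_\gamma,\ncoh_\beta\rangle=\discard(\Xi_{\beta|\gamma})$; an on-shell linear factor coupling $\va{s}$ to $\va{\beta}^{-}$ and $\va{s}'$ to $\va{\gamma}^{+}$, the surviving initial-data pieces after the future gluing exactly as in \eqref{eq:ticorfact}; and a vacuum Keldysh Gaussian in $(\va{s},\va{s}')$ whose quadratic form has forward block $\va{s}^{\mathrm{T}}W\va{s}$ (Feynman), backward block $\va{s}'^{\mathrm{T}}\overline{W}\va{s}'$ (anti-Feynman), and a forward–backward cross block fixed by the Wightman propagators $\wdu,\wud$ produced by contractions across the gluing. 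Writing $W=R+\im I$ with $2I=W_{\mathrm{ret}}+W_{\mathrm{adv}}$ and $W_{\mathrm{adv}}=W_{\mathrm{ret}}^{\dagger}$ makes the real/causal split explicit.

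Finally I would carry out the two Gaussian integrals over $\va{s}$ and $\va{s}'$, now convergent thanks to the regulator terms $-\tfrac{\epsilon^2}{4}|\va{s}|^2-\tfrac{\epsilon^2}{4}|\va{s}'|^2$, which turns the phases $e^{-\im\,\va{s}\cdot\va{q}+\im\,\va{s}'\cdot\va{q}}$ and the on-shell couplings into the advertised Gaussian in $\va{\beta}^{-}+\va{\gamma}^{+}-\va{q}$. In this collapse the $\tfrac{\epsilon^2}{2}\id$ arises from the two regulators, the $2R$ from the Hadamard blocks, and the $\tfrac{2}{\epsilon^2}W_{\mathrm{ret}}W_{\mathrm{adv}}$ from integrating out the causal forward–backward coupling against the $\epsilon^2$ regulator, so the effective outcome covariance is exactly $X_\epsilon$ of \eqref{eq:defXe}; collecting normalizations gives $1/\sqrt{\pi^n\det X_\epsilon}$, while positivity of $R$ and of $W_{\mathrm{ret}}W_{\mathrm{adv}}=W_{\mathrm{ret}}W_{\mathrm{ret}}^{\dagger}$ yields invertibility for $\epsilon>0$ and the stated $\epsilon=0$ degeneration. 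The main obstacle is this block Gaussian integration: one must perform the Schur-complement elimination of the coupled $(\va{s},\va{s}')$ system and verify that the cross block collapses precisely to $\tfrac{2}{\epsilon^2}W_{\mathrm{ret}}W_{\mathrm{adv}}$ with the correct factors of two, while cleanly separating the $\va{q}$-covariance $X_\epsilon$ from the linear on-shell shift $\va{\beta}^{-}+\va{\gamma}^{+}$.
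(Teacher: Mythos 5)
Your proposal is correct and follows essentially the same route as the paper's proof: Fourier-linearize each $H_{A_i}^\epsilon(q_i)$ into Weyl observables via \eqref{eq:intid}, evaluate the glued forward/backward (Schwinger--Keldysh) matrix element so that the forward block carries $W$, the backward block $\overline{W}$, and the cross block the Wightman propagators, and then perform the coupled Gaussian integral over the two sets of Weyl parameters. The paper carries out exactly this computation, realizing the cross terms through the phase-space elements $\tau_k$ and the identity $\wud(A_k,A_l)=\frac12\{\tau_k,\tau_l\}$, and performing your "Schur-complement elimination" concretely via the change of variables $\va{s}=\va{t}+\va{t}'$, $\va{d}=\va{t}-\va{t}'$, which is where the $\frac{2}{\epsilon^2}W_{\mathrm{ret}}W_{\mathrm{adv}}$ term emerges as you anticipated.
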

Note that the condition $W_{\mathrm{ret}}=0$ (or equivalently $W_{\mathrm{adv}}=0$) is satisfied if all observables are spacelike separated with respect to each other and also with respect to themselves. The latter implies in particular that they are slice observables and that there exists a spacelike hypersurface on which all observables are localizable.

The following result is easily deduced by performing a Gaussian integral. Moreover, it takes the form of a factorization identity.
\begin{prop}
   \label{prop:mprobeweyldisc}
\begin{equation}
  \discard\circ\cA^\epsilon[w_{\va{s}}](\Xi_{\beta|\gamma})
   =\discard(\Xi_{\beta|\gamma})
   \exp\left(\im\, \va{s}\cdot \left(\va{\beta}^- +\va{\gamma}^+\right)\right)
   \exp\left(-\frac14 \va{s}^{\mathrm{T}} X_{\epsilon} \va{s}\right) .
\end{equation}
\end{prop}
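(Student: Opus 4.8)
The plan is to reduce the claim to a single multivariate Gaussian integral by substituting the explicit outcome-resolved expression of Proposition~\ref{prop:mprobeqdisc} into the definition \eqref{eq:probef} specialized to the generating function $w_{\va{s}}$. Concretely, I would first use the linearity of both the discard probe and the $\va{q}$-integration to write
\begin{equation}
  \discard\circ\cA^\epsilon[w_{\va{s}}](\Xi_{\beta|\gamma})
  =\int_{\R^n}\xd\va{q}\,\exp\left(\im\,\va{s}\cdot\va{q}\right)\,
  \discard\circ\cA^\epsilon[\va{q}](\Xi_{\beta|\gamma}) ,
\end{equation}
and then insert the Gaussian form of $\discard\circ\cA^\epsilon[\va{q}](\Xi_{\beta|\gamma})$ from Proposition~\ref{prop:mprobeqdisc}, pulling the $\va{q}$-independent prefactor $\discard(\Xi_{\beta|\gamma})/\sqrt{\pi^n\det X_\epsilon}$ outside the integral.

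The core computation is then the integral
\begin{equation}
  \int_{\R^n}\xd\va{q}\,
  \exp\left(\im\,\va{s}\cdot\va{q}
  -\left(\va{\mu}-\va{q}\right)^{\mathrm{T}} X_\epsilon^{-1}\left(\va{\mu}-\va{q}\right)\right),
  \qquad \va{\mu}\defeq\va{\beta}^-+\va{\gamma}^+ .
\end{equation}
I would shift $\va{q}\mapsto\va{q}+\va{\mu}$ to extract the phase $\exp(\im\,\va{s}\cdot\va{\mu})$ and leave a centered Gaussian, and then apply the standard formula $\int_{\R^n}\xd\va{u}\,\exp(-\va{u}^{\mathrm{T}} A\va{u}+\im\,\va{b}\cdot\va{u})=\sqrt{\pi^n/\det A}\,\exp(-\tfrac14\va{b}^{\mathrm{T}} A^{-1}\va{b})$ with $A=X_\epsilon^{-1}$ and $\va{b}=\va{s}$. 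Since $A^{-1}=X_\epsilon$ and $\det A=1/\det X_\epsilon$, this yields the factor $\sqrt{\pi^n\det X_\epsilon}$, which cancels the prefactor exactly, together with the suppression factor $\exp(-\tfrac14\va{s}^{\mathrm{T}} X_\epsilon\va{s})$. Collecting terms gives precisely the stated factorized identity, with the phase $\exp(\im\,\va{s}\cdot(\va{\beta}^-+\va{\gamma}^+))$ in front of the discarded state $\discard(\Xi_{\beta|\gamma})$.

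There is essentially no hard step here; the only point demanding care—and the nearest thing to an obstacle—is the convergence of the Gaussian integral, i.e.\ the requirement that the real symmetric matrix $X_\epsilon^{-1}$ (equivalently $X_\epsilon$) be positive definite. This is exactly what was already recorded in Proposition~\ref{prop:mprobeqdisc}: for $\epsilon>0$ the term $\tfrac{\epsilon^2}{2}\id$ is strictly positive, $R\ge 0$ is positive semidefinite, and $W_{\mathrm{ret}} W_{\mathrm{adv}}=W_{\mathrm{ret}} W_{\mathrm{ret}}^{\mathrm{T}}\ge 0$ is positive semidefinite as well (using $W_{\mathrm{ret}}^\dagger=W_{\mathrm{adv}}$ and the reality of the causal propagators), so $X_\epsilon>0$ and the integral is absolutely convergent. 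With this guaranteed, all remaining manipulations are routine, and the clean factorized form of the answer merely reflects that the Fourier transform of a Gaussian is again a Gaussian.
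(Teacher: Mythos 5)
Your proposal is correct and matches the paper's own (very brief) argument: the paper likewise obtains Proposition~\ref{prop:mprobeweyldisc} by integrating the Gaussian of Proposition~\ref{prop:mprobeqdisc} against $w_{\va{s}}$, i.e.\ by a single Fourier--Gaussian integral whose determinant factor cancels the prefactor. Your added remark on positive definiteness of $X_\epsilon$ (hence convergence) is consistent with what the paper records in Proposition~\ref{prop:mprobeqdisc}.
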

As the special case $\va{s}=\va{0}$ we recover the non-relativistic causality condition for the non-selective probe, $\discard\circ\cA^\epsilon[\one](\sigma)=\discard(\sigma)$, generalizing Proposition~\ref{prop:nsprobeobs} to the multi-observable setting.

For a single observable and a proper coherent state, we reproduce and generalize the following semiclassical result of the operator setting \cite{Oe:spectral}.
\begin{prop}
   Let $\lambda,c\in\R$. Set $f(q)=\lambda q +c$. Then,
\begin{equation}
  \discard\circ\cA^\epsilon[f](\Xi_{\beta})=f(A(\beta)) .
\end{equation}
\end{prop}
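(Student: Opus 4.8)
The plan is to obtain this as a short corollary of Proposition~\ref{prop:mprobeqdisc}, specialized to a single observable and a proper coherent state. The point is that after the discard the probe at a fixed outcome $q$ becomes an ordinary normalized Gaussian probability density in $q$, centered at the classical value $A(\beta)$, so that integrating the affine function $f$ against it simply returns $f$ evaluated at the mean.

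Concretely, I would set $n=1$ and $\gamma=\beta$ in Proposition~\ref{prop:mprobeqdisc}, where three simplifications occur. First, $\discard(\Xi_\beta)=\langle\ncoh_\beta,\ncoh_\beta\rangle=1$, since the normalized coherent states are normalized. Second, the combination $\va{\beta}^-+\va{\gamma}^+$ collapses, by linearity of $A$ together with $\beta=\beta^++\beta^-$, to $A(\beta^-+\beta^+)=A(\beta)$. Third, for $\epsilon>0$ the matrix $X_\epsilon$ of \eqref{eq:defXe} is here a strictly positive scalar, since (as already noted in Proposition~\ref{prop:mprobeqdisc}) $R=\wh(A,A)\ge 0$ and $\wr(A,A)\wa(A,A)=|\wr(A,A)|^2\ge 0$, while $\tfrac{\epsilon^2}{2}>0$. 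Hence $\discard\circ\cA^\epsilon(q)(\Xi_\beta)=(\pi X_\epsilon)^{-1/2}\exp\!\left(-(A(\beta)-q)^2/X_\epsilon\right)$, which is a genuine probability density in $q$ of total mass $1$ and mean $A(\beta)$.

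To conclude, I would use the definition \eqref{eq:stopcalc}, namely $\cA^\epsilon[f]=\int_{-\infty}^\infty\xd q\, f(q)\,\cA^\epsilon(q)$, together with linearity of the discard, to write $\discard\circ\cA^\epsilon[f](\Xi_\beta)=\int_{-\infty}^\infty\xd q\,(\lambda q+c)\,(\pi X_\epsilon)^{-1/2}\exp\!\left(-(A(\beta)-q)^2/X_\epsilon\right)$. The two elementary Gaussian integrals yield $c$ times the total mass, i.e.\ $c$, plus $\lambda$ times the mean, i.e.\ $\lambda A(\beta)$, so the result is $\lambda A(\beta)+c=f(A(\beta))$.

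There is essentially no genuine obstacle here beyond this bookkeeping: the content is carried entirely by the prior Gaussian formula, and the only analytic facts needed are the normalization and first moment of a one-dimensional Gaussian. As a cross-check, and as an alternative route that avoids the explicit $q$-integration, one can instead differentiate the Weyl generating-function identity of Proposition~\ref{prop:mprobeweyldisc} in its parameter $s$ at $s=0$: with $n=1$ and $\gamma=\beta$ it reads $\discard\circ\cA^\epsilon[w_s](\Xi_\beta)=\exp\!\left(\im s A(\beta)-\tfrac14 s^2 X_\epsilon\right)$, whose value at $s=0$ gives $\discard\circ\cA^\epsilon[\one](\Xi_\beta)=1$ and whose derivative $-\im\,\partial_s|_{s=0}$ gives $\discard\circ\cA^\epsilon[\id](\Xi_\beta)=A(\beta)$; combining these by linearity of $f\mapsto\cA^\epsilon[f]$ again reproduces $f(A(\beta))$. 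This second route requires only the justification of differentiation under the integral sign, which is immediate from the Gaussian decay valid for $\epsilon>0$.
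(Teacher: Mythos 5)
Your argument is correct and is exactly the deduction the paper intends (it states the result immediately after Propositions~\ref{prop:mprobeqdisc} and \ref{prop:mprobeweyldisc} without further proof): specializing to $n=1$, $\gamma=\beta$ turns the discarded probe into a normalized Gaussian density in $q$ with mean $A(\beta)$, and the zeroth and first moments give $f(A(\beta))$ for affine $f$. Your cross-check via differentiating the generating function of Proposition~\ref{prop:mprobeweyldisc} at $\va{s}=\va{0}$ is the equivalent route and equally valid.
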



\section{Causal correlations in relativistic quantum measurement}
\label{sec:causalcor}

It is striking how the formulas we have obtained reveal the influence of the causal structure of spacetime on measurement correlations in relativistic quantum theory. The present section presents a very preliminary and limited consideration of this question. A proper analysis is out of scope for the present paper.

At first, we consider a scenario where we can demonstrate causal transparency directly by using the expression for multi-observable probes of Proposition~\ref{prop:mprobeweyldisc}. Thus, we consider three linear observables $A_1,A_2,A_3$ localizable in regions $R_1,R,R_2$ respectively, compare Figure~\ref{fig:caustrans}. The important constraint is that $R_1$ does not intersect the causal past of $R_2$, or equivalently $R_2$ does not intersect the causal future of $R_1$. Crucially this implies $\wr(A_3,A_1)=0$ and equivalently $\wa(A_1,A_3)=0$. Our measurement of $A_1$ and of $A_2$ is non-selective while our measurement of $A_3$ is selective. 
Thus, we are precisely in a setting as considered in Section~\ref{sec:measurestobs} and in Theorem~\ref{thm:ctst}, except our measurements in $R_1$ and $R$ are of a more specific type. If we assign values $q_1,q_2,q_3$ to $A_1, A_2, A_3$, non-selectiveness for $A_1$ and $A_2$ means that the function determining the composite probe may neither depend on $q_1$ nor on $q_2$. We take the measurement to consist in comparing the value of $A_3$ with a fixed value $a$, i.e.\ we set $f_a(q_1,q_2,q_3)=\delta(q_3-a)$. With a pure coherent state $\Xi_{\phi}=\Xi_{\phi|\phi}$ as the initial state we obtain the desired probe from Proposition~\ref{prop:mprobeweyldisc} by setting $s_1=s_2=0$ and performing a Fourier transform in $s_3$. This yields,
\begin{align}
  & \discard\circ\cA^\epsilon[f_a](\Xi_{\phi})
   =\frac{1}{2\pi}\int_{-\infty}^{\infty}\xd s_3\,e^{-\im s_3 a}\discard\circ\cA^\epsilon[w_{(0,0,s_3)}](\Xi_{\phi}) \nonumber \\
  & \qquad 
   =\frac{1}{\sqrt{\pi (X_{\epsilon})_{3 3}}}
   \exp\left( -\frac{1}{(X_{\epsilon})_{3 3}}
   \left(A_3(\phi)-a\right)^2\right), 
   \label{eq:3linct} \\
  & \text{with}\quad (X_{\epsilon})_{3 3}
  =\frac{\epsilon^2}{2}+ \wh(A_3,A_3) +\frac{2}{\epsilon^2}\wr(A_3,A_3)\wa(A_3,A_3)
   +\frac{2}{\epsilon^2}\wr(A_3,A_2)\wa(A_2,A_3) .
   \label{eq:3linXe}
\end{align}
Recall that $X_{\epsilon}$ is a matrix and $(X_{\epsilon})_{3 3}$ denotes its $(3,3)$ entry. There is an additional term in the definition (\ref{eq:defXe}) of $(X_{\epsilon})_{3 3}$ of the form $\wr(A_3,A_1)\wa(A_1,A_3)$. However, as already mentioned, this vanishes due to the causal relation between $R_1$ and $R_2$. It remains to realize that the obtained expressions \eqref{eq:3linct} and \eqref{eq:3linXe} are identical to the ones we would have obtained if only $A_2$ and $A_3$ had been measured, but not $A_1$. That is, all probabilities or expectation values of measuring $A_3$ are the same, whether we jointly measure with both $A_1$ and $A_2$ or just with $A_2$, thus
demonstrating causal transparency of the measurement of $A_2$.

We proceed to consider correlation functions for the measurement of $n$ linear observables. Let $A_1,\ldots,A_n$ be the linear observables with associated values $q_1,\ldots,q_n$. We assume an initial coherent state $\Xi_{\phi}$. The correlation function is then determined by the function $f(q_1,\ldots,q_n)=q_1\cdots q_n$. The associated probe can be deduced from Proposition~\ref{prop:mprobeweyldisc} as follows:
\begin{align}
  & \discard\circ\cA^\epsilon[f](\Xi_{\phi})
  =(-\im)^n \frac{\partial}{\partial s_1}\cdots\frac{\partial}{\partial s_n}\discard\circ\cA^\epsilon[w_{\va{s}}](\Xi_{\phi}) \Big|_{s_1=\cdots =s_n=0} 
  \nonumber \\
  & =\sum_{m=0}^{\lfloor n/2 \rfloor} \sum_{\sigma\in S^n} \frac{1}{2^m\, m!\, (n-2m)!} A_{\sigma(2m+1)}(\phi)\cdots A_{\sigma(n)}(\phi)\, \prod_{j=1}^m\, \left(\frac12 X_{\epsilon}\right)_{\sigma(2j-1), \sigma(2j)} .
  \label{eq:mobscor}
\end{align}
Here $\lfloor a\rfloor$ denotes the ``floor'' of $a$, i.e.\ the largest integer smaller or equal to $a$. This looks very much like the usual Wick theorem: We sum over all possible ways to pair up some of the observables though propagators, while evaluating the remaining ones on the solution corresponding to the initial coherent state. By way of comparison, the usual time-ordered correlation function for the observables $A_1,\ldots,A_n$ with initial state $\Xi_{\phi}=|\ncoh_{\phi}\rangle\langle\ncoh_{\phi}|$ and discard can be written as,
\begin{equation}
  \langle\ncoh_\phi,\tord A_1\cdots A_n \ncoh_\phi\rangle
  =\sum_{m=0}^{\lfloor n/2 \rfloor} \sum_{\sigma\in S^n} \frac{1}{2^m\, m!\, (n-2m)!} A_{\sigma(2m+1)}(\phi)\cdots A_{\sigma(n)}(\phi)\, \prod_{j=1}^m\, W_{\sigma(2j-1), \sigma(2j)} . \label{eq:tocor}
\end{equation}
Here, the matrix elements of $W$ are the observable Feynman propagators, i.e., $W_{i j}=\wuu(A_i,A_j)$. (In the case of scalar field theory and for point observables $A_k(\phi)=\phi(t_k,x_k)$ these are of course the usual Feynman propagators $\wtuu(t_i,x_i,t_j,x_j)=\langle\vac,\tord \phi(t_i,x_i)\phi(t_j,x_j)\vac\rangle$.)

When comparing the correlation functions it is important to recall why they are different. The time-ordered correlation function \eqref{eq:tocor}, while a very important object in QFT, for example as an ingredient for constructing the S-matrix, does not itself correspond to the expectation value of any measurement. In contrast, the measurement correlation function \eqref{eq:mobscor} encodes precisely the expectation value of the joint measurement of the observables as localized in spacetime. Clearly, the Feynman propagator, being complex, cannot appear in a measurement of correlations of real observables. Rather, for any pairing between distinct observables appearing in the correlation function the Feynman propagator 
$\wuu(A_i,A_j)$ is replaced by the matrix element ($i\neq j$)
\begin{equation}
  \left(\frac12 X_{\epsilon}\right)_{i j}= \wh(A_i,A_j) + \frac{1}{\epsilon^2} \sum_{k=1}^n \wr(A_i,A_k)\wa(A_k,A_j) .
\end{equation} 
This is real as it should be. What is more, the first term is precisely the Hadamard propagator, i.e.\ the real (acausal) part of the Feynman propagator as one might have expected. However, there is a second term which is highly significant. The retarded propagator $\wr(A_i,A_k)$ is non-vanishing only if the spacetime support of $A_i$ intersects the causal future of the support of $A_k$. Similarly, the advanced propagator $\wa(A_k,A_j)$ is non-vanishing only if the support of $A_j$ intersects the causal future of the support of $A_k$. That is, the second term arises only if $A_i$ and $A_j$ both have support in the causal future of the support of $A_k$. This suggests that the measurement of $A_k$ induces an additional correlation between $A_i$ and $A_j$. Moreover, this correlation is not related to the measured value of $A_k$ (assuming here that $k\neq i$ and $k\neq j$), as the function $f$ determining the outcome can be modified to not depend on $q_k$ without affecting the appearance of this correlation term. That is, the correlation between $A_i$ and $A_j$ is caused purely be the "disturbance" originating from the measurement of $A_k$. The signal emanating from the disturbance affects equally $A_i$ and $A_j$ which are both in its causal future, hence the correlation. Of course the disturbance from the measurement of $A_k$ affects the measurement of all observables with support in its causal future. And indeed, it causes correlations between any pair with supports in this causal future in this same way. However, there are no higher-order correlations. Rather, all these correlations factorize into pairs as might be expected in the light of Sorkin's analysis on the absence of higher-order correlations in quantum theory \cite{Sor:qmeasure}.

The disturbance caused by a measurement does not only affect other measurements and their correlation, but also the measurement itself. That is, the earlier part of a measurement "back-reacts" on the later part, as long as the measurement is temporally extended, i.e., not localizable on a spacelike hypersurface. While this effect is clearly present in the correlation functions just discussed, it is most cleanly seen by considering the measurement of a single linear observable $A$ alone. The relative probability density given by Proposition~\ref{prop:mprobeqdisc} depends on the term $X_{\epsilon}$, which in this case is,
\begin{equation}
  X_{\epsilon}=\frac{\epsilon^2}{2}+ 2\wh(A,A) + \frac{2}{\epsilon}(\wr(A,A))^2 .
\end{equation}
It is the causal term $\wr(A,A)=\wa(A,A)=\Im(\wuu(A,A))$ that arises from the temporal extension of $A$. It disappears if we replace $A$ by a slice observable $A_{\Sigma}$ that is identical on the classical phase space, compare Section~\ref{sec:sobs}.

Another interesting aspect of the disturbances and correlations caused by the measurements is that they grow when the regulator $\epsilon$ is sent to zero, dominating over the Hadamard propagators that may be interpreted to encode correlations arising from the vacuum. That is, the "sharper" the measurements are taken to be performed, the stronger are the correlations caused by the measurements. This divergence (when $\epsilon\to 0$) appears in addition to, but is qualitatively different from, the usual divergence that arises in the Hadamard propagator for point observables in the coincidence limit.


\section{Discussion and Outlook}
\label{sec:outlook}

This paper should first of all be read as a paper on \emph{quantization}. We start this section by elaborating on this perspective on the present work.
Quantization refers here to a prescription that assigns to a classical observable an object in the quantum theory that encodes a \emph{measurement} corresponding to that classical observable.
The traditional narrative on quantization consists of two components: The first is the idea that self-adjoint operators are the objects that determine measurements of observable quantities in quantum mechanics. The second is that there are procedures converting classical observables to self-adjoint operators in such a way that the induced measurement corresponds in some reasonable way to the measurement of the classical observable. Such procedures are known as \emph{quantization prescriptions}. The first of these components has been refined over time, as it was understood that more general measurements on a quantum system can be performed, leading to the modern notion of a \emph{quantum operation}. It is thus reasonable to adapt the second component as well. That is, we shall talk about quantization prescriptions as procedures that convert a classical observable to a (set of) quantum operations that encode the measurement of the classical observable in a suitable sense. To distinguish this from the traditional notion of quantization we refer to this as \emph{operational quantization}. In this context, the path from traditional quantization to operational quantization appears straightforward. As we have recalled in Section~\ref{sec:nrquant}, the spectral decomposition of the self-adjoint operator obtained by (traditional) Weyl quantization serves to construct the desired quantum operations. While in this case operational quantization "passes through" traditional quantization, this need not be the case, as we shall see later on. Going a step further, one might want to establish criteria for a "good" operational quantization that do not rely on such an intermediate step, in analogy to the catalogs of criteria that characterize traditional quantization prescriptions. While we will discuss certain such criteria, an attempt at completeness is out of scope for the present work.

When moving from non-relativistic quantum mechanics to QFT there arise additional requirements on measurements from locality and causality conditions (Section~\ref{sec:loccaus}). Crucially, these requirements can be translated into criteria for an operational quantization prescription, i.e., one targeting quantum operations. However, only part of these criteria can be made sense of at the level of the intermediate Weyl quantization. More concretely, it does make sense to talk about the locality properties of field operators (arising from Weyl quantization), but it makes sense to talk about causal transparency only at the level of the quantum operations.
What is more, the operational quantization prescription for field operators laid out in Section~\ref{sec:instqft} modifies the role that the intermediate Weyl quantization plays. While the conceptual starting point is the spectral decomposition of the field operator, a regularization turns out to be necessary. As a consequence, the regularized quantum operations involve operators that are (traditional) quantizations of more complicated functions of the observable \eqref{eq:aeobs}. It also turns out that the regulator cannot in general be removed at the level of quantum operations as we will detail later.

When moving from a time-evolution setting to a genuine spacetime setting the accompanying notions of quantization change even more radically (Section~\ref{sec:spacetime}). To be able to accommodate observables that are extended in time as well as in space, and moreover freely compose corresponding measurements, we need to replace the standard formulation of quantum theory with the more powerful local positive formalism (PF) \cite{Oe:dmf,Oe:posfound}. With this, instantaneous quantum operations are generalized to spacetime localized \emph{probes}. These are now the targets of operational quantization prescriptions. At the same time, the classical observables to be quantized are no longer observables on phase space, but observables on field configuration space. This is because the role of the intermediate traditional quantization prescription is now played by path integral quantization. The probes are then constructed through a double path integral in a Schwinger-Keldysh formalism \cite{OeZa:lcmeasure}. The quantization prescription we propose for the spacetime setting consists of the obvious spacetime generalization of the prescription arising from the regularized spectral decomposition in the time-evolution setting (Section~\ref{sec:measurestobs}). We use the same induced observable \eqref{eq:aeobs} for a given linear observable of interest. The difference is that the observable lives now on field configuration space rather than phase space and that instead of Weyl quantization we perform path integral quantization (Section~\ref{sec:gspec}). Nevertheless, when we specialize to slice observables (Section~\ref{sec:sobs}), i.e., spacetime observables that arise from phase space observables, we recover exactly the results of the time-evolution setting. Crucially, this recovery applies not only to single measurements, but to arbitrary composites. On the other hand, and this is a main result of the present paper, we still satisfy causal transparency even in the general spacetime setting (Theorem~\ref{thm:ctst}). With respect to the operational quantization prescription we may note that the induced observable \eqref{eq:aeobs} can be seen as an approximation of a delta function (compare Lemma~\ref{lem:did}), enforcing the value of the observable $A(\phi)$ on the field $\phi$ to coincide with the prescribed value $q$. Then, the operational quantization prescription for the probe that asks for the value $q$ consists of the Schwinger-Keldysh double path integral with this approximate delta-function inserted in both branches. Even though the Gaussian approximation we use turns out to be very convenient, this suggests that other choices of approximations of delta-functions might be used. However, for most such choices causal transparency will likely not hold.

Measurements necessarily disturb a quantum system and a temporally extended measurement will therefore necessarily back-react on itself. A main result of the present work is the exhibition and quantification of this effect (in Sections~\ref{sec:multiobs} and \ref{sec:causalcor}). What is more, we have found that the disturbance caused by a measurement also induces correlations between later measurements in its causal future. Measurement back-reaction 
has been of interest in the literature in the context of "continual observation". Remarkably, it is precisely in this context that Gaussian quantum operations analogous to \eqref{eq:gop} appear to have first been proposed \cite{BaLaPr:marcocontobsqm}.
Continual observations are modeled in terms of a temporal sequence of repeated measurements with a limit taken where the number of measurements is increased while the time intervals between measurements are shortened. For measurements of linear observables this corresponds in our spacetime context to the approximation of a temporally extended observable by a sum of slice observables localized on consecutive spacelike hypersurfaces. The notion of convolution of consecutive measurements (in terms of instruments) \cite{Hol:diviquantumprob} corresponds then at the level of observables to the convolution identity of Lemma~\ref{lem:addid}. Moreover, infinite divisibility of instruments (necessary for the existence of a limit) is then related to our notion of additive decomposability of observables in spacetime (Section~\ref{sec:loccomp}).

For measurement back-reaction the difference between spacetime observables and phase space observables and their roles in quantization is crucial. It is common in the literature to see parametrizations of a field operator $\hat{\phi}(f)$ (in scalar field theory) in terms of a spacetime smearing function $f$, compare expression \eqref{eq:stop}. This suggests an observable extended over the spacetime region determined by the support of $f$. Instead, we are dealing with the quantization of an observable on the instantaneous phase space corresponding to the time when the measurement takes place. In particular, the expression $\tr(\hat{\phi}(f)\sigma)$ yields the expectation value of the  measurement of this instantaneous observable (in an initial state $\sigma$).
This is also a fundamental limitation of \emph{algebraic quantum field theory} \cite{HaKa:aqft}. Even though algebras of quantum observables are associated to spacetime regions, these observables are instantaneous as are their expectation values when paired with states. The \emph{time slice axiom} is a direct expression of this circumstance \cite{Haa:lqp}.
To correctly describe the measurement of time-extended observables, their modeling as spacetime observables rather than phase space observables is essential. As we have seen, the whole quantization prescription needs to be adapted as well (Section~\ref{sec:spacetime}). In particular, the simple formula \eqref{eq:evsingled}, $\tr(\hat{A}\sigma)$ for the expectation value of an observable $A$ (in an initial state $\sigma$) is no longer valid for time-extended observables, even if we replace Weyl quantization with path integral quantization.

An obvious question concerning the regularization we have introduced in the construction of quantum operations and probes to measure observables is whether or when it can be removed. We have discussed already in Section~\ref{sec:causalcor} the finding that a measurement causes correlations between other measurements in its causal future. Moreover, this correlation diverges if the regulator $\epsilon$ is sent to $0$. Clearly, in this case we cannot remove the regulator. On the other hand one might attribute the singular behavior to the unboundedness of the weight function $f$ in the probe $\cA^{\epsilon}[f]$. Consider therefore the simplest probe associated with a set of observables, namely the \emph{non-selective one}. By our construction this probe $\cA^{\epsilon}[\one]$ satisfies the non-relativistic causality axiom \eqref{eq:nrcaus} in particular, so $\discard\circ\cA^{\epsilon}[\one](\sigma)=1$ for a normalized state $\sigma$. This has a trivial limit $\epsilon\to 0$. Consider on the other hand, a setting with post-selection. Let $M$ be a spacetime region (for example a time-interval) and $\Xi_{\phi}\in\cB_{\partial M}$ a normalized coherent state on the boundary $\partial M$ of $M$. Then, $\cA^{\epsilon}[\one](\Xi_{\phi})$ has a perfectly well-defined limit $\epsilon\to 0$ if the matrix $R$ is non-degenerate. (Set $\va{s}=\va{0}$ and $\beta=\gamma=\phi$ in Proposition~\ref{prop:mprobeweyl}.) However, the limit is $\lim_{\epsilon\to 0}\cA^{\epsilon}[\one](\Xi_{\phi})=0$. So we have two existing, but different limits for $\cA^{\epsilon}[\one]$, depending on the boundary conditions. This implies, clearly, that $\lim_{\epsilon\to 0}\cA^{\epsilon}[\one]$ does not exist \emph{as a quantum operation}. What is more, we might be worried about the vanishing of $\lim_{\epsilon\to 0}\cA^{\epsilon}[\one](\Xi_{\phi})$ in the post-selection setting. Indeed, $\lim_{\epsilon\to 0}\cA^{\epsilon}[\va{q}](\Xi_{\phi})$ for a given outcome $\va{q}$ vanishes as well. However, we are outside the realm of the standard formulation of quantum theory, where we can use the non-relativistic causality condition \eqref{eq:nrcaus} as a normalization condition. Rather, the probability density for measuring $\va{q}$ is the quotient \cite{Oe:posfound},
\begin{equation}
 \frac{\cA^{\epsilon}[\va{q}](\Xi_{\phi})}{\cA^{\epsilon}[\one](\Xi_{\phi})} .
\end{equation}
As it turns out, this quotient has a well-defined and non-vanishing limit $\epsilon\to 0$. Indeed, suitably scaled, numerator and denominator have individual well-defined and non-vanishing limits. This is $\lim_{\epsilon\to 0}\epsilon^{-n}\cA^{\epsilon}[\va{q}](\Xi_{\phi})$ and $\lim_{\epsilon\to 0}\epsilon^{-n}\cA^{\epsilon}[\one](\Xi_{\phi})$. (We continue to assume that the matrix $R$ is non-degenerate.) We learn that the proper scaling of the limit, if it exists, depends in general on the boundary conditions of the measurement. In any case, the correct setting for a limit is that of physical probabilities and expectation values, which can always be expressed in terms of a quotient \cite{Oe:posfound}.

We have presented in Section~\ref{sec:multiobs} a scheme to construct probes for measuring functions that depend on multiple linear observables in a rather general way. To illustrate this with an example of particular physical interest, we consider the energy momentum tensor of real scalar field theory \cite{BiDa:qftcurved},\footnote{For simplicity, we use a notation in this paragraph where $x$ denotes a point is spacetime, rather than in space.}
\begin{equation}\label{eq:emtensor}
T_{\mu\nu} (x)[\phi]=\partial_{\mu}\phi(x) \partial_{\nu}\phi(x)
-\frac12 g_{\mu \nu}(x) g^{\rho \tau}(x) \partial_{\rho}\phi(x) \partial_{\tau}\phi(x) + \frac12 m^2 g_{\mu \nu}(x) .
\end{equation}
To model this, we use five linear observables, $\phi(x),\partial_0\phi(x),\ldots,\partial_3\phi(x)$ with respective values $q,q_0,\ldots,q_3$. The function to measure the expectation value of $T_{\mu \nu}(x)$ is,
\begin{equation}
  f_{\mu \nu}(q,q_0,\ldots,q_3)\defeq q_{\mu} q_{\nu}
-\frac12 g_{\mu \nu}(x) g^{\rho \tau}(x) q_{\rho} q_{\tau} + \frac12 m^2 g_{\mu \nu}(x) q^2 .
\end{equation}
Then, the probe that encodes this expectation value is $\cA^\epsilon[f_{\mu \nu}]$. As always, the corresponding non-selective probe is $\cA^{\epsilon}[\one]$. Of course, instead of an expectation value we could also ask for the probability for the value of the energy-momentum tensor to lie in a certain range, say the interval $[a,b]$. The probe encoding this is then $\cA^\epsilon[\chi_{[A,B]}\circ f_{\mu \nu}]$, i.e., we simply compose the expectation function $f_{\mu \nu}$ with the characteristic function of the interval $[a,b]$. We shall not pursue this example further here as it requires a renormalization, due to the quadratic nature of the energy-momentum observable at a spacetime point. However, we remark that this could be performed in different ways. One would be a point-splitting regularization through a doubling of the basic observables. Another would be through approximating the basic point-localized observables by spatially smeared ones.

In \cite{OeZa:lcmeasure} an apparently different and simpler prescription for constructing probes to measure spacetime observables in QFT was presented. This modulus-square construction is based on factorizing a given observable as the modulus square of another observable and distributing the two factors between the two branches of the Schwinger-Keldysh double path integral. In the simplest case we consider an observable $\Lambda=A \overline{A}$, where $A$ is linear and a copy of $A$ is inserted in each of the two branches of the path integral. Apart from allowing full compositionality as taken advantage of in the present work as well, this is easy to calculate. What is more, a fully compositional renormalization scheme can be easily implemented for quadratic observables. A disadvantage is that apparently only probes to measure expectation values can be constructed and there is no accompanying notion of non-selective probe. However, it was envisaged already in \cite{OeZa:lcmeasure} to also consider observables that are weighted sum of modulus squares. Indeed, the construction presented in this work can be seen precisely as an instance of the modulus-square construction, as can be read off from expressions \eqref{eq:probeq} and \eqref{eq:stcomprprobe}. The general probes \eqref{eq:stopcalc} and \eqref{eq:probef} are then weighted integrals. What is more, choosing a weight function $f(q)=q$ (for a single observable) to obtain an expectation probe corresponds in terms of the modulus square construction precisely to the original observable
\begin{equation}
 \sqrt{2\pi}\epsilon \int_{-\infty}^{\infty}\xd q\, q |H^{\epsilon}_A(q)(\phi)|^2=A(\phi)
\end{equation}
as anticipated in the conclusions of \cite{OeZa:lcmeasure}. Nevertheless, an explicit comparison with the simpler probes constructed in that work would be interesting. However, this would require the introduction of a renormalization scheme, which is outside the scope of the present work.

As laid out in the introduction, this work is exclusively focused on the measurement process conceptualized as a fundamental ingredient of relativistic quantum theory. However, linking this to a von Neumann description in terms of an explicit modeling of a measurement device that is read out later is of considerable interest, not least when considering realization of measurement in experimental setups. The only remark that we venture to undertake in this direction is that we need an ancilla system with infinitely many degrees of freedom to match the continuous spectrum of the linear observables considered.

\subsection*{Acknowledgments}

I would like to thank Adamantia Zampeli, Maria Eftychia Papageorgiou, Robin Simmons, Jan Mandrysch and Miguel Navascués for valuable discussions that have contributed to shaping this research. I am grateful for the support of the Institute for Quantum Optics and Quantum Information, Vienna, where most of this work was carried out during a sabbatical. This work was also partially supported by UNAM-PAPIIT project grant IN106422 and UNAM-PASPA-DGAPA.

\numberwithin{equation}{section}

\appendix


\section{On the regularized spectral decomposition}
\label{sec:mathspec}

Let $f:\R\to\C$ be Lebesgue measurable and essentially bounded, $\epsilon>0$. Define $f^\epsilon:\R\to\C$ by the following Gaussian convolution,
\begin{equation}
    f^\epsilon(q)\defeq \frac{1}{\sqrt{\pi}\epsilon}\int_{-\infty}^{\infty}\xd s f(s) \exp\left(-\frac{(q-s)^2}{\epsilon^2}\right) .
\end{equation}
\begin{lem}
   $f^\epsilon$ is Lebesgue measurable and essentially bounded, moreover
   $\|f^\epsilon\|_{\infty}\le\|f\|_{\infty}$.
\end{lem}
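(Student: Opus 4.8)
The plan is to recognize $f^\epsilon$ as the convolution $f\ast g_\epsilon$ of $f$ with the Gaussian kernel $g_\epsilon(u)\defeq(\sqrt{\pi}\epsilon)^{-1}\exp(-u^2/\epsilon^2)$, and to read off both assertions from the fact that $g_\epsilon$ is a non-negative element of $L^1(\R)$ of total mass one, $\int_{-\infty}^\infty g_\epsilon(u)\,\xd u=1$ (the elementary Gaussian integral). Since $f$ is essentially bounded, the integrand $s\mapsto f(s)\,g_\epsilon(q-s)$ is dominated in modulus by $\|f\|_\infty\,g_\epsilon(q-s)$, which is integrable for each fixed $q$, so the integral defining $f^\epsilon(q)$ converges absolutely at every point $q$.

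The boundedness and the norm inequality are then a single pointwise estimate, conceptually just the $p=\infty$ case of Young's convolution inequality. Moving the modulus inside and using $|f(s)|\le\|f\|_\infty$ almost everywhere,
\[
  |f^\epsilon(q)|\le\int_{-\infty}^\infty |f(s)|\,g_\epsilon(q-s)\,\xd s\le\|f\|_\infty\int_{-\infty}^\infty g_\epsilon(q-s)\,\xd s=\|f\|_\infty ,
\]
for every $q\in\R$. Hence $f^\epsilon$ is bounded everywhere (not merely essentially) by $\|f\|_\infty$, giving $\|f^\epsilon\|_\infty\le\|f\|_\infty$.

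For measurability I would establish the stronger fact that $f^\epsilon$ is continuous. Subtracting two values and estimating with the $L^\infty$ bound on $f$,
\[
  |f^\epsilon(q)-f^\epsilon(q')|\le\|f\|_\infty\int_{-\infty}^\infty |g_\epsilon(q-s)-g_\epsilon(q'-s)|\,\xd s ,
\]
and the right-hand side equals $\|f\|_\infty$ times the $L^1$-norm of the difference of $g_\epsilon$ and its translate by $q-q'$, which tends to $0$ as $q'\to q$ by continuity of translation in $L^1(\R)$. Equivalently one may invoke dominated convergence directly: for $q'$ in a unit neighborhood of $q$ the functions $s\mapsto f(s)\,g_\epsilon(q'-s)$ are dominated by the integrable majorant $\|f\|_\infty\,\sup_{|q'-q|\le 1}g_\epsilon(q'-s)$ while converging pointwise. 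Either route shows $f^\epsilon$ is continuous, hence measurable. The only step needing any care is this continuity argument; the rest is immediate from the normalization of $g_\epsilon$, so I anticipate no genuine obstacle.
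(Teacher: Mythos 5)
Your proof is correct and is precisely the ``straightforward estimate'' the paper alludes to without writing out: the bound $\|f^\epsilon\|_\infty\le\|f\|_\infty$ follows from the unit mass of the Gaussian kernel, and your continuity argument (via translation continuity in $L^1$ or dominated convergence) cleanly supplies the measurability claim. No gaps; you have simply filled in the details the paper omits.
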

\begin{proof}
     This follows from a straightforward estimate.
\end{proof}

Let $A:L\to\R$ be a linear observable. Recall the spectral decomposition and its regularized version from Section~\ref{sec:instqft}.

\begin{lem}
    \begin{equation}
        \Pi_A[f^\epsilon]=\Pi_A^\epsilon[f] .
    \end{equation}
\end{lem}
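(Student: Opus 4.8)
The plan is to route the identity through the Borel functional calculus of the self-adjoint operator $\hat A$ and then exchange an ordinary Lebesgue integral with the spectral integral. The starting observation is that $H_A^\epsilon(q)$ is a function of the \emph{single} linear observable $A$: setting $h_q^\epsilon(\lambda)\defeq\frac{1}{\sqrt{\pi}\epsilon}\exp(-(\lambda-q)^2/\epsilon^2)$ we have $H_A^\epsilon(q)=h_q^\epsilon\circ A$. Since $A$ is linear, Weyl quantization of a function of $A$ involves no ordering ambiguity and coincides with the functional calculus applied to $\hat A$ (this is exactly the identification $\widehat{g(A)}=g(\hat A)=\Pi_A[g]$ recalled in Section~\ref{sec:instqft}). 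Hence
\begin{equation}
  \Pi_A^\epsilon(q)=\widehat{h_q^\epsilon(A)}=\Pi_A[h_q^\epsilon] ,
\end{equation}
where $\Pi_A[g]\defeq\int_{\R}g(\lambda)\,\xd\mu_A(\lambda)$ denotes the integral of a bounded Borel function $g$ against the spectral measure $\mu_A$ of $\hat A$ (recall $S=\R$).

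Substituting into the definition \eqref{eq:rfcalc} turns the claim into the exchange of the two integrations in
\begin{equation}
  \Pi_A^\epsilon[f]=\int_{-\infty}^{\infty}\xd q\, f(q)\,\Pi_A[h_q^\epsilon] .
\end{equation}
I would carry this out weakly: for arbitrary $\eta,\psi\in\cH$ the set function $X\mapsto\langle\eta,\mu_A(X)\psi\rangle$ is a complex Borel measure of total variation at most $\|\eta\|\,\|\psi\|$, so that pairing gives a scalar double integral. Because $f$ is essentially bounded and, for each fixed $\lambda$, $\int_{-\infty}^\infty h_q^\epsilon(\lambda)\,\xd q=1$, the joint integrand is absolutely integrable and Fubini's theorem permits swapping the order. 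Evaluating the resulting inner integral and using the symmetry $(\lambda-q)^2=(q-\lambda)^2$ of the Gaussian yields exactly the convolution
\begin{equation}
  \int_{-\infty}^{\infty}\xd q\, f(q)\,h_q^\epsilon(\lambda)
  =\frac{1}{\sqrt{\pi}\epsilon}\int_{-\infty}^{\infty}\xd s\, f(s)\exp\left(-\frac{(\lambda-s)^2}{\epsilon^2}\right)
  =f^\epsilon(\lambda) ,
\end{equation}
which is bounded by the preceding lemma, so $\Pi_A[f^\epsilon]$ is well-defined. Reassembling, $\langle\eta,\Pi_A^\epsilon[f]\psi\rangle=\int_{\R}f^\epsilon(\lambda)\,\xd\langle\eta,\mu_A(\lambda)\psi\rangle=\langle\eta,\Pi_A[f^\epsilon]\psi\rangle$ for all $\eta,\psi$, which gives the asserted operator identity.

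The one step that needs genuine care, and which I expect to be the main obstacle, is the Fubini interchange: it requires first fixing the topology in which the operator-valued integral \eqref{eq:rfcalc} is to be read (the weak operator topology is the convenient choice) and then verifying joint measurability and absolute integrability of $(q,\lambda)\mapsto f(q)h_q^\epsilon(\lambda)$ against $\xd q$ and the scalar measures $\langle\eta,\mu_A(\cdot)\psi\rangle$. Once the projection-valued measure is converted into these finite complex measures, the essential boundedness of $f$ together with the rapid decay and unit normalization of the Gaussian reduce the matter to a routine application of Fubini's theorem; everything else is a direct computation.
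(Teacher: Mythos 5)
Your proof is correct, but it takes a genuinely different route from the paper's. The paper works concretely: it notes that both operators are bounded, so it suffices to compare matrix elements on the dense span of coherent states, and then inserts the explicit Gaussian formulas from \cite{Oe:spectral} for the sesquilinear forms $B_A^\epsilon(\coh_\gamma,\coh_\beta)$ and $B_A(\coh_\gamma,\coh_\beta)$; the identity then drops out of a single Gaussian convolution integral (a Gaussian of width $\epsilon$ convolved with one of width $\sqrt{2r}$ gives one of width $\sqrt{\epsilon^2+2r}$). You instead factor the argument through the pointwise identity $\Pi_A^\epsilon(q)=h_q^\epsilon(\hat A)=\Pi_A[h_q^\epsilon]$ --- i.e., that Weyl quantization of a function of a single linear observable agrees with the Borel functional calculus --- and then reduce the lemma to a Fubini interchange against the finite complex measures $\langle\eta,\mu_A(\cdot)\psi\rangle$. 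Your absolute-integrability bound ($\|f\|_\infty$ times the unit mass of the Gaussian times the total variation $\|\eta\|\,\|\psi\|$) is the right one, and reading \eqref{eq:rfcalc} weakly is consistent with how the paper treats spectral integrals. The trade-off: your argument is cleaner and avoids the explicit coherent-state computation entirely, but it front-loads the identification $\widehat{g(A)}=g(\hat A)$, which is precisely what the explicit formulas for $B_A^\epsilon$ and $B_A$ quoted from \cite{Oe:spectral} encode; the paper's parenthetical remark in Section~\ref{sec:instqft} does grant this identification, so your reliance on it is legitimate within the paper's own framework. In effect the paper verifies the pointwise identity and the convolution in one concrete calculation, while you separate them into a quoted structural fact plus a routine measure-theoretic interchange; both are sound, and yours generalizes more readily to settings where explicit coherent-state formulas are unavailable.
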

\begin{proof}
    Since both operators exist and are bounded, it is enough to show coincidence of matrix elements on the dense subspace spanned by coherent states. Recall the notion of observable propagator from Sections~\ref{sec:amocor} and \ref{sec:sobs}. Treating $A$ as a slice observable we define $r\defeq \wuu(A,A)$. We use the following formulas for the sesquilinear form defined by the spectral measure and regularized spectral measure \cite{Oe:spectral}. (Note there are some minor notational differences, in particular $\|\xi\|^2=2r$ for $\xi$ as defined in \cite{Oe:spectral}.)
    \begin{align}
        B_A^\epsilon(\coh_\gamma,\coh_\beta)
        & =\langle\coh_\gamma,\Pi_A^\epsilon(q)\coh_\beta\rangle
        =\frac{1}{\sqrt{\pi (\epsilon^2 + 2 r)}}
        \langle\coh_\gamma, \coh_\beta\rangle
        \exp\left(-\frac{(A(\beta^- +\gamma^+)-q)^2}{\epsilon^2+2r}\right) , \\
        B_A(\coh_\gamma,\coh_\beta)
        & =\frac{1}{\sqrt{2\pi r}}
        \langle\coh_\gamma, \coh_\beta\rangle
        \exp\left(-\frac{(A(\beta^- +\gamma^+)-q)^2}{2r}\right) .
    \end{align}
    Note that the first formula is also a special case of \eqref{eq:gspecmatrix}. Then,
    \begin{align*}
        \langle \coh_\gamma,\Pi_A[f^\epsilon]\coh_\beta\rangle
        & =\int_{-\infty}^{\infty}\xd q\, f^\epsilon(q) B_A(q)(\coh_\gamma,\coh_\beta) \\
        & =\frac{1}{\sqrt{2r}\pi\epsilon}\int_{-\infty}^{\infty}\xd q \int_{-\infty}^{\infty}\xd s\, f(s) \langle\coh_\gamma, \coh_\beta\rangle
        \exp\left(-\frac{(q-s)^2}{\epsilon^2}-\frac{(A(\beta^- +\gamma^+)-q)^2}{2r}\right) \\
        & =\frac{1}{\sqrt{2r}\pi\epsilon}\int_{-\infty}^{\infty}\xd s\, f(s) \langle\coh_\gamma, \coh_\beta\rangle  \int_{-\infty}^{\infty}\xd q\,
        \exp\left(-\frac{(q-s)^2}{\epsilon^2}-\frac{(A(\beta^- +\gamma^+)-q)^2}{2r}\right) \\
        & =\frac{1}{\sqrt{\pi(\epsilon^2+2r)}}\int_{-\infty}^{\infty}\xd s\, f(s) \langle\coh_\gamma, \coh_\beta\rangle
        \exp\left(-\frac{(A(\beta^- +\gamma^+)-s)^2}{\epsilon^2 + 2r}\right) \\
        & =\int_{-\infty}^{\infty}\xd s\, f(s) B_A^\epsilon(\coh_\gamma,\coh_\beta)
        =\langle \coh_\gamma,\Pi_A^\epsilon[f]\coh_\beta\rangle . \tag*{\qedhere}
    \end{align*}
\end{proof}


\section{Proofs in the spacetime framework}
\label{sec:mathst}

With definition \eqref{eq:aeobs} we recall the following identity.
\begin{equation}
  H_A^{\epsilon}(q)(\phi)=\frac{1}{\pi}\int_{-\infty}^{\infty}\xd t\, e^{-\epsilon^2 t^2} \exp\left(2\im t (A(\phi)-q)\right) .
  \label{eq:intid}
\end{equation}

\begin{lem}
    \label{lem:addid}
\begin{equation}
    \int_{-\infty}^{\infty}\xd s\, H_{A'}^{\epsilon'}(q-s) H_A^{\epsilon}(s)
    =H_{A+A'}^{\epsilon''}(q),\quad\text{with}\quad \epsilon''^2=\epsilon^2+\epsilon'^2 .
\end{equation}
\end{lem}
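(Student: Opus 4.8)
The plan is first to read the identity \emph{pointwise} in the field configuration $\phi$. For each fixed $\phi$ the values $A(\phi)$ and $A'(\phi)$ are ordinary real numbers, and the sum observable acts pointwise, $(A+A')(\phi)=A(\phi)+A'(\phi)$. Writing $a\defeq A(\phi)$ and $a'\defeq A'(\phi)$ and unfolding definition \eqref{eq:aeobs}, the claim collapses to the numerical convolution identity
\begin{equation*}
  \frac{1}{\pi\epsilon\epsilon'}\int_{-\infty}^{\infty}\xd s\,
  \exp\!\left(-\frac{(a'-q+s)^2}{\epsilon'^2}-\frac{(a-s)^2}{\epsilon^2}\right)
  =\frac{1}{\sqrt{\pi}\,\epsilon''}\exp\!\left(-\frac{(a+a'-q)^2}{\epsilon''^2}\right),
\end{equation*}
with $\epsilon''^2=\epsilon^2+\epsilon'^2$. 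This is the familiar statement that the convolution of two Gaussians, of variances $\tfrac12\epsilon^2$ and $\tfrac12\epsilon'^2$ and means $a$ and $a'$, is a Gaussian of variance $\tfrac12(\epsilon^2+\epsilon'^2)$ and mean $a+a'$.

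The most direct way to finish is to treat the exponent on the left as a quadratic polynomial in the integration variable $s$. Its leading coefficient is $-(\epsilon^{-2}+\epsilon'^{-2})$, so completing the square in $s$ splits the integrand into a shifted Gaussian in $s$, whose integral is elementary, times an $s$-independent remainder. Simplifying that remainder using $\epsilon^{-2}+\epsilon'^{-2}=(\epsilon^2+\epsilon'^2)/(\epsilon^2\epsilon'^2)$ reproduces both the exponential $\exp(-(a+a'-q)^2/\epsilon''^2)$ and the normalization $1/(\sqrt{\pi}\,\epsilon'')$ on the right.

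A cleaner and more structural route, suggested by the integral representation \eqref{eq:intid} placed immediately before the lemma, is to substitute \eqref{eq:intid} for each of the two factors, introducing auxiliary variables $t'$ for $H_{A'}^{\epsilon'}(q-s)$ and $t$ for $H_A^{\epsilon}(s)$. Interchanging the order of integration, the $s$-integral becomes $\int_{-\infty}^{\infty}\xd s\,e^{2\im s(t'-t)}=\pi\,\delta(t'-t)$; performing the $t'$-integral against this delta merges the two Gaussian weights into $e^{-(\epsilon^2+\epsilon'^2)t^2}$ and combines the phases into $\exp(2\im t\,(a+a'-q))$. What remains is exactly the representation \eqref{eq:intid} for $H_{A+A'}^{\epsilon''}(q)(\phi)$, which is the claim.

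The computation is routine and presents no real obstacle; the only point calling for mild care is, in the second route, justifying the interchange of integrations and the appearance of the delta function. This is dispatched either by a standard regularization argument or, more simply, by falling back on the first route, which is a finite Gaussian integral and sidesteps the issue entirely.
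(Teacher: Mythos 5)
Your proof is correct and follows essentially the same route as the paper, which simply fixes $\phi$ so that the claim reduces to the standard Gaussian convolution identity in $\R$; your first argument (completing the square) is exactly that reduction carried out explicitly. The alternative route via the representation \eqref{eq:intid} is a valid bonus but not needed.
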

\begin{proof}
Fixing $\phi$ and thus $A(\phi)$ and $A'(\phi)$ this becomes a simple integral identity in $\R$.
\end{proof}

\begin{lem}
    \label{lem:did}
    If $f$ is continuous and increases less than the exponential of a square, then
    \begin{equation}
        H_A[f](\phi)=\lim_{\epsilon\to 0}H_A^\epsilon[f](\phi)=f(A(\phi)) .
        \label{eq:did}
    \end{equation}
\end{lem}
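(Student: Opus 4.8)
The plan is to recognize that, once the value $s\defeq A(\phi)\in\R$ is fixed, the quantity $H_A^\epsilon[f](\phi)=\int_{-\infty}^\infty\xd q\, f(q)\,H_A^\epsilon(q)(\phi)$ is nothing but the Gaussian convolution of $f$ evaluated at $s$, so the entire statement reduces to the standard fact that convolution against a normalized Gaussian approximate identity recovers a continuous function pointwise. Concretely, inserting the definition \eqref{eq:aeobs} gives
\begin{equation}
  H_A^\epsilon[f](\phi)=\frac{1}{\sqrt{\pi}\,\epsilon}\int_{-\infty}^\infty\xd q\, f(q)\,\exp\left(-\frac{(s-q)^2}{\epsilon^2}\right) ,
\end{equation}
and the first equality in \eqref{eq:did} holds by definition of $H_A[f]$, so only the second equality, i.e.\ the limit, requires proof.

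First I would substitute $q=s+\epsilon u$ to bring the integral into the scale-free form $H_A^\epsilon[f](\phi)=\pi^{-1/2}\int_{-\infty}^\infty\xd u\, f(s+\epsilon u)\,e^{-u^2}$. Since $\pi^{-1/2}\int_{-\infty}^\infty e^{-u^2}\,\xd u=1$, the error is $H_A^\epsilon[f](\phi)-f(s)=\pi^{-1/2}\int_{-\infty}^\infty\xd u\,(f(s+\epsilon u)-f(s))\,e^{-u^2}$, and it suffices to show this tends to $0$ as $\epsilon\to 0$. The estimate then splits into a bulk and a tail. On $|u|\le R$ the continuity of $f$ makes $f(s+\epsilon u)\to f(s)$ uniformly as $\epsilon\to 0$, so this contribution is bounded by $\sup_{|u|\le R}|f(s+\epsilon u)-f(s)|$, which vanishes.

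For $|u|>R$ I would invoke the growth hypothesis in the form $|f(q)|\le C\exp(a q^2)$ for some constants $C,a>0$: restricting to $\epsilon\le\epsilon_0$ with $a\epsilon_0^2<1$, one gets $|f(s+\epsilon u)|\,e^{-u^2}\le C\,e^{a(s+\epsilon u)^2}e^{-u^2}\le C'\,e^{-(1-a\epsilon_0^2)u^2}$, an integrable dominating bound uniform in $\epsilon\le\epsilon_0$; together with the trivially small tail of $|f(s)|\,e^{-u^2}$ this makes the tail contribution smaller than any prescribed $\eta>0$ for $R$ large, uniformly in $\epsilon\le\epsilon_0$. Combining the two regimes, given $\eta>0$ I fix $R$ making the tail $<\eta$ for all $\epsilon\le\epsilon_0$, then take $\epsilon$ small enough to make the bulk $<\eta$; hence the limit is $f(s)=f(A(\phi))$, which is \eqref{eq:did}.

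The main obstacle is the tail estimate: one must exploit the precise meaning of ``increases less than the exponential of a square'' to produce a Gaussian-integrable dominating function that is \emph{uniform} over a range of small $\epsilon$. This is exactly what forces the condition $a\epsilon^2<1$ and hence the restriction to sufficiently small $\epsilon$; the central approximate-identity argument is otherwise routine, and neither the integral representation \eqref{eq:intid} nor the convolution identity of Lemma~\ref{lem:addid} is needed here.
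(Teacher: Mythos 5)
Your proof is correct and is exactly the ``straightforward'' Gaussian approximate-identity argument the paper has in mind (its own proof of this lemma consists of the single line ``The proof is straightforward''). One cosmetic point: after expanding $a(s+\epsilon u)^2$ the cross term $2as\epsilon u$ must be absorbed into the quadratic, so the uniform dominating function is $C'e^{-(1-a\epsilon_0^2-\delta)u^2}$ for a small $\delta>0$ rather than $C'e^{-(1-a\epsilon_0^2)u^2}$; this changes nothing in the argument.
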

\begin{proof}
The proof is straightforward.
\end{proof}
Indeed, this identity corresponds exactly to the operator setting, where we have $\Pi_{\xi}[f]=f(\widehat{A_\xi})$ from the functional calculus of the spectral decomposition. Note that if $f$ is not continuous, but only measurable, then the identity \eqref{eq:did} is still valid almost everywhere. This is good enough for our purposes.

\begin{lem}
    Let $\Sigma_1\le\Sigma_2$ be spacelike hypersurfaces, $A,A':K_{[\Sigma_1,\Sigma_2]}\to\R$ linear observables and $\phi,\phi'\in L_{\Sigma_1}$. Then,
    \begin{multline}
        \discard_{\Sigma_2^\lf}\circ
        \cP_{[\Sigma_1,\Sigma_2]}[\exp(\im A)|\exp(\im A')](\Xi_{\phi|\phi'}) \\
        = \discard_{\Sigma_1^\lf}(\Xi_{\phi|\phi'})
        \exp\left(\im(A-A')(\phi^- +\phi'^+)\right) \exp\left(\wud(A,A')-\frac12\wuu(A,A)-\frac12\wdd(A',A')\right) .
    \end{multline}
\end{lem}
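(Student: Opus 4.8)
The plan is to evaluate the left-hand side by gluing the two Schwinger–Keldysh branches along $\Sigma_2$ and reducing everything to a single Gaussian integral over the final hypersurface. In the operator representation $\cP_{[\Sigma_1,\Sigma_2]}[F|G](\sigma_1\tens\sigma_2)=\tr(\sigma_2\widetilde{F}\sigma_1\widetilde{G}^{\dagger})$, composing with $\discard_{\Sigma_2^\lf}$ traces out the final Hilbert space, so with $W=\exp(\im A)$ and $W'=\exp(\im A')$ one gets $\discard_{\Sigma_2^\lf}\circ\cP_{[\Sigma_1,\Sigma_2]}[W|W'](\Xi_{\phi|\phi'})=\langle\widetilde{W'}\ncoh_{\phi'},\widetilde{W}\ncoh_{\phi}\rangle_{\Sigma_2}$. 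Inserting the completeness relation \eqref{eq:cohcompl} at $\Sigma_2$ and using $\iota(\ncoh_\zeta)=\ncoh_\zeta$, this becomes the gluing integral
\[
  \int_{\hat{L}_{\Sigma_2}}\xd\nu(\zeta)\,\exp\left(\tfrac12\{\zeta,\zeta\}\right)\,\rho[W](\ncoh_{\phi}\tens\ncoh_\zeta)\,\overline{\rho[W'](\ncoh_{\phi'}\tens\ncoh_\zeta)} .
\]

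I would then factorize each correlation function with \eqref{eq:ticorfact}. The forward branch contributes the vacuum factor $\rho[W](\vac)=\exp(-\tfrac12\wuu(A,A))$, the norm $\langle\ncoh_\zeta,\ncoh_\phi\rangle$, and $\exp(\im A(\phi^-+\zeta^+))$; the conjugated backward branch contributes $\overline{\rho[W'](\vac)}$, $\overline{\langle\ncoh_\zeta,\ncoh_{\phi'}\rangle}$, and $\overline{\exp(\im A'(\phi'^-+\zeta^+))}$. The one delicate point is the conjugation of the backward branch: because $A'$ is real and complex conjugation exchanges the positive- and negative-energy parts of a real phase-space element, $\overline{A'(\phi'^-+\zeta^+)}=A'(\phi'^++\zeta^-)$, which is exactly what turns $\phi'^-$ into $\phi'^+$ in the final formula; here I also use $\overline{\wuu(A',A')}=\wdd(A',A')$ (anti-time-ordered equals conjugate of time-ordered on the diagonal \cite{OeZa:lcmeasure}). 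Combining the two norms with $\exp(\tfrac12\{\zeta,\zeta\})$ cancels the quadratic $\zeta$-terms, leaving a $\zeta$-independent prefactor $\exp(-\tfrac14\{\phi,\phi\}-\tfrac14\{\phi',\phi'\}+\im A(\phi^-)-\im A'(\phi'^+))$ times the two vacuum factors times $\int\xd\nu(\zeta)\exp(\ell_h(\zeta)+\ell_a(\zeta))$, with holomorphic source $\ell_h(\zeta)=\tfrac12\{\phi,\zeta\}+\im A(\zeta^+)$ and antiholomorphic source $\ell_a(\zeta)=\tfrac12\{\zeta,\phi'\}-\im A'(\zeta^-)$.

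The core is this Gaussian integral. Differentiating the reproducing identity $\int\xd\nu(\zeta)\exp(\tfrac12\{\beta,\zeta\}+\tfrac12\{\zeta,\alpha\})=\exp(\tfrac12\{\beta,\alpha\})$ — the case $\eta=\coh_\alpha,\psi=\coh_\beta$ of \eqref{eq:cohcompl} — shows $\nu$ is centered and normalized with the only nonvanishing second moment being the holomorphic–antiholomorphic one, $\int\xd\nu(\zeta)\{\beta,\zeta\}\{\zeta,\alpha\}=2\{\beta,\alpha\}$. Hence $\int\xd\nu\exp(\ell_h+\ell_a)=\exp(\int\xd\nu\,\ell_h\ell_a)$, and I expand $\int\xd\nu\,\ell_h\ell_a$ bilinearly into four contractions. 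Representing the holomorphic functional $A(\cdot^+)$ and the antiholomorphic functional $A'(\cdot^-)$ through the inner product and applying the second-moment identity gives the symplectic term $\tfrac12\{\phi,\phi'\}$ and the two mixed terms $\im A(\phi'^+)$ and $-\im A'(\phi^-)$, while the observable–observable contraction is $\int\xd\nu(\zeta)A(\zeta^+)A'(\zeta^-)$, which I identify with $\wud(A,A')$ by recognizing it as precisely the gluing integral computing $\discard_{\Sigma_2^\lf}\circ\cP[A|A'](\vac)$ — the definition \eqref{eq:woprop} of the Wightman propagator.

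Finally I would collect terms. The prefactor $\exp(-\tfrac14\{\phi,\phi\}-\tfrac14\{\phi',\phi'\})$ times the symplectic contraction $\exp(\tfrac12\{\phi,\phi'\})$ assembles into $\langle\ncoh_{\phi'},\ncoh_\phi\rangle=\discard_{\Sigma_1^\lf}(\Xi_{\phi|\phi'})$; the boundary terms $\im A(\phi^-)-\im A'(\phi'^+)$ from the factorization together with the mixed contractions $\im A(\phi'^+)-\im A'(\phi^-)$ combine into $\im(A-A')(\phi^-+\phi'^+)$; and the two vacuum factors with the $\wud(A,A')$ contraction give $\exp(\wud(A,A')-\tfrac12\wuu(A,A)-\tfrac12\wdd(A',A'))$, yielding the claim. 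I expect the main obstacle to be keeping the conjugation and energy-splitting bookkeeping consistent, since it is exactly this that correctly distributes the four observable terms between the explicit boundary factor and the cross-contraction; the secondary point requiring care is recognizing $\int\xd\nu\,A(\zeta^+)A'(\zeta^-)=\wud(A,A')$, which is what produces the characteristic Wightman cross term.
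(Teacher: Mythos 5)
Your derivation is correct and, as far as the present paper is concerned, it is more informative than what the paper offers: the paper's ``proof'' of this lemma is a one-line citation to Appendix~B.1 of \cite{OeZa:lcmeasure}, so your self-contained computation fills a gap rather than duplicating one. Your route --- trace out $\cH_{\Sigma_2}$, glue the two Schwinger--Keldysh branches with the completeness relation \eqref{eq:cohcompl}, factorize each branch via \eqref{eq:ticorfact}, and reduce to a single Gaussian integral over $\zeta$ --- is exactly the kind of argument native to this framework, and all the individual steps check out: the cancellation of the $\{\zeta,\zeta\}$ terms, the conjugation rule $\overline{A'(\phi'^-+\zeta^+)}=A'(\phi'^++\zeta^-)$ for real linear observables, the moment identities $\int\xd\nu\,\{\beta,\zeta\}\{\zeta,\alpha\}=2\{\beta,\alpha\}$ extracted from the reproducing identity, the four contractions assembling into $\langle\ncoh_{\phi'},\ncoh_\phi\rangle$, $\im(A-A')(\phi^-+\phi'^+)$ and $\wud(A,A')$, and the identification $\int\xd\nu(\zeta)\,A(\zeta^+)A'(\zeta^-)=\wud(A,A')$ via \eqref{eq:woprop}. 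The one place where you lean on structure you do not spell out is the step ``representing the holomorphic functional $A(\cdot^+)$ and the antiholomorphic functional $A'(\cdot^-)$ through the inner product'': to apply your second-moment identity you need $A(\zeta^+)$ and $A'(\zeta^-)$ to be of the form $\{\cdot,\zeta\}$ and $\{\zeta,\cdot\}$ respectively. This is true, and is precisely what the paper's own Appendix~\ref{sec:pmultiobs} supplies with the elements $\tau_k$ satisfying $A_k(\phi)=2\omega(\tau_k,\phi)=\frac{\im}{2}\{\phi^-,\tau_k\}-\frac{\im}{2}\{\tau_k,\phi^+\}$; citing that representation would make your Wick-contraction step airtight. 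With that reference added, the proof is complete.
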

\begin{proof} This is adapted from Appendix~B.1 of \cite{OeZa:lcmeasure}.
\end{proof}

\begin{lem}
    \label{lem:discardint}
    Let $\Sigma_1\le\Sigma_2$ be spacelike hypersurfaces. Fix $\epsilon>0$, $s\in\R$, $A:K_{[\Sigma_1,\Sigma_2]}\to\R$ linear, $\sigma\in\cT_{\Sigma_1}$. Then:
    \begin{equation}
       \sqrt{2\pi}\epsilon\int_{-\infty}^{\infty}\xd q\;\discard_{\Sigma_2^\lf}\circ\cP_{[\Sigma_1,\Sigma_2]}[H_A^{\epsilon}(q+s)|H_A^{\epsilon}(q)](\sigma)
       =\exp\left(-\frac{s^2}{2\epsilon^2}\right)\discard_{\Sigma_1^\lf}(\sigma) .
    \end{equation}
\end{lem}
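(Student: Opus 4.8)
The plan is to reduce to generalized pure states and then expand the Gaussian observables $H_A^{\epsilon}$ into Weyl observables, so that the preceding lemma can be applied verbatim. First I would observe that both sides of the claimed identity are linear in $\sigma$; hence, by the completeness relation \eqref{eq:cohcompl} applied in two copies (which generates all states from the $\Xi_{\beta|\gamma}$), it suffices to establish the identity for $\sigma=\Xi_{\phi|\phi'}$ with $\phi,\phi'\in L_{\Sigma_1}$. Next I would insert the integral representation \eqref{eq:intid}, writing
\begin{align*}
 H_A^{\epsilon}(q+s) &= \frac{1}{\pi}\int_{-\infty}^{\infty}\xd t\, e^{-\epsilon^2 t^2}\, e^{-2\im t(q+s)}\,\exp(\im\, 2t A), \\
 H_A^{\epsilon}(q) &= \frac{1}{\pi}\int_{-\infty}^{\infty}\xd u\, e^{-\epsilon^2 u^2}\, e^{-2\im u q}\,\exp(\im\, 2u A),
\end{align*}
thereby exhibiting each factor as a superposition of Weyl observables $\exp(\im\,2tA)$.

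Since $\cP[\cdot|\cdot]$ is linear in its first slot and conjugate-linear in its second (recall definition \eqref{eq:probecorpure}), substituting these expansions turns the probe into a double $t,u$-integral of the elementary probes $\cP[\exp(\im\,2tA)|\exp(\im\,2uA)]$, the scalar coefficient of the $u$-factor being conjugated to $e^{+2\im u q}$. Applying the preceding lemma with $A\mapsto 2tA$ and $A'\mapsto 2uA$, and using bilinearity of the propagators together with $\wdd=\overline{\wuu}$ and the reality $\wud(A,A)=\wdu(A,A)=\wh(A,A)$ (the causal part of a Wightman propagator vanishing for coincident observables), the exponent organizes as
\[
 \wud(2tA,2uA)-\tfrac12\wuu(2tA,2tA)-\tfrac12\wdd(2uA,2uA) = -2r(t-u)^2 - 2\im c\,(t-u)(t+u),
\]
where $r\defeq\wh(A,A)$ and $c\defeq\Im(\wuu(A,A))$; the accompanying phase from the lemma is $\exp\!\left(2\im(t-u)A(\phi^- +\phi'^+)\right)$.

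Then I would perform the $q$-integration first. The only $q$-dependence is the phase $e^{-2\im q(t-u)}$, so $\int\xd q$ produces $\pi\,\delta(t-u)$. The decisive point — and precisely the reason the observable propagator drops out, mirroring why the non-selective probe respects causality — is that every surviving exponent term, as well as the $A(\phi^- +\phi'^+)$ phase, is proportional to a positive power of $(t-u)$, and hence collapses to $1$ once $\delta(t-u)$ forces $u=t$. What remains is the single clean Gaussian $\sqrt{2\pi}\,\epsilon\,\pi^{-1}\int\xd t\, e^{-2\epsilon^2 t^2 - 2\im t s}$ multiplying $\discard_{\Sigma_1^\lf}(\Xi_{\phi|\phi'})$. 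Evaluating this Gaussian gives $\epsilon^{-1}\sqrt{\pi/2}\,e^{-s^2/(2\epsilon^2)}$, and the prefactors $\sqrt{2\pi}\,\epsilon\,\pi^{-1}\cdot\epsilon^{-1}\sqrt{\pi/2}$ cancel exactly to $1$, yielding $e^{-s^2/(2\epsilon^2)}\,\discard_{\Sigma_1^\lf}(\Xi_{\phi|\phi'})$ as claimed.

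The main obstacle is bookkeeping rather than conceptual: one must track the conjugation in the backward slot correctly — it flips the sign of the $u$-phase and thereby makes the $q$-integral a genuine delta rather than a Gaussian — and one must verify the reality of $\wud(A,A)$ and the relation $\wdd=\overline{\wuu}$ so that the exponent really reduces to powers of $(t-u)$. The interchange of the $q$-, $t$-, and $u$-integrations should be justified by the rapid Gaussian decay in $t,u$ for each fixed $\epsilon>0$, after which the delta-function manipulation is legitimate.
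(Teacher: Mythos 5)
Your proposal is correct and follows essentially the same route as the paper: reduce to $\Xi_{\phi|\phi'}$, expand both Gaussian observables via the integral identity \eqref{eq:intid} into superpositions of Weyl observables, apply the preceding discard--Weyl lemma, and observe that after the $q$-integration produces $\pi\delta(t-u)$ the entire propagator-dependent exponent (being proportional to powers of $t-u$, using $\wdd=\overline{\wuu}$ and the reality of $\wud(A,A)$) collapses, leaving exactly the Gaussian that yields $e^{-s^2/(2\epsilon^2)}$. The bookkeeping of the conjugation in the backward slot and the final prefactor cancellation are both handled correctly.
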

\begin{proof}
    It is sufficient to show this for $\sigma=\Xi_{\phi|\phi'}$. Thus,
    \begin{align}
        & \sqrt{2\pi}\epsilon\int_{-\infty}^{\infty}\xd q\;\discard_{\Sigma_2^\lf}\circ\cP_{[\Sigma_1,\Sigma_2]}[H_A^{\epsilon}(q+s)|H_A^{\epsilon}(q)](\Xi_{\phi|\phi'}) \label{eq:di1}\\
        & = \sqrt{\frac{2}{\pi^3}}\,\epsilon \int_{-\infty}^{\infty}\xd q \int_{-\infty}^{\infty}\xd t \int_{-\infty}^{\infty}\xd t' e^{-\epsilon^2 (t^2 + t'^2)-2\im q (t-t')-2\im s t} \nonumber \\
        & \qquad \discard_{\Sigma_2^\lf}\circ
        \cP_{[\Sigma_1,\Sigma_2]}[\exp(2\im t A)|\exp(2\im t' A)](\Xi_{\phi|\phi'}) \label{eq:di2} \\
        & = \discard_{\Sigma_1^\lf}(\Xi_{\phi|\phi'})
         \sqrt{\frac{2}{\pi^3}}\,\epsilon
        \int_{-\infty}^{\infty}\xd q \int_{-\infty}^{\infty}\xd t \int_{-\infty}^{\infty}\xd t' e^{-\epsilon^2 (t^2 + t'^2)-2\im q (t-t')-2\im s t} \nonumber \\
        & \qquad \exp\left(2\im (t-t') A(\phi^- + \phi'^+)\right)
        \exp\left(4 t t' \wh(A,A) -2t^2 \wuu(A,A) -2 t'^2 \wdd(A,A)\right) \label{eq:di3} \\
        & = \exp\left(-\frac{s^2}{2\epsilon^2}\right)\discard_{\Sigma_1^\lf}(\Xi_{\phi|\phi'}) .
    \end{align}
    In the first step from \eqref{eq:di1} to \eqref{eq:di2} we have applied the identity \eqref{eq:intid}. Subsequently, we have applied Lemma~\ref{lem:discardint} to arrive at expression \eqref{eq:di3}. For the final step of evaluating the integrals we recall properties of the observable propagators, compare Section~\ref{sec:amocor}. In particular, we have $\wdd(A,A)=\overline{\wuu(A,A)}$ and $\wh(A,A)=\Re(\wuu(A,A))$.
\end{proof}

\begin{proof}[Proof of Proposition~\ref{prop:nsprobeobs}]
    This follows as the special case $s=0$ of Lemma~\ref{lem:discardint}.
\end{proof}

\begin{figure}
  \centering
  \includegraphics[width=0.75\textwidth]{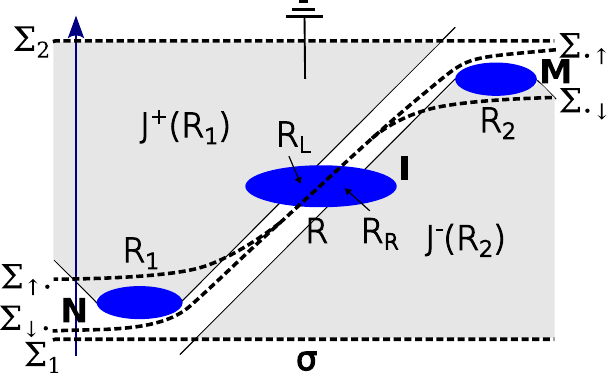}
\caption{Insertion of spacelike hypersurfaces into the setting of Figure~\ref{fig:caustrans} for the proof of causal transparency. Also, the region $R$ is in this manner divided into a left part $R_L$ and a right part $R_R$.}
\label{fig:caustrans2}
\end{figure}

\begin{proof}[Proof of Theorem~\ref{thm:ctst}]
  If either $R$ does not intersect the causal future of $R_1$ or does not intersect the causal past of $R_2$ the claim follows already from relativistic causality since the non-selective probe $\cA^{\epsilon}[\one]$ satisfies the causality axiom by Proposition~\ref{prop:nsprobeobs}. Without loss of generality we therefore suppose that there are spacelike hypersurfaces $\sdd,\sdu,\sud,\suu$ with the following properties, compare Figure~\ref{fig:caustrans2}.
  \begin{align}
    \Sigma_1\le\sdd\le R_1\le \sud\le R_2\le\suu\le\Sigma_2,\\ \Sigma_1\le\sdd\le\sdu\le R_1\le\suu\le\Sigma_2,\\
    \Sigma_1\le\sdd\le R_2\le\sdu\le\suu\le\Sigma_2,\\
    \suu\cap R=\sud\cap R=\sdu\cap R=\sdd\cap R .
  \end{align}
  Also, the hypersurfaces $\sdd,\sdu,\sud,\suu$ coincide in the region $R$ and divide it into a (left) future part $R_L$ and a (right) past part $R_R$.
  By additive decomposability (compare Section~\ref{sec:loccomp}) we split $A$ into linear observables $A_L:K_{R_L}\to \R$ and $A_R:K_{R_R}\to\R$ such that $A=A_L+A_R$ in the sense discussed. Set $\epsilon'\defeq \epsilon/\sqrt2$.
  \begin{align}
    &     \discard_{\Sigma_2^\lf}\circ\left(\cM\comp\cI\comp\cN\right)(\sigma) =\sqrt{2\pi}\epsilon \int_{\R}\xd q\;\discard_{\Sigma_2^\lf}\circ\left(\cM\comp\cP[H_A^\epsilon(q)|H_A^\epsilon(q)]\comp\cN\right)(\sigma) \label{eq:plc1}\\
    & =\sqrt{2\pi}\epsilon \int_{\R^3}\xd q\,\xd s\,\xd t\;\discard_{\Sigma_2^\lf}\circ\left(\cM\comp
    \cP[H_{A_L}^{\epsilon'}(q-s) H_{A_R}^{\epsilon'}(s)|H_{A_L}^{\epsilon'}(q-t) H_{A_R}^{\epsilon'}(t)]\comp\cN\right)(\sigma) \label{eq:plc2}\\
    & =\sqrt{2\pi}\epsilon \int_{\R^3}\xd q\,\xd s\,\xd t\;
    \discard_{\Sigma_2^\lf}\circ
    \cP_{[\suu,\Sigma_2]}[H_{A_L}^{\epsilon'}(q-s) | H_{A_L}^{\epsilon'}(q-t)]
    \circ\cN_{[\sdu,\suu]}\nonumber\\
    &\qquad \circ\cM_{[\sdd,\sdu]}
    \circ\cP_{[\Sigma_1,\sdd]}[H_{A_R}^{\epsilon'}(s) | H_{A_R}^{\epsilon'}(t)](\sigma) \label{eq:plc3}\\
    & =\sqrt{2\pi}\epsilon \int_{\R^2}\xd s\,\xd t\, \exp\left(-\frac{(s-t)^2}{2{\epsilon'}^2}\right)
    \discard_{\suu^\lf}
    \circ\cN_{[\sdu,\suu]}\nonumber\\
    &\qquad \circ\cM_{[\sdd,\sdu]}
    \circ\cP_{[\Sigma_1,\sdd]}[H_{A_R}^{\epsilon'}(s) | H_{A_R}^{\epsilon'}(t)](\sigma) \label{eq:plc4}\\
    & =\sqrt{2\pi}\epsilon \int_{\R^2}\xd s\,\xd t\, \exp\left(-\frac{(s-t)^2}{2{\epsilon'}^2}\right)
    \discard_{\sdu^\lf}\circ\cM_{[\sdd,\sdu]}
    \circ\cP_{[\Sigma_1,\sdd]}[H_{A_R}^{\epsilon'}(s) | H_{A_R}^{\epsilon'}(t)](\sigma) \label{eq:plc5}\\
    & =\sqrt{2\pi}\epsilon \int_{\R^3}\xd q\,\xd s\,\xd t\;
    \discard_{\Sigma_2^\lf}\circ
    \cP_{[\sdu,\Sigma_2]}[H_{A_L}^{\epsilon'}(q-s) | H_{A_L}^{\epsilon'}(q-t)]
    \nonumber\\
    &\qquad \circ\cM_{[\sdd,\sdu]}
    \circ\cP_{[\Sigma_1,\sdd]}[H_{A_R}^{\epsilon'}(s) | H_{A_R}^{\epsilon'}(t)](\sigma) \label{eq:plc6}\\
    & =\sqrt{2\pi}\epsilon \int_{\R^3}\xd q\,\xd s\,\xd t\;
    \discard_{\Sigma_2^\lf}\circ
    \left(\cM
    \comp\cP[H_{A_L}^{\epsilon'}(q-s) H_{A_R}^{\epsilon'}(s) | H_{A_L}^{\epsilon'}(q-t) H_{A_R}^{\epsilon'}(t)]\right)(\sigma) \label{eq:plc7}\\
    & =\sqrt{2\pi}\epsilon \int_{\R}\xd q\;\discard_{\Sigma_2^\lf}\circ\left(\cM\comp\cP[H_A^\epsilon(q)|H_A^\epsilon(q)]\right)(\sigma)
    \label{eq:plc8}\\
    & = \discard_{\Sigma_2^\lf}\circ\left(\cM\comp\cI\right)(\sigma) .
  \end{align}
  In the step from equation \eqref{eq:plc1} to \eqref{eq:plc2} we have used the identity of Lemma~\ref{lem:addid} twice, and correspondingly in the step from equation \eqref{eq:plc7} to equation \eqref{eq:plc8}. In the step from equation \eqref{eq:plc3} to \eqref{eq:plc4} we have used Lemma~\ref{lem:discardint}. We have used it again in the step from equation \eqref{eq:plc5} to \eqref{eq:plc6}. In the step from equation \eqref{eq:plc4} to \eqref{eq:plc5} we have used the causality property \eqref{eq:nrcaus} of the non-selective probe $\cN$.
\end{proof}


\section{Proofs for composite observables}
\label{sec:pmultiobs}

\begin{proof}[Proof of Proposition~\ref{prop:multicor}]
    Suppose that the matrix $\epsilon^2\id +2W$ is not invertible. Let $(\lambda_1,\ldots,\lambda_n)$ be a non-zero vector in its kernel. We then have in particular, that
    \begin{equation}
     \sum_{i=1}^n \epsilon^2 |\lambda_i|^2
     +2\sum_{i,j=1}^n\overline{\lambda_i} R_{i j}\lambda_j
     +2\im\sum_{i,j=1}^n\overline{\lambda_i} I_{i j}\lambda_j
     =0 .
    \end{equation}
    The first term is obviously real, the second term is real since $R$ is positive and the last term is imaginary since $I_{i j}$ is real and $I_{i j}=I_{j i}$. Thus, the sum of the first two terms alone has to vanish. By positivity of $R$ the second term is larger or equal to zero as is the first term. Thus, both have to be zero. This is only possible if $\epsilon=0$ and $R$ is degenerate (i.e.\ not invertible).

    We use the identity \eqref{eq:intid}, the factorization identity \eqref{eq:corfact}, the expression \eqref{eq:vev} of the vacuum correlation function, as well as the notation $\va{t}=(t_1,\ldots,t_n)$, to obtain,
    \begin{align}
      & \rho[H_{A_1}^{\epsilon}(q)\cdots H_{A_n}^{\epsilon}(q)](\ncoh_{\phi}) \nonumber \\
      & = \frac{1}{\pi^n} \int_{\R^n}\xd t_1\cdots\xd t_n\, 
      e^{-\epsilon^2 \va{t}^2}
      \rho\left[\exp\left(\sum_{k=1}^n 2\im t_k (A_k(\cdot)-q_k)\right)\right](\ncoh_{\phi}) \nonumber \\
      & =\frac{1}{\pi^n}\rho(\ncoh_{\phi}) \int_{\R^n}\xd t_1\cdots\xd t_n\,
      \exp\left(2\im\, \va{t}\cdot\left(\va{\phi}-\va{q}\right)\right)
      \exp\left(-\va{t}^\mathrm{T} (\epsilon^2\id+2W)\,\va{t}\right)
      \label{eq:multicor1} \\
      & =\frac{1}{\sqrt{\pi^n \det(\epsilon^2\id + 2W)}}\rho(\ncoh_{\phi})
      \exp\left(- \left(\va{\phi}-\va{q}\right)^\mathrm{T}
       \left(\epsilon^2\id+2W\right)^{-1}
      \left(\va{\phi}-\va{q}\right) \right) . \nonumber \tag*{\qedhere}
    \end{align}
\end{proof}

\begin{proof}[Proof of Proposition~\ref{prop:mprobeweyl}]
    We perform the Gaussian integral using the identity
    \begin{equation}
       \left(\epsilon^2\id+2W\right)^{-1}
       +\left(\epsilon^2\id+2\overline{W}\right)^{-1}
       =2\left(\epsilon^2\id+2W\right)^{-1}
        \left(\epsilon^2\id+2R\right) 
       \left(\epsilon^2\id+2\overline{W}\right)^{-1} .
    \end{equation}
    This is,
    \begin{align*}
    & \cA^\epsilon[w_{\va{s}}](\Xi_{\beta|\gamma})
     =\int_{\R^n}\xd q_1\cdots\xd q_n\, w_{\va{s}}(\va{q}) \cA^\epsilon[\va{q}](\Xi_{\beta|\gamma}) \\
    & =\np(\Xi_{\beta|\gamma})\sqrt{\frac{2}{\pi}}^n
    \frac{\epsilon^n}{|\det(\epsilon^2\id + 2W)|} \int_{\R^n}\xd q_1\cdots\xd q_n\, \exp\left(\im \va{s}\cdot\va{q}\right)\\
    & \quad  \exp\left(- \left(\va{\beta}-\va{q}\right)^\mathrm{T}
       \left(\epsilon^2\id+2W\right)^{-1}
      \left(\va{\beta}-\va{q}\right) 
      -\left(\overline{\va{\gamma}}-\va{q}\right)^\mathrm{T}
       \left(\epsilon^2\id+2\overline{W}\right)^{-1}
      \left(\overline{\va{\gamma}}-\va{q}\right) 
      \right) \\
    & =\np(\Xi_{\beta|\gamma})\sqrt{\frac{2}{\pi}}^n
    \frac{\epsilon^n}{|\det(\epsilon^2\id + 2W)|} \int_{\R^n}\xd q_1\cdots\xd q_n\, \exp\left(\im \va{s}\cdot\va{q}\right)\\
    & \quad  \exp\left(- \va{\beta}^\mathrm{T}
       \left(\epsilon^2\id+2W\right)^{-1}
       \va{\beta} 
      -\overline{\va{\gamma}}^\mathrm{T}
       \left(\epsilon^2\id+2\overline{W}\right)^{-1} \overline{\va{\gamma}}
      \right) \\
    & \quad \exp\left(2\va{\beta}^\mathrm{T}
       \left(\epsilon^2\id+2W\right)^{-1}\va{q} 
      +2\overline{\va{\gamma}}^\mathrm{T}
       \left(\epsilon^2\id+2\overline{W}\right)^{-1}\va{q} 
      \right) \\
    & \quad  \exp\left(- \va{q}^\mathrm{T}
       \left(\left(\epsilon^2\id+2W\right)^{-1}
       +\left(\epsilon^2\id+2\overline{W}\right)^{-1}\right)
      \va{q} 
      \right) \\
    & =\np(\Xi_{\beta|\gamma})\frac{\epsilon^n}{\sqrt{\det(\epsilon^2\id + 2R)}}
      \exp\left(- \va{\beta}^\mathrm{T}
       \left(\epsilon^2\id+2W\right)^{-1}
       \va{\beta} 
      -\overline{\va{\gamma}}^\mathrm{T}
       \left(\epsilon^2\id+2\overline{W}\right)^{-1} \overline{\va{\gamma}}
      \right) \\
    & \quad  \exp\left(\frac12\va{\beta}^\mathrm{T}
       \left(\epsilon^2\id+2W\right)^{-1}
       \left(\epsilon^2\id+2\overline{W}\right)
       \left(\epsilon^2\id+2R\right)^{-1}
       \va{\beta}\right) \\
    & \quad \exp\left(\frac12\overline{\va{\gamma}}^\mathrm{T}
       \left(\epsilon^2\id+2\overline{W}\right)^{-1}
       \left(\epsilon^2\id+2W\right)
       \left(\epsilon^2\id+2R\right)^{-1}
       \overline{\va{\gamma}}\right)
       \exp\left(\va{\beta}^\mathrm{T}
       \left(\epsilon^2\id+2R\right)^{-1}
       \overline{\va{\gamma}}\right) \\
    & \quad \exp\left(\frac{\im}{4}\va{s}^\mathrm{T}
       \left(\epsilon^2\id+2\overline{W}\right)
       \left(\epsilon^2\id+2R\right)^{-1}\va{\beta}\right)
       \exp\left(\frac{\im}{4}\va{s}^\mathrm{T}
       \left(\epsilon^2\id+2W\right)
       \left(\epsilon^2\id+2R\right)^{-1}\overline{\va{\gamma}}\right) \\
    & \quad \exp\left(-\frac18\va{s}^\mathrm{T}
       \left(\epsilon^2\id+2\overline{W}\right)
       \left(\epsilon^2\id+2R\right)^{-1}
       \left(\epsilon^2\id+2W\right)\va{s}\right) \\
    & =\np(\Xi_{\beta|\gamma})
       \frac{\epsilon^n}{\sqrt{\det(\epsilon^2\id + 2R)}}
       \exp\left(-\frac12 \left(\va{\beta}-\overline{\va{\gamma}}\right)^{\mathrm{T}}
       \left(\epsilon^2\id+2R\right)^{-1}
       \left(\va{\beta}-\overline{\va{\gamma}}\right)
       \right)\\
    & \quad \exp\left(\im\, \va{s}\cdot\frac{\va{\beta}
       +\overline{\va{\gamma}}}{4}+\va{s}^\mathrm{T}  I \left(\epsilon^2\id+2R\right)^{-1}\frac{\va{\beta}-\overline{\va{\gamma}}}{2}\right)
       \exp\left(-\frac18 \va{s}^{\mathrm{T}}\left(\epsilon^2\id+2R
       - 4 I\left(\epsilon^2\id+2R\right)^{-1} I\right) \va{s}\right) .
    \end{align*}
\end{proof}

\begin{proof}[Proof of Proposition~\ref{prop:mprobeqdisc}]
  In the present context of evolution in the region $[\Sigma_1,\Sigma_2]$ we can write the amplitude between coherent states as the inner product \eqref{eq:cohip},
  \begin{equation}
    \rho(\coh_{\phi_1}\tens\coh_{\phi_2})
    =\langle \coh_{\phi_2},\coh_{\phi_1}\rangle
    =\exp\left(\frac12\{\phi_1,\phi_2\}\right) .
  \end{equation}
  Here $\{\cdot,\cdot\}$ is the phase space inner product, and we identify the solution space $L_{[\Sigma_1,\Sigma_2]}$ as usual with the phase space $L$. What is more, to each linear observable $A_k$ there exists a unique corresponding phase space element $\tau_k\in L$ such that for any complexified solution $\phi\in L_{[\Sigma_1,\Sigma_2]}^\C$ we have the identity \cite{CoOe:locgenvac,OeZa:lcmeasure}
  \begin{equation}
  A_k(\phi)= 2\omega(\tau_k,\phi) .
  \end{equation}
  We recall that we may rewrite the symplectic structure in terms of the phase space inner product,
  \begin{equation}
  2\omega(\tau_k,\phi)=\frac{\im}{2}\{\phi^-,\tau_k\}-\frac{\im}{2}\{\tau_k,\phi^+\} ,
  \end{equation}
  and have the propagator identity
  \begin{equation}
  \wud(A_k,A_l)=\frac12\{\tau_k,\tau_l\} .
  \end{equation}
  We set $K_{i j}\defeq \wud(A_i,A_j)$ and $K=R+\im J$.
  It is now convenient to start the derivation with the use of expression \eqref{eq:multicor1}.
  \begin{align*}
  & \discard\circ\cA^\epsilon[\va{q}](\Xi_{\beta|\gamma})
   =(\sqrt{2\pi}\epsilon)^n\frac{1}{\pi^{2n}} \int_{\hat{L}}\xd\nu(\phi)\,
   \exp\left(\frac12\{\beta,\phi\}+\frac12\{\phi,\gamma\}\right) \\
  & \quad \int_{\R^n}\xd t_1\cdots\xd t_n\,
      \exp\left(2\im\, \va{t}\cdot\left(\frac{\im}{2}\{\beta^-,\va{\tau}\}-\frac{\im}{2}\{\va{\tau},\phi^+\}-\va{q}\right)\right)
      \exp\left(-\va{t}^\mathrm{T} (\epsilon^2\id+2W)\,\va{t}\right) \\
  & \quad \int_{\R^n}\xd t'_1\cdots\xd t'_n\,
      \exp\left(-2\im\, \va{t}'\cdot\left(-\frac{\im}{2}\{\va{\tau},\gamma^+\}+\frac{\im}{2}\{\phi^-,\va{\tau}\}-\va{q}\right)\right)
      \exp\left(-\va{t}'{}^\mathrm{T} (\epsilon^2\id+2\overline{W})\,\va{t}'\right) \\
  & =(\sqrt{2\pi}\epsilon)^n\frac{1}{\pi^{2n}}
   \int_{\R^{2n}}\xd t_1\cdots\xd t_n\, \xd t'_1\cdots\xd t'_n\,
   \exp\left(\frac12\{\beta+2\va{t}\cdot\va{\tau},\gamma+2\va{t}'\cdot\va{\tau}\}\right) \\
  & \quad \exp\left(-\frac12\{\beta,2\va{t}\cdot\va{\tau}\}-2\im \va{t}\cdot\va{q}\right)\exp\left(-\va{t}^\mathrm{T} (\epsilon^2\id+2W)\,\va{t}\right) \\
  & \quad\exp\left(-\frac12\{2 \va{t}'\cdot\va{\tau},\gamma\}+2\im \va{t}'\cdot\va{q}\right) \exp\left(-\va{t}'{}^\mathrm{T} (\epsilon^2\id+2\overline{W})\,\va{t}'\right) \\
  & =\discard(\Xi_{\beta|\gamma})(\sqrt{2\pi}\epsilon)^n\frac{1}{\pi^{2n}}
   \int_{\R^{2n}}\xd t_1\cdots\xd t_n\, \xd t'_1\cdots\xd t'_n\,
   \exp\left(\frac12\{2\va{t}\cdot\va{\tau},2\va{t}'\cdot\va{\tau}\}\right) \\
  & \quad \exp\left(-\frac12\{\beta,2(\va{t}-\va{t}')\cdot\va{\tau}\}-2\im \va{t}\cdot\va{q}\right)\exp\left(-\va{t}^\mathrm{T} (\epsilon^2\id+2W)\,\va{t}\right) \\
  & \quad\exp\left(\frac12\{2 (\va{t}-\va{t}')\cdot\va{\tau},\gamma\}+2\im \va{t}'\cdot\va{q}\right) \exp\left(-\va{t}'{}^\mathrm{T} (\epsilon^2\id+2\overline{W})\,\va{t}'\right) \\
  & =\discard(\Xi_{\beta|\gamma})(\sqrt{2\pi}\epsilon)^n\frac{1}{\pi^{2n}}
   \int_{\R^{2n}}\xd t_1\cdots\xd t_n\, \xd t'_1\cdots\xd t'_n\,
   \exp\left(-\,\va{t}^{\mathrm{T}} (-2K) \va{t}'-\,\va{t}'{}^{\mathrm{T}} (-2\overline{K}) \va{t}\right) \\
  & \quad \exp\left(2\im (\va{t}-\va{t}')\cdot(\va{\beta}^- + \va{\gamma}^+-\va{q})\right)
  \exp\left(-\va{t}^\mathrm{T} (\epsilon^2\id+2W)\,\va{t}\right)
  \exp\left(-\va{t}'{}^\mathrm{T} (\epsilon^2\id+2\overline{W})\,\va{t}'\right) .
  \end{align*}
  To simplify this we perform a change of variables as $\va{s}=\va{t}+\va{t}'$ and $\va{d}=\va{t}-\va{t'}$. We also recall $W_{\mathrm{ret}}=I+J$ and $W_{\mathrm{adv}}=I-J$. This yields,
  \begin{align*}
  & \discard\circ\cA^\epsilon[\va{q}](\Xi_{\beta|\gamma})
  =\discard(\Xi_{\beta|\gamma})
   \left(\sqrt{\frac{\pi}{2}}\epsilon\right)^n\frac{1}{\pi^{2n}}
   \int_{\R^{2n}}\xd s_1\cdots\xd s_n\, \xd d_1\cdots\xd d_n \\
  & \quad \exp\left(2\im\, \va{d}\cdot(\va{\beta}^- + \va{\gamma}^+-\va{q})\right)
  \exp\left(-\va{s}^\mathrm{T} \frac{\epsilon^2}{2}\id \va{s}
  -\va{d}^{\mathrm{T}} \left(\frac{\epsilon^2}{2}\id+2R\right)\,\va{d}
  -\va{s}^{\mathrm{T}} \im W_{\mathrm{ret}}\va{d}-\va{d}^{\mathrm{T}} \im W_{\mathrm{adv}}\va{s}
  \right) \\
  & =\discard(\Xi_{\beta|\gamma})\frac{1}{\pi^{n}}
  \int_{\R^n}\xd d_1\cdots\xd d_n
  \exp\left(2\im\, \va{d}\cdot(\va{\beta}^- + \va{\gamma}^+-\va{q})\right) \\
  & \quad 
  \exp\left(
  -\va{d}^{\mathrm{T}} \left(\frac{\epsilon^2}{2}\id+2R+\frac{2}{\epsilon^2}W_{\mathrm{adv}}W_{\mathrm{ret}}\right)\,\va{d}
  \right) \\
  & =\discard(\Xi_{\beta|\gamma})\frac{1}{\sqrt{\pi^{n}\det(X_\epsilon)}}
  \exp\left(-(\va{\beta}^- + \va{\gamma}^+-\va{q})^{\mathrm{T}}X_{\epsilon}^{-1}(\va{\beta}^- + \va{\gamma}^+-\va{q})\right) . \tag*{\qedhere}
  \end{align*}
\end{proof}

\newcommand{\eprint}[1]{\href{https://arxiv.org/abs/#1}{#1}}
\bibliographystyle{stdnodoi} 
\bibliography{stdrefsb}

\begin{thebibliography}{10}
\providecommand{\url}[1]{\texttt{#1}}
\providecommand{\urlprefix}{URL }
\providecommand{\selectlanguage}[1]{\relax}
\providecommand{\eprint}[2][]{\url{#2}}

\bibitem{ArHa:collisionlocalobs}
H.~Araki, R.~Haag, \textit{Collision cross sections in terms of local observables}, Comm. Math. Phys. \textbf{4} (1967) 77--91.

\bibitem{Oe:spectral}
R.~Oeckl, \textit{Spectral decomposition of field operators and causal measurement in quantum field theory}, J. Math. Phys. \textbf{66} (2025) 042302, \eprint{2409.08748}.

\bibitem{BaLaPr:marcocontobsqm}
A.~Barchielli, L.~Lanz, G.~M. Prosperi, \textit{A model for the macroscopic description and continual observations in quantum mechanics}, Nuovo Cimento \textbf{72} (1982) 79--121.

\bibitem{GhRiWe:micromacro}
G.~C. Ghirardi, A.~Rimini, T.~Weber, \textit{Unified dynamics for microscopic and macroscopic systems}, Phys. Rev. \textbf{D 34} (1986) 470--491.

\bibitem{BaLaPr:opvalstochastic}
A.~Barchielli, L.~Lanz, G.~M. Prosperi, \textit{Statistics of continuous trajectories in quantum mechanics: Operation-valued stochastic processes}, Found. Phys. \textbf{13} (1983) 779--812.

\bibitem{Sor:impossible}
R.~Sorkin, \textit{Impossible Measurements on Quantum Fields}, Directions in General Relativity, (eds. B.~L. Hu, T.~A. Jacobson), Cambridge University Press, Cambridge, 1993, pp. 293--305, \eprint{gr-qc/9302018}.

\bibitem{BoJuKe:impossible}
L.~Borsten, I.~Jubb, G.~Kells, \textit{Impossible measurements revisited}, Phys. Rev. \textbf{D 104} (2021) 025012, \eprint{1912.06141}.

\bibitem{Jub:causalupdates}
I.~Jubb, \textit{Causal state updates in real scalar quantum field theory}, Phys. Rev. \textbf{D 105} (2022) 025003, \eprint{2106.09027}.

\bibitem{Oe:dmf}
R.~Oeckl, \textit{A Positive Formalism for Quantum Theory in the General Boundary Formulation}, Found. Phys. \textbf{43} (2013) 1206--1232, \eprint{1212.5571}.

\bibitem{Oe:posfound}
R.~Oeckl, \textit{A local and operational framework for the foundations of physics}, Adv. Theor. Math. Phys. \textbf{23} (2019) 437--592, \eprint{1610.09052v3}.

\bibitem{Kel:diagramne}
L.~V. Keldysh, \textit{Diagram technique for nonequilibrium processes}, Zh. Eksp. Teor. Fiz. \textbf{47} (1964) 1515--1527.

\bibitem{OeZa:lcmeasure}
R.~Oeckl, A.~Zampeli, \textit{Towards local and compositional measurements in quantum field theory}, \eprint{2505.10968}.

\bibitem{vne:mathgrundquant}
J.~von Neumann, \textit{Mathematische Grundlagen der Quantenmechanik}, Springer, Berlin, 1932.

\bibitem{HeKr:opmeasureii}
K.-E. Hellwig, K.~Kraus, \textit{Operations and measurements. {II}}, Commun. Math. Phys. \textbf{16} (1970) 142--147.

\bibitem{FeVe:qftlocalmeasure}
C.~J. Fewster, R.~Verch, \textit{Quantum Fields and Local Measurements}, Commun. Math. Phys. \textbf{378} (2020) 851--889, \eprint{1810.06512}.

\bibitem{BiDa:qftcurved}
N.~D. Birrell, P.~C.~W. Davies, \textit{Quantum Fields in Curved Space}, Cambridge University Press, Cambridge, 1982.

\bibitem{PGGaMM:detectormeasurementqft}
J.~Polo-Gómez, L.~J. Garay, E.~Martín-Martínez, \textit{A detector-based measurement theory for quantum field theory}, Phys. Rev. \textbf{D 105} (2022) 065003, \eprint{2108.02793}.

\bibitem{PaFr:eliminatingimpossible}
M.~Papageorgiou, D.~Fraser, \textit{Eliminating the "{}impossible": Recent progress on local measurement theory for quantum field theory}, Found. Phys. \textbf{54} (2024) 26, \eprint{2307.08524}.

\bibitem{AlJu:measurecausal}
E.~Albertini, I.~Jubb, \textit{Are Ideal Measurements of Real Scalar Fields Causal?}, \eprint{2306.12980}.

\bibitem{MaNa:qftfv}
J.~Mandrysch, M.~Navascués, \textit{Quantum field measurements in the Fewster-Verch framework}, Lett. Math. Phys. \textbf{115} (2025) 115, \eprint{2411.13605}.

\bibitem{Oe:holomorphic}
R.~Oeckl, \textit{Holomorphic Quantization of Linear Field Theory in the General Boundary Formulation}, SIGMA \textbf{8} (2012) 050, \eprint{1009.5615}.

\bibitem{Oe:feynobs}
R.~Oeckl, \textit{Schr\"odinger-Feynman quantization and composition of observables in general boundary quantum field theory}, Adv. Theor. Math. Phys. \textbf{19} (2015) 451--506, \eprint{1201.1877}.

\bibitem{Oe:GBQFT}
R.~Oeckl, \textit{General boundary quantum field theory: Foundations and probability interpretation}, Adv. Theor. Math. Phys. \textbf{12} (2008) 319--352, \eprint{hep-th/0509122}.

\bibitem{Jac:schroedinger}
R.~Jackiw, \textit{Analysis of infinite-dimensional manifolds---Schr\"odinger representation for quantized fields}, Field theory and particle physics (Campos do Jord\~ao, 1989), World Scientific, River Edge, 1990, pp. 78--143.

\bibitem{Hat:qft}
B.~Hatfield, \textit{Quantum Field Theory of Point Particles and Strings}, Addison-Wesley, Redwood City, 1992.

\bibitem{CoOe:locgenvac}
D.~Colosi, R.~Oeckl, \textit{Locality and General Vacua in Quantum Field Theory}, SIGMA \textbf{17} (2021) 073, \eprint{2009.12342}.

\bibitem{CoOe:vaclag}
D.~Colosi, R.~Oeckl, \textit{The Vacuum as a Lagrangian subspace}, Phys. Rev. \textbf{D 100} (2019) 045018, \eprint{1903.08250}.

\bibitem{Sor:qmeasure}
R.~D. Sorkin, \textit{Quantum mechanics as quantum measure theory}, Mod. Phys. Lett. A \textbf{09} (1994) 3119--3127, \eprint{gr-qc/9401003}.

\bibitem{Hol:diviquantumprob}
A.~S. Kholevo, \textit{Infinitely Divisible Measurements in Quantum Probability Theory}, Theory Probab. Appl. \textbf{31} (1987) 493--497.

\bibitem{HaKa:aqft}
R.~Haag, D.~Kastler, \textit{An Algebraic Approach to Quantum Field Theory}, J. Math. Phys. \textbf{5} (1964) 848--861.

\bibitem{Haa:lqp}
R.~Haag, \textit{Local Quantum Physics}, Springer, Berlin, 1992.

\end{thebibliography}
\end{document}